\numberwithin{equation}{section}
\def\red{\textcolor{red}}
\def\red{}
\DeclareMathOperator{\sgn}{sgn}
\DeclareMathOperator{\Res}{Res}
\newtheorem{theorem}{Theorem} 
\newtheorem*{prop*}{Theorem} 
\newtheorem{theo}{Theorem}
\newtheorem{coro}[theorem]{Corollary} 
\newtheorem{lemma}[theorem]{Lemma} 
\newtheorem{rmk}[theorem]{Remark}
 \numberwithin{theorem}{section}
\newcommand{\beq}{\begin{equation}}
\newcommand{\eeq}{\end{equation}}
\newcommand{\matht}[1]{\texorpdfstring{#1}{}}
\newcommand{\matC}{{\mathscr C}}
\newcommand{\matG}{{\mathscr G}}
\newcommand{\matS}{{\mathscr S}}
\newcommand{\e}{\varepsilon}
\newcommand{\dam}{\beta}
\newcommand{\x}{\xi}
\def\tilde#1{\widetilde{#1}}
\def\ins#1#2#3{\vbox to0pt{\kern-#2 \hbox{\kern#1 #3}\vss}\nointerlineskip}
\def\sp{g}
\def\Sp{G}
\begin{document}
 
%%%%%%%%%%%%%%%%%%%%%%%%%%%%%%%%%%%%%%%%%%%%%%%%%
\title{\bf Approach to equilibrium for a particle interacting with
  a harmonic thermal bath}
%%%%%%%%%%%%%%%%%%%%%%%%%%%%%%%%%%%%%%%%%%%%%%%%%

\author{{\bf Federico Bonetto$^{1,*}$, Alberto Maiocchi$^2$}\\
	\small $^1$ School of Mathematics, Georgia Institute of Technology,Atlanta, GA 30332, USA\\
	\small $^2$ Dipartimento di Matematica e Applicazioni, Universit\`a di Milano -- Bicocca, Milano, Italy\\
	\small $^*$ corresponding author: \tt fb49@gatech.edu
}
\date{} 
 
\maketitle 
 
 %%%%%%%%%%%%%%%%%%%%%%%%%%%%%%%%%%%%%%%%%%%%%%%%%
\begin{abstract} 
  We study the long time evolution of the position-position correlation function $C_{\alpha,N}(s,t)$ for a harmonic
oscillator (the {\it probe}) interacting via a coupling $\alpha$ with a large chain of $N$ coupled oscillators (the
{\it heat bath}). At $t=0$ the probe and the bath are in equilibrium at temperature $T_P$ and $T_B$, respectively. We
show that for times $t$ and $s$ of the order of $N$, $C_{\alpha,N}(s,t)$ is very well approximated by its limit
$C_{\alpha}(s,t)$ as $N\to\infty$. We find that, if the frequency $\Omega$ of the probe is in the spectrum of the
bath, the system appears to thermalize, at least at higher order in $\alpha$. This means that, at order 0 in $\alpha$,
$C_\alpha(s,t)$ equals the correlation of a probe in contact with an ideal stochastic {\it thermostat}, that is forced
by a white noise and subject to dissipation. In particular we find that $\lim_{t\to\infty}
C_\alpha(t,t)=T_B/\Omega^2$ while that $\lim_{\tau\to\infty} C_\alpha(\tau,\tau+t)$ exists and decays exponentially in
$t$. Notwithstanding this, at higher order in $\alpha$, $C_{\alpha}(s,t)$ contains terms that oscillate or vanish as a
power law in $|t-s|$. That is, even when the bath is very large, it cannot be thought of as a stochastic thermostat. 
When the frequency of the bath is far from the spectrum of the bath, no thermalization is observed.
\end{abstract} 
%%%%%%%%%%%%%%%%%%%%%%%%%%%%%%%%%%%%%%%%%%%%%%%%%
  
 \tableofcontents
 
 %%%%%%%%%%%%%%%%%%%%%%%%%%%%%%%%%%%%%%%%%%%%%%%%%
%%%%%%%%%%%%%%%%%%%%%%%%%%%%%%%%%%%%%%%%%%%%%%%%%
\section{Introduction}
\label{sec:intro}
%%%%%%%%%%%%%%%%%%%%%%%%%%%%%%%%%%%%%%%%%%%%%%%%%
%%%%%%%%%%%%%%%%%%%%%%%%%%%%%%%%%%%%%%%%%%%%%%%%%

Several authors studied the approach to equilibrium or the non equilibrium steady state of systems in contact with one
or more thermostats, see for example \cite{RLL67} and \cite{BLV}. The thermostats are normally modeled as the
idealization of the interaction with a large, potentially infinite, heat reservoir. Instead of the large number of
degrees of freedom needed to describe the reservoir, the thermostat can be modeled via an effective interaction with a
low dimensional stochastic (or sometimes deterministic, see e.g. \cite{EH85,CELS,BDLR}) process evolving independently.
These idealizations have proved very useful in studying properties of non--equilibrium statistical mechanics.

In more recent times, some authors have tried to derive these idealized low dimensional thermostats from the evolution
of large, possibly infinite, heat reservoirs fully coupled with the system of interest. Examples of this kind can be
found in \cite{EPR-B99} for an out of equilibrium anharmonic chain or in \cite{BLTV} for the approach to equilibrium of
a simple kinetic model. Contrary to \cite{EPR-B99}, in \cite{BLTV} the reservoir is represented as a large but finite
gas initially found in canonical equilibrium at temperature $T$. This is compared with a so called Maxwellian
thermostat, i.e. an idealized infinite gas reservoir, see \cite{BLV}. Since equilibration times are normally much longer
than the natural timescale of the microscopic dynamics, it is important to carefully control the difference between the
finite reservoir and the idealized infinite one for long times or, ideally, uniformly in time.

Another simple example of heat bath one can think of is formed by a large array of coupled harmonic oscillators \red{of
equal mass $m$}, like for example a chain or a higher dimensional finite lattice. One of the oscillators in such an
array \red{is characterized by a different mass $M>m$ and} can be considered as the system while all the others act as a
thermal bath. \red{This system is sometime referred to as Rubin model \cite{Rubin1,Rubin2}. The long time behavior of
the  correlation functions in the situation where both the system and the thermal bath are initially in equilibrium
(either at equal or different temperatures) was extensively studied analytically in \cite{Rubin1,Rubin2,Caldeira} and
from a more numerical point of view in \cite{Das,Dhar,Onofrio}. Further analysis can be found in \cite{TK62,TH62} and in
particular in \cite{FKM65} where the main interest is whether, or in which condition, the effective dynamics of the
oscillator of mass $M$,} once the size of the heat reservoir is sent to infinity, can be described by a stochastic
differential equation. Indeed, one expects that when the reservoir is large enough it can be effectively modeled by
the interaction with a white noise together with a dissipation term to prevent the system from overheating. From a
physical point of view this means that one can neglect the effect of the interaction of the small system on the large
reservoir that thus evolves autonomously. In such a situation the system sees the reservoir as a stochastic force while
the counter action on the reservoir appears as a dissipative term.

A somehow related point of view is discussed in \cite{EPR-B99} where the heat reservoirs are modeled as two infinite
scalar fields interacting with the first and last oscillator in a chain through a dipole style linear term. Again
the fields are assumed to be initially in canonical equilibrium at different temperature. By formally integrating the
equation of motion of the fields the authors obtain a set of stochastic differential equations that describes a colored
noise thermostat.

\red{We consider here what is possibly the simplest model for an out of equilibrium system formed by a large but finite
{\it thermal bath} and an external {\it probe}, initially separately in equilibrium at different temperatures. The
thermal bath is modeled as a chain of $N$ equal particles of mass $m$ linked to their nearest neighbor by springs of
strength $\sp$ and pinned to their equilibrium position by springs of strength $\sp'$, \cite{Rubin1}. }The probe is
modeled as a particle of mass $M$ pinned to its equilibrium position by a spring of strength $\Sp$. The bath is
initially in equilibrium at a temperature $T_B$ while the probe is in equilibrium at temperature $T_P$. At time $t=0$
the probe is put in contact with the bath by connecting it via a spring of strength $\alpha$ to one of the oscillators
of the bath \red{so that, for $t>0$, the states, and thus the temperatures, of the bath and the probe will change. See 
Subsection \ref{subsec:model} for a precise definition of our model.}

We are interested in the long time evolution of the probe in the limit when $N$ grows to infinity with $\alpha$
independent from $N$. \red{Moreover, we want to obtain quantitative estimates of the correction to the limiting behavior
when $N$ is large but finite}. Two regimes naturally appears. In the {\it non-resonant} regime, when the natural
frequency $\Omega$ of the probe is found outside the frequency spectrum $[\mu_-,\mu_+]$ of the normal modes of the bath,
the interaction is ineffective and the evolution of the probe is a small modification of its {\it unperturbed} (that is
$\alpha=0$) evolution. In particular, no thermalization takes place. On the other hand, in the {\it resonant} regime,
when $\Omega$ is found inside $[\mu_-,\mu_+]$ and $N$ is very large, it is natural to assume that the probe will {\it
thermalize}, that is it will equilibrate at temperature $T_B$ while the state of the bath will barely change. In both
cases, on a more detailed level, one expects that, still for large $N$, the bath can be seen as an external noise acting
on the probe so that the effective evolution of the probe can be described as a Markov process in which the bath has
been replaced by an effective low dimensional stationary stochastic process, that is a {\it thermostat}.

We study these questions by looking at the two times position-position correlation functions $C_{\alpha,N}(s,t)$ for the
probe. \red{As in \cite{Rubin1,Rubin2} or more recently \cite{Plyu}, the main tool to study the asymptotic behavior of $C_{\alpha,N}(s,t)$ 
is its
Laplace transform $\widetilde C_{\alpha,N}(\lambda,\lambda')$.} In Section \ref{sec:solution}, we compute the Laplace
transform of the solution of the equation of motion for the probe. We use this solution in Sections \ref{sec:decay1} and
\ref{sec:decay2} to obtain an explicit expressions for $\widetilde C_{\alpha,N}(\lambda,\lambda')$ and  then take the
limit as $N\to\infty$ of these expressions obtaining the Laplace transform $\widetilde C_{\alpha}(\lambda,\lambda')$ of
the effective correlation functions for the infinite system. Since $C_{\alpha,N}(s,t)$ is a quasi-periodic function, in
general one cannot expect it to be close to its limit uniformly in time. The inverse of the spacing of the frequencies
of the normal modes of the bath provides a natural time scale for a comparison between the finite and infinite system.
We can thus show that, for times short when compared with $N/(\mu_+-\mu_-)$, the inverse Laplace transform
$C_{\alpha}(s,t)$ of $\widetilde C_{\alpha}(\lambda,\lambda')$ approximates extremely well $C_{\alpha,N}(s,t)$ with
correction exponentially small in $N$, see Section \ref{sec:f} and Subsections \ref{subsec:bound} and \ref{subsec:C22}.
Finally, Sections in \ref{sec:decay1} and \ref{sec:decay2}, we use the explicit expression for $\widetilde
C_{\alpha}(\lambda,\lambda')$ to obtain detailed information on the  asymptotic behavior of $C_{\alpha}(s,t)$. \red{This
behavior clearly depends on the value of $\alpha$. For $\alpha$ small it can be written as a main part, largely
independent of $\alpha$, plus correction of higher order that we compute with some detail, see in particular Appendix
\ref{app:stime}.} From the analysis of $C_{\alpha,N}(s,t)$ it is then easy to obtain analogous information for the
momentum-momentum correlation function $D_{\alpha,N}(s,t)$.

About the two expectations discussed above we find that the first one is met. This means that, in the resonant case, the
probe appears to thermalize to the temperature of the bath in the sense that the average kinetic energy
$D_{\alpha}(t,t)$ and average internal energy $D_{\alpha}(t,t)+\Omega^2 C_{\alpha}(t,t)$ of the probe converge
exponentially fast in $t$ to values close (but not equal) to those predicted by an equilibrium state at temperature
$T_B$. On the other hand, a Markovian evolution toward a steady state would imply that $\lim_{\tau\to\infty}
C_\alpha(\tau,\tau+t)$ exists and decays exponentially in $t$.  Although this is true if one only looks at the term of
order zero in $\alpha$, strictly speaking neither of these implications is true since $C_\alpha(\tau,\tau+t)$ contains
terms oscillating in $\tau$ and terms decaying as a power law in $\tau$ and $\tau+t$. \red{Our results are thus
consistent with the numerical results in \cite{Onofrio}. The implications for possible numerical simulations of the
detailed behavior in $N$ $\alpha$ and $t$ of the correlation functions is briefly discussed in Section
\ref{sec:discussion}.}

\red{To better compare with previous works, see \cite{Mori65, Weiss}, we show that the evolution of the position of the
probe can be described using a generalized Langevin equation characterized by a delayed ``dissipation like" term and a
random forcing. The delayed term and the forcing satisfy a relation analogous to the classical fluctuation
dissipation theorem.} We get some further insight by studying the stochastic system obtained by replacing the delayed 
dissipation with a constant friction, that is by neglecting the backward interaction of the probe on
the bath. This analysis shows that the presence of terms decaying as a power law are due to the finite frequency 
spectrum of the bath 
%One expects that by letting $\mu_-\to 0$ and $\mu_+\to\infty$ one should recover the heuristically expected
%Ornstein-Uhlenbeck style process. In our model, it is very hard to take such a limit in a physically meaningful way, se
%the discussion  in Section \ref{sec:discussion}. 
while the presence of oscillatory corrections to the asymptotic behavior of
$C_\alpha(s,t)$ is due to the backward interaction of the probe on the bath. Notwithstanding the fact that these
corrections are of higher order in $\alpha$, they persist even when $N\to \infty$.

The rest of the paper is organized as follows. In Section \ref{sec:model} we introduce the exact model we will study and
our main results while in Section \ref{sec:discussion} we outline some of the possible extensions and open problems of
our work. Sections \ref{sec:solution} to \ref{sec:decay2} contain the proofs of our results while Appendix
\ref{app:lemmas} contains the statement and proof of several Lemmas useful throughout the paper. Finally Appendix
\ref{app:stime} contains the technical details for the improved estimate of the correction terms to $C_{\alpha}(s,t)$.
 
%%%%%%%%%%%%%%%%%%%%%%%%%%%%%%%%%%%%%%%%%%%%%%%%%
%%%%%%%%%%%%%%%%%%%%%%%%%%%%%%%%%%%%%%%%%%%%%%%%%
\section{Setting and main results} 
\label{sec:model}
%%%%%%%%%%%%%%%%%%%%%%%%%%%%%%%%%%%%%%%%%%%%%%%%%
%%%%%%%%%%%%%%%%%%%%%%%%%%%%%%%%%%%%%%%%%%%%%%%%%

In this section we first introduce the model we will study in the rest of the paper. We then discuss our main results
and finally we compare them with the analogous results for a system where the large thermal bath is replaced by a
suitable stochastic thermostat.

\subsection{The model}\label{subsec:model}

We consider a system of $2N+1$ linear oscillators, where all but one oscillator (the {\it bath}) act as a heat bath on
the remaining one (the {\it probe}). We model the bath as a chain of identical particles with nearest neighbor
interaction and on--site pinning potential, with one of the particles is linked with the probe. The Hamiltonian of
the system is thus
\begin{equation}\label{eq:Ham0}
\begin{aligned}
H(\hat q,\hat p,\bar Q,\bar P):=& H_B(\hat q,\hat p)+H_P(\bar Q,\bar P)+\alpha H_I(\hat q,\hat p,\bar Q,\bar P):=\\
&\sum_{l=-N+1}^{N} \left(\frac{\hat
  p_l^2}{2m}+\frac{\sp}2(\hat q_{l+1}-\hat
q_l)^2+\sp'\frac{\hat q_l^2}2\right) + \frac{\bar
  P^2}{2M}+\Sp\frac{\bar Q^2}2
+ \frac{\alpha'}2(\bar Q-\hat q_0)^2 ,
\end{aligned}
\end{equation}
where the Hamiltonian $H_B$ and the canonically conjugated variables $\bar Q$, $\bar P$ pertain to the probe, while
$H_P$ and the canonically conjugated variables $\{\hat q_l\}_{l=-N+1,\ldots,N}$, $\{\hat p_l\}_{l=-N+1,\ldots,N}$ are
the Hamiltonian and the coordinates of the $2N$ particles of the bath, with periodic boundary condition, that is $\hat
q_N=\hat q_{-N}$. Finally $H_I$ describes the interactions between bath and probe. The elastic constant of the probe is
denoted by $\Sp$, while the coupling between the probe and the bath is tuned by the parameter $\alpha'$. We think of the
connection between the probe and the particle in the chain as a spring so that it is natural to require $\alpha'>0$. We
note though that most of our results remain true for $\alpha'<0$ as far as the origin remain a stable fixed point for
\eqref{eq:Ham0}. 
%Moreover, notwithstanding our results and methods are not based on a perturbative scheme on 
%$\alpha'$, we will be mainly interested in the case when $\alpha'$ is small.

As a first step, we pass to the normal modes of oscillation for the bath, that is, we define the canonically conjugated
variables $\{q_j\}_{j=-N+1,\ldots,N}$, $\{p_j\}_{j=-N+1,\ldots,N}$ through $q_j=\sqrt m\sum_lO_{jl}\hat q_l$ and
$p_j=\left(1/\sqrt m\right)\sum_lO_{jl}\hat p_l$, where the orthogonal matrix $O$ is defined by %
\begin{equation}\label{eq:normal}
O_{lj}= \left\{\begin{array}{cc}
\frac{\eta_l}{\sqrt N}\cos \left(\frac{jl\pi}{N}\right)& l=0,\ldots,N\\
\frac{1}{\sqrt N}\sin \left(\frac{jl\pi}{N}\right)& l=-N+1,\ldots,-1
\end{array}
\right.\ , \quad \mbox{with }\eta_l=\left\{\begin{array}{cc}\frac1{\sqrt2}&
l=0,N\\
1& \mbox{elsewhere}\end{array}\right.\ .
\end{equation}
By inversion, we get
\[
\hat q_0=\frac1{\sqrt{Nm}} \sum_{j=0}^N\eta_j q_j,
\]
so that the dynamics of the odd normal modes of the baths, that is the normal modes indexed by $j$ ranging from $-N+1$
to $-1$ in \eqref{eq:normal}, is decoupled from the rest of the system. Thus, from now on, we restrict our attention to
the system composed by the probe and the even normal modes of the bath. This is equivalent to a system of $N+1$
oscillators (indexed with $j$ ranging from 0 to $N$) plus the probe and is described by the Hamiltonian

\begin{align}\label{eq:Ham1}
H(q,p,Q,P)=&\sum_{j=0}^{N} \frac{p_j^2}2 + \sum_{j=0}^{N} \frac{\omega_j^2 
q_j^2}2 + \frac{P^2}2+\frac{\Omega^2Q^2}2 +
\frac\alpha{2}\left(\sqrt{\frac{\gamma}{N}}\sum_{j=0}^N \eta_j q_j-
\frac1{\sqrt\gamma}Q\right)^2=\\
&\red{\sum_{j=0}^{N} \frac{p_j^2}2 + \sum_{j=0}^{N} \frac{\omega_j^2 
	q_j^2}2 + \frac{\alpha\gamma}{2N}\biggl(\sum_j \eta_j q_j\biggr)^2+ \frac{P^2}2+\frac{\overline \Omega^2Q^2}2 +
\frac{\alpha Q}{\sqrt N}\sum_{j=0}^N \eta_j q_j}\label{eq:Ham1.1}
\end{align}
where we have introduced the rescaled canonical variables $P:=\bar P/\sqrt M$ and $Q:=\sqrt M \bar Q$ for the probe and 
the frequencies $\omega_j$ of the bath are given by
\begin{equation}\label{eq:frequenze_catena}
\omega_j :=
\sqrt{\mu_-^2+4\tilde\omega^2\sin^2\left(\frac{j\pi}{2N}\right)}\ ,
\end{equation}
with $\tilde\omega^2:=\sp/m$ and $\mu_-^2:=\sp'/m$, and $\alpha:=\alpha'/\sqrt {mM}$, $\Omega^2:=\Sp/M$, $\gamma :=
\sqrt{M/m}$. \red{We have also introduced the {\it dressed probe frequency} 
$\overline \Omega^2=\Omega^2 +\alpha/\gamma$ that include the corrections order $\alpha$ to $\Omega$, 
see 
\eqref{eq:polone}.}

Observe that $\omega_0=\mu_-$ while
\[
\mu_+:=\omega_N=\sqrt{\mu_-^2+4\tilde\omega^2}.
\]
Moreover we will write $\omega_j=\omega(\theta_j)$ with 
\begin{equation}\label{eq:omega-theta}
\theta_j:=\frac{j\pi}{N}\qquad\hbox{and}\qquad\omega(\theta):=\sqrt{\mu_-^2+2\tilde\omega^2(1-\cos(\theta))}\, .
\end{equation}
Initially the state of the system is represented by the product of a Maxwellian distribution at temperature $T_B$ for
the bath times a Maxwellian distribution at temperature $T_P$ for the probe. Since the change of variable $O$ in
\eqref{eq:normal} is orthogonal, the initial density can be written as
\begin{equation}\label{eq:dens_prob}
\rho_{N}(q,p,Q,P)=\frac{1}{Z(T_B,T_P)} \exp\left(-\frac 1{2T_B}\sum_{j=0}^N
(p_j^2+\omega_j^2 q_j^2)-\frac 1{2T_P} (P^2+\Omega^2Q^2)\right)\ ,
\end{equation}
where $Z(T_B,T_P)$ is the partition function.

\subsection{Main results}

Let $Q(t)$ and $P(t)$ be the position and momentum of the probe when the system starts with initial condition
$q(0),p(0),Q(0)$ and $P(0)$. We will focus our attention on the 2-times correlation functions for the probe. In
particular we will study the position-position correlation function defined as
\begin{equation}\label{eq:corr}
 C_{\alpha, N}(s,t):=\langle Q(s) Q(t)\rangle_{N},
\end{equation}
where $\langle\cdot\rangle_{N}$ represents the average over the initial condition with respect to the probability
density $\rho_{N}$. From this we will obtain the momentum-momentum correlation function $D_{\alpha, N}(s,t)$, the average 
kinetic energy $E_{\alpha, N}(t)$ and average energy $U_{\alpha, N}(t)$ as
\begin{equation}\label{eq:corrP-E}
\begin{aligned}
 D_{\alpha, N}(s,t):=&\langle P(s) P(t)\rangle_{N}=\frac{d^2}{dtds}C_{\alpha, N}(s,t),\\
 E_{\alpha, N}(t):=&\tfrac 12D_{\alpha, N}(t,t)\ ,\\
U_{\alpha, N}(t):=&\tfrac12D_{\alpha, N}(t,t)+\tfrac{\Omega^2}2 C_{\alpha,
N}(t,t).
\end{aligned}
\end{equation}

To compute $C_{\alpha,N}(s,t)$ we first solve the Hamilton equation for the Hamiltonian \eqref{eq:Ham1} via their
Laplace transform. From such solution it is possible to obtain an explicit expression for the Laplace transform $\tilde
C^1_{\alpha,N}(\lambda)$ of $C_{\alpha,N}(0,t)$. It is not easy to study directly the inverse Laplace transform of  $\tilde
C^1_{\alpha,N}(\lambda)$. Thus we first compute $\tilde C^1_{\alpha}(\lambda)=\lim_{N\to\infty}\tilde
C^1_{\alpha,N}(\lambda)$ and take its inverse Laplace transform obtaining the effective correlation $C_{\alpha}(0,t)$ for
the $N=\infty$ system. It is now possible to study in details the long time behavior of $C_{\alpha}(0,t)$.  We then
express the full correlation function $C_{\alpha,N}(s,t)$ in term of products and convolutions of functions depending
only on $t$ or $s$ whose Laplace transform is closely related to $\tilde C^1_{\alpha,N}(\lambda)$. This allow us to
extend the analysis to the full correlation function $C_{\alpha,N}(s,t)$ and its limit $C_{\alpha}(s,t)$.

Since $C_{\alpha,N}(0,0)=T_P/\Omega^2$ for every $N$, we can expect that $C_{\alpha,N}(s,t)$ and
$C_{\alpha}(s,t)$ stay close for short times. On the other hand since the Hamiltonian \eqref{eq:Ham1} is harmonic,
$C_{\alpha,N}(s,t)$ is a quasi periodic function. Thus we cannot expect that $C_{\alpha,N}(s,t)$ and $C_{\alpha}(s,t)$
stay close uniformly for all $s$ and $t$. A natural time scale for such a comparison is provided by the inverse of the
spacing between the $\omega_j$. We first show that, for $N$ large and  times $s$ and $t$ short when compared to
$N/\tilde \omega$, $C_{\alpha,N}(s,t)$ is well approximated by $C_\alpha(s,t)$. This is the content of our first
theorem.

\begin{theo}\label{th:appo}
Let $C_{\alpha,N}(s,t)$ be the correlation function defined in \eqref{eq:corr} for the evolution generated by the 
Hamiltonian \eqref{eq:Ham1} with probability density \eqref{eq:dens_prob} and let
\begin{equation}\label{eq:CNi}
 C_\alpha(s,t):=\lim_{N\to\infty} C_{\alpha,N}(s,t)\ ,
\end{equation}
Then there exist constants $k,K>0$ such that
\[
\left |C_{\alpha,N}(s,t)-C_\alpha(s,t)\right|\leq \alpha^2
K\left(\left(\frac{k\tilde\omega \max(s,t)}{N}\right)^{4N}+ t^2s^2e^{-kN}\right)\, .
\]
\end{theo}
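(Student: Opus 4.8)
The plan is to work entirely with the explicit solution of the linear system obtained via Laplace transform. First I would write $Q(t)$ and $P(t)$, and more generally all the canonical variables, as linear combinations of the initial data $q(0),p(0),Q(0),P(0)$ with coefficients that are finite sums of oscillatory terms $e^{\pm\ii\nu_r t}$, where the $\nu_r$ are the $N+2$ normal frequencies of the coupled probe-plus-bath system \eqref{eq:Ham1}; equivalently, these coefficients are obtained by inverting the Laplace-transformed equations, i.e.\ by computing residues of a rational function in $\lambda$ whose denominator is the characteristic polynomial. Since the initial density \eqref{eq:dens_prob} is Gaussian with diagonal covariance, $C_{\alpha,N}(s,t)=\langle Q(s)Q(t)\rangle_N$ becomes an explicit double sum over pairs of normal modes, with each term weighted by $T_B$ or $T_P$ and by a product of the mode-decomposition coefficients. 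The key structural point, which should already be available from the earlier sections, is that after the $N\to\infty$ limit these sums over $j=0,\dots,N$ turn into integrals over $\theta\in[0,\pi]$ against the density coming from \eqref{eq:omega-theta}, and $C_\alpha(s,t)$ is exactly that integral.

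The heart of the argument is then a quadrature-error estimate: the difference $C_{\alpha,N}(s,t)-C_\alpha(s,t)$ is, up to the explicit $\alpha^2$ prefactor, a difference between a Riemann-type sum $\frac1N\sum_{j}f(\theta_j,s,t)$ over the nodes $\theta_j=j\pi/N$ and the corresponding integral $\frac1\pi\int_0^\pi f(\theta,s,t)\,d\theta$, where $f$ is built from the resolvent/propagator of the infinite system evaluated at $\omega(\theta)$ and multiplied by oscillatory factors $e^{\pm\ii\omega(\theta)t}$, $e^{\pm\ii\omega(\theta)s}$. Because the nodes are equally spaced and $f(\cdot,s,t)$ extends to a function that is periodic and analytic in $\theta$ in a strip $|\mathrm{Im}\,\theta|<c$ (the analyticity coming from the fact that $\omega(\theta)$ and the resolvent are analytic there, the only possible singularity being the pole when $\Omega$ sits in the spectrum, which one handles by a contour/principal-value decomposition that is already in place in the construction of $C_\alpha$), the trapezoidal rule on a periodic analytic integrand converges at an exponential rate: the error is bounded by $\big(\sup_{|\mathrm{Im}\,\theta|\le c'}|f|\big)\,e^{-c'N}$ for any $c'<c$. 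The subtlety is tracking how $\sup|f|$ grows in $s,t$. In the strip the oscillatory factors become $e^{\pm\ii\omega(\theta)t}$ with $\omega(\theta)$ acquiring an imaginary part of size $O(\tilde\omega\,c')$, so shifting the contour costs a factor $e^{O(c'\tilde\omega\max(s,t))}$. Optimizing the contour width $c'$ against $N$ — i.e.\ choosing $c'$ proportional to $\min(c,\,N/(\tilde\omega\max(s,t)))$ — produces precisely a bound of the form $\big(k\tilde\omega\max(s,t)/N\big)^{4N}$ in the regime $\tilde\omega\max(s,t)\lesssim N$, while in the complementary regime one simply uses the trivial bound on each term, which is polynomial in $s,t$ times $e^{-kN}$ from the part of the spectrum away from $\Omega$; this is the second term $t^2s^2 e^{-kN}$. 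The powers $4N$ and the quadratic growth $t^2s^2$ should fall out of counting: $C_{\alpha,N}$ involves a product of two propagators (one at time $s$, one at time $t$), each contributing, after the residue computation and the contour shift, a factor whose modulus is controlled by the $2N$-th derivative bound characteristic of the Euler–Maclaurin remainder for equally spaced nodes, and each propagator can grow at most linearly in time because the frequencies are real.

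The main obstacle I expect is \emph{uniformity of the analyticity strip and of the sup-bounds as $s,t$ vary, especially in the resonant case}. When $\Omega\in(\mu_-,\mu_+)$ the infinite-volume resolvent has a pole on the real $\theta$-axis, so $f(\theta,s,t)$ is not literally analytic across $[0,\pi]$; one must first carry out the same decomposition used to define $C_\alpha(s,t)$ (separating the "thermostat" part, which in the finite system corresponds to the near-resonant modes, from a remainder that is genuinely analytic) and then apply the trapezoidal estimate only to the analytic remainder, handling the resonant contribution by a direct comparison of the finite sum near the resonance with its limit. Controlling the resonant contribution uniformly in $s,t$ is where the real work lies, because there the summand is large and the spacing of frequencies near $\omega(\theta)=\Omega$ is what sets the time scale $N/\tilde\omega$; I would expect to need a careful stationary-phase-free estimate showing the near-resonant partial sum differs from its integral by $O(\alpha^2)$ times the stated bound, using that the number of modes within $O(1/N)$ of the resonance is $O(1)$ and that the mismatch accumulates only through the $4N$-th order Euler–Maclaurin term. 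The remaining steps — differentiating twice in $s$ and $t$ to transfer the estimate to $D_{\alpha,N}$, and bookkeeping the explicit $\alpha^2$ factor, which appears because the order-zero term in $\alpha$ is identical for finite and infinite $N$ (the probe is decoupled) so the first nonvanishing difference is $O(\alpha^2)$ — are routine.
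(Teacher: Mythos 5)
Your proposal correctly identifies the engine of the estimate --- the exponential accuracy of equally spaced quadrature for periodic analytic integrands, with the width of the analyticity strip traded off against $N/(\tilde\omega\max(s,t))$ --- and this is indeed where the factor $\left(k\tilde\omega\max(s,t)/N\right)^{2N}$ originates in the paper, which cites exactly this trapezoidal-rule theory. But your plan to run the comparison in the time domain, as a $\theta$-contour shift applied directly to the mode sums representing $C_{\alpha,N}(s,t)$, has a gap that you yourself flag but do not close: in the resonant case the limiting integrand has a singularity at $\omega(\theta)=\Omega$ (more precisely at the shifted resonances $\Omega_\pm(\alpha)$, at distance $O(\alpha^2)$ from the real $\theta$-axis), so there is no $s,t$-uniform analyticity strip, and the near-resonant modes are precisely the ones carrying the $O(1)$ thermalization effect, so they cannot be absorbed into a small remainder. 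The ``decomposition plus direct comparison of the near-resonant partial sum'' you sketch is the hard part and is left unproved; as written, your argument establishes the theorem only off resonance. A second, related difficulty is that the residue expansion you start from runs over the normal frequencies of the \emph{coupled} system, which are $\alpha$-dependent and not equally spaced, whereas the trapezoidal machinery needs sums over the unperturbed nodes $\theta_j=j\pi/N$; isolating those sums requires the Laplace-transform structure.

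The paper avoids both issues by comparing the Laplace transforms on a Bromwich line with large positive real part. All singularities of $\tilde C_{\alpha,N}$ and $\tilde C_\alpha$ lie on the imaginary axis, so on the line $\Re\lambda=\xi=N/\max(s,t)$ both are analytic, the sums over the unperturbed modes appear only through $f_N(\lambda)$, and the difference reduces to the exact aliasing identity $f_N(\lambda)-f_+(\lambda)=f_+(\lambda)/(p_+(\lambda)^{2N}-1)$ with $|p_+(\lambda)|\geq k\Re(\lambda)/\tilde\omega^2$ large; inverting the transform along that line yields $e^N\left(k\tilde\omega t/N\right)^{2N}$ with no case distinction between resonant and non-resonant. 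Structurally, the two-time function is also not treated as a single double quadrature: it is split as $C^{\rm nt}_{\alpha,N}+\alpha^2T_BC^{\rm t}_{\alpha,N}$, the first piece being products of one-time objects and the second a double time-convolution of the kernel $\mathcal K_N$ against $d_{\alpha,N}$; only $\mathcal K_N-\mathcal K$ is estimated by the $\theta$-strip argument you describe, and there the integrand $(\cos(\omega(\theta)t)-1)/\omega(\theta)^2$ is entire in $\theta$, so the resonance never enters. The $t^2s^2e^{-kN}$ term arises from the constant offset $|\mathcal K_N(0)-\mathcal K(0)|\le Ke^{-kN}$ integrated over $[0,s]\times[0,t]$, not from a trivial bound on the spectrum away from $\Omega$. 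Finally, the $\alpha^2$ prefactor is an exact algebraic cancellation, $g^1_{\alpha,N}-g^1_\alpha=\alpha^2(f_N-f_+)/(D_{\alpha,N}D_\alpha)$, not merely the observation that the decoupled terms coincide --- the latter would only rule out an $O(1)$ difference, not an $O(\alpha)$ one.
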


\begin{rmk}\label{rmk:Aa}
\emph{In the following we will use the letters $K$ and $k$ to indicate generic constants independent of $\alpha$ and
$N$. They are not supposed to have a fixed value even when they appear in the same formula multiple times. See also
Remark \ref{rmk:kt} below.}
\end{rmk}

%From Theorem \ref{th:appo} we see that $C_{\alpha,N}(s,t)$ and $C_{\alpha}(s,t)$ stay remarkably close up to times $t$
%and $s$ of the order of $N$. 
As we will see below, the effects of the presence of the interaction of strength $\alpha$
on the evolution of the probe are felt on a time scale of the order of $\alpha^{-2}$. For this reason, we will assume
that $N\gg\alpha^{-2}$. In this way $C_{\alpha,N}(s,t)$ and $C_{\alpha}(s,t)$ are practically indistinguishable up to
times $s$ and $t$ much longer than $\alpha^{-2}$, that is long enough to see the effect of the interaction. An analysis
similar to that leading to Theorem~\ref{th:appo} tell us that, calling
$D_\alpha(s,t)=\lim_{N\to\infty}D_{\alpha,N}(s,t)$, see \eqref{eq:corrP-E}, we have
\[
\left |D_{\alpha,N}(s,t)-D_\alpha(s,t)\right|\leq \alpha^2
K\left(\left(\frac{k\tilde\omega\max(s,t)}{N}\right)^{2N}+t^2s^2e^{-kN}\right)\,\ .
\]

\red{Theorem \ref{th:outband} and Theorem \ref{th:inband} below contain our results for the long time behavior of
$C_\alpha(s,t)$ in the non resonant and resonant case respectively. Equations \ref{eq:Cdstima} and \eqref{eq:Cdstimain}
in the theorems are formulated in term of a main term plus corrections of higher order in $\alpha$. Since these
corrections do not vanish for long times, the distinction become empty if $\alpha$ is too large. Thus, notwithstanding
our results and methods are not based on weak limit or a perturbative scheme on $\alpha$, we will be mainly interested
in the case when $\alpha$ is small. Moreover we observe that to maintain the distinction between resonant and non
resonant systems, $\alpha$ must be much smaller than $|\Omega-\mu_-|$ and $|\Omega-\mu_+|$ so that $\overline \Omega$,
see \eqref{eq:Ham1.1}, remain ``well inside'' or ``well outside'' the resonant region. Indeed one can see from
\eqref{eq:Ham1.1},  see also \eqref{eq:polone}, that the behavior of the system changes when $\overline \Omega$ crosses
$\mu_+$ or $\mu_-$ and the details of the transition are rather involved. Clearly one could analyze the cross over
situation when $\Omega\simeq\mu_\pm$ but this is outside the scope of this paper.}

Observe that, for $\alpha=0$ and any $N$, we have
\[
 C_{0,N}(s,t)=C_0(s,t)=\frac{T_P}{\Omega^2}\cos(\Omega(t-s))\, .
\]
On heuristic grounds we expect that in the non resonant case, when the unperturbed frequency of the probe is not found
inside the frequency spectrum of the bath, that is  $\Omega<\mu_-$ or $\Omega>\mu_+$, the effective interaction between
probe and bath is weak. This is summarized in our next theorem where we show that, in this case, the behavior of
$C_\alpha(s,t)$ is close to that of $C_0(s,t)$ uniformly in $t$ and $s$.

\begin{theo}\label{th:outband} Let $C_\alpha(s,t)$ be defined in
\eqref{eq:CNi} with $\Omega\not\in[\mu_-,\mu_+]$ then for $\alpha$ small enough we have
\begin{equation}\label{eq:Cdstima}
 C_\alpha(s,t)=\frac{T_P}{\Omega^2}
\cos(\Omega(\alpha)(t-s))+\alpha K(s,t)\ ,
\end{equation}
where $K(s,t)$ is a bounded function while $\Omega(\alpha)=\sqrt{\Omega^2+\alpha\gamma^{-1}}+O(\alpha^2)$.
\end{theo}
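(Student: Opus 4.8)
The plan is to read the asymptotics of $C_\alpha(s,t)$ off the explicit Laplace-domain formulas obtained in Sections~\ref{sec:solution}--\ref{sec:decay2}. Because $\rho_N$ in \eqref{eq:dens_prob} is a product of \emph{uncoupled} Gaussians, the solution of the Hamilton equations for \eqref{eq:Ham1} is linear in the initial data, $Q(t)=A(t)Q(0)+B(t)P(0)+\sum_j\big(C_j(t)q_j(0)+D_j(t)p_j(0)\big)$, and the initial data are independent with $\langle Q(0)^2\rangle=T_P/\Omega^2$, $\langle P(0)^2\rangle=T_P$, $\langle q_j(0)^2\rangle=T_B/\omega_j^2$, $\langle p_j(0)^2\rangle=T_B$. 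Taking the $N\to\infty$ limit of $\langle Q(s)Q(t)\rangle_N$ therefore gives the bilinear representation
\begin{equation}
C_\alpha(s,t)=\frac{T_P}{\Omega^2}A(s)A(t)+T_P\,B(s)B(t)+T_B\,\big(\hbox{bath contributions}\big),
\end{equation}
where $\tilde A(\lambda)=\lambda/\mathcal{D}_\alpha(\lambda)$ and $\tilde B(\lambda)=1/\mathcal{D}_\alpha(\lambda)$, with $\mathcal{D}_\alpha(\lambda)=\lambda^2+\Omega^2+\alpha\gamma^{-1}-\alpha^2\psi_\alpha(\lambda)$ the denominator already appearing in the formula for $\tilde C^1_\alpha(\lambda)$ (indeed $\tilde C^1_\alpha(\lambda)=(T_P/\Omega^2)\tilde A(\lambda)$). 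Here $\psi_\alpha$ is a rational function of the bath susceptibility $g(\lambda)=\lim_{N\to\infty}N^{-1}\sum_j\eta_j^2(\lambda^2+\omega_j^2)^{-1}=\int_0^\pi\pi^{-1}(\lambda^2+\omega(\theta)^2)^{-1}\der\theta$, which is analytic for $\lambda$ off the two segments $\pm\ii[\mu_-,\mu_+]$ of the imaginary axis and has at most integrable (square-root, van Hove) singularities at the band edges $\pm\ii\mu_\pm$.

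First I would locate the pole of $\tilde A,\tilde B$. Since $\Omega\notin[\mu_-,\mu_+]$, the point $\ii\Omega$ sits at distance $\delta:=\hbox{dist}(\Omega,[\mu_-,\mu_+])>0$ from the branch cut, so $g$ (hence $\mathcal{D}_\alpha$) is analytic near $\ii\Omega$ and $g(\ii\Omega)\in\RRR$. As $\mathcal{D}_0(\ii\Omega)=0$ and $\partial_\lambda\mathcal{D}_0(\ii\Omega)=2\ii\Omega\neq0$, the implicit function theorem yields, for $\alpha$ small, a unique zero $\lambda_\alpha$ of $\mathcal{D}_\alpha$ near $\ii\Omega$, analytic in $\alpha$; using $\mathcal{D}_\alpha(-\lambda)=\mathcal{D}_\alpha(\lambda)$ and $\overline{\mathcal{D}_\alpha(\lambda)}=\mathcal{D}_\alpha(\bar\lambda)$ one checks $\lambda_\alpha=\ii\Omega(\alpha)$ is purely imaginary, and $\mathcal{D}_\alpha(\ii\Omega(\alpha))=0$ gives $\Omega(\alpha)^2=\Omega^2+\alpha\gamma^{-1}+O(\alpha^2)$, i.e. $\Omega(\alpha)=\sqrt{\Omega^2+\alpha\gamma^{-1}}+O(\alpha^2)$. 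A Rouché argument together with the assumed stability of the origin (no zeros with $\Re\lambda>0$) shows that for $\alpha$ small $\mathcal{D}_\alpha$ has no zeros near the closed right half-plane other than $\pm\ii\Omega(\alpha)$, and none on $\pm\ii[\mu_-,\mu_+]$ since there $|\mathcal{D}_\alpha(\ii\mu)|\geq|\Omega^2-\mu^2|-O(\alpha)\geq\delta'>0$: this is exactly where non-resonance enters.

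Next I would invert the Laplace transforms by deforming the Bromwich contour onto the imaginary axis, collecting the residues at $\pm\ii\Omega(\alpha)$ and the jumps across $\pm\ii[\mu_-,\mu_+]$. Because $\mathcal{D}_\alpha'(\ii\Omega(\alpha))=2\ii\Omega(\alpha)+O(\alpha^2)$, the residues give $A(t)=\cos(\Omega(\alpha)t)+O(\alpha^2)$ and $B(t)=\Omega(\alpha)^{-1}\sin(\Omega(\alpha)t)+O(\alpha^2)$. The branch-cut contributions are $O(\alpha^2)$ and bounded uniformly in $t$: the jump of $1/\mathcal{D}_\alpha$ across the band is proportional to $\alpha^2$ times the (integrable) jump of $g$, divided by quantities bounded below by the non-resonance estimate, and the factor $e^{\ii\mu t}$ contributes nothing to the bound. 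The same, cheaper, estimate handles the bath contributions: each mode enters $Q(t)$ with weight $O(\alpha N^{-1/2})$ through a propagator with poles at $\pm\ii\omega_j$ and $\pm\ii\Omega(\alpha)$, uniformly separated by $\delta$, so $C_j,D_j=O(\alpha N^{-1/2})$ uniformly in $t$ and $T_B\sum_j(\omega_j^{-2}C_j(s)C_j(t)+D_j(s)D_j(t))=O(\alpha^2)$ uniformly in $(s,t)$. Finally, inserting the expansions of $A,B$ into the bilinear formula and using $\Omega(\alpha)^{-2}=\Omega^{-2}+O(\alpha)$,
\begin{equation}
\frac{T_P}{\Omega^2}A(s)A(t)+T_P\,B(s)B(t)=\frac{T_P}{\Omega^2}\cos\big(\Omega(\alpha)(t-s)\big)+\alpha\,\tilde K(s,t),
\end{equation}
the $O(\alpha)$ term arising solely from replacing $\Omega(\alpha)^{-2}$ by $\Omega^{-2}$ in the $\sin\cdot\sin$ product; combined with the $O(\alpha^2)$ bath contributions this is \eqref{eq:Cdstima}. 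The main obstacle is the uniform-in-$(s,t)$ control of the continuous-spectrum (branch-cut and bath-sum) terms: it hinges on the non-resonance gap $\delta>0$ to exclude small denominators, on integrability of the van Hove singularities at $\mu_\pm$, and on ruling out spurious zeros of $\mathcal{D}_\alpha$ near the imaginary axis for small $\alpha$; once these are secured, the remaining work is bookkeeping of orders in $\alpha$.
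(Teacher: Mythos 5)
Your overall strategy is the one the paper actually follows: the bilinear split of $C_\alpha(s,t)$ into a probe-initial-data part and an $O(\alpha^2)$ bath-initial-data part is exactly the paper's decomposition $C_{\alpha}=C^{\rm nt}_{\alpha}+\alpha^2T_B\,C^{\rm t}_{\alpha}$ in \eqref{eq:Csplit}--\eqref{eq:CNsplit} (your $A$ and $B$ are, up to constants, $C_\alpha(0,\cdot)$ and $S_\alpha(0,\cdot)$), and the Laplace inversion by residues plus branch-cut jump is Section \ref{sec:decay1}. The final bookkeeping, with the only $O(\alpha)$ error coming from $\Omega_+(\alpha)^{-2}$ versus $\Omega^{-2}$ in the $\sin\cdot\sin$ term, matches \eqref{eq:Cntnr}. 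However, two steps are inaccurate as written.

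First, the claim that for small $\alpha$ the only zeros of $\mathcal D_\alpha$ near the closed right half-plane are $\pm\ii\Omega(\alpha)$ is false. Because $f_+(\ii k)\to-\infty$ as $k\to\mu_+^+$, the equation $D_\alpha(\ii\rho)=0$ has an additional solution $\rho_+(\alpha)=\mu_++O(\alpha^2)$ just outside the band (and similarly at $\mu_-$ on the second sheet), see \eqref{eq:poletto}; equivalently, your $\alpha^2\psi_\alpha=\alpha^2f_+/(1+\alpha\gamma f_+)$ is \emph{not} uniformly small near $\pm\ii\mu_\pm$, so Rouch\'e against $\lambda^2+\Omega^2+\alpha\gamma^{-1}$ breaks down precisely at the band edges. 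Your contour deformation therefore crosses a genuine pole of $\tilde A$ at $\pm\ii\rho_+(\alpha)$ whose residue must be collected; the paper computes it to be $\alpha^3r_2(\alpha)$ (see \eqref{eq:final_out}), so it is harmlessly absorbed into $\alpha K(s,t)$, but it cannot be argued away. Second, and for the same reason, the denominator of the jump of $\tilde C^1_\alpha$ across $\mathcal I$ is \emph{not} bounded below by the non-resonance gap near the endpoints: by \eqref{eq:Hacca} it degenerates to $\alpha^2\gamma^2(\Omega^2-\mu_\pm^2)^2$ at $\xi=\mu_\pm$, i.e.\ the integrand has poles at $\rho_\pm(\alpha)$ at distance $O(\alpha^2)$ from the integration interval. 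The uniform $O(\alpha^2)$ bound survives only because the numerator vanishes like $\sqrt{(\xi-\mu_-)(\mu_+-\xi)}$; making this quantitative is what Lemma \ref{lem:stima} and the optimization in \eqref{eq:optimal} are for. Both defects are repairable and do not change the statement, but as written the residue count and the uniform branch-cut estimate are gaps.
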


\begin{rmk}\label{rmk:kt}
\emph{In the same spirit of Remark \ref{rmk:Aa}, we will use the notation $K(t)$ or $K(s,t)$ to indicate generic
functions of $t$ or $t$ and $s$, uniformly bounded in $t$, $s$ and $\alpha$. }
\end{rmk}

\begin{rmk}\label{rmk:1time} \emph{In Section \ref{sec:decay1} we will show that, when $s=0$, we have
\begin{equation}\label{eq:C1nr}
C_\alpha(0,t)=\frac{T_P}{\Omega^2}(1-\alpha^2r_1(\alpha))
\cos(\Omega(\alpha)t) +\alpha^3 r_2(\alpha)
\cos(\rho(\alpha)t)+\alpha K_{v}(t)\ ,
\end{equation}
where $\rho(\alpha)=\mu_++O(\alpha^2)$,  $|r_1(\alpha)|$, $|r_2(\alpha)|\leq K$ while
\[
|K_v(t)|\leq\frac{K}{1+t(1+\alpha \sqrt t)}.
\]
This finer expression will be useful in Section \ref{sec:decay2} to study the long time behavior of the full correlation
function $C_\alpha(s,t)$. In Appendix \ref{app:stime} we show that also the term $K(s,t)$ in \eqref{eq:Cdstima} can be
expressed in term of oscillation of frequencies that are combinations of $\rho(\alpha)$ and $\Omega(\alpha)$ plus 
terms that decay as a power law in $t$ and/or $s$.}
\end{rmk}

%\begin{rmk}\label{rmk:inout}
%\emph{In Theorem \ref{th:outband}, and in Theorem \ref{th:inband} below, $\alpha$ small enough means that
%$\alpha$ must be much smaller than $|\Omega-\mu_-|$ and $|\Omega-\mu_+|$ so that the system is either ``well inside'' 
%or
%``well outside'' the resonant region. More precisely, we will assume that if $\Omega\in[\mu_-,\mu_+]$
%then also $\Omega(\alpha)\in[\mu_-,\mu_+]$ and vice versa. This also implies that there is a clear distinction between
%$\Omega(\alpha)$ and $\rho(\alpha)$, see \eqref{eq:polone} and \eqref{eq:poletto}. Clearly one could analyze the cross
%over situation when $\Omega\simeq\mu_\pm$ but this is outside the scope of this paper.}
%\end{rmk}

Theorem \ref{th:outband} tells us that, if $\Omega$ is not in close resonance with the frequencies in the bath, then the
interaction between probe and bath is weak and remains weak for very long time.  Moreover a similar analysis gives
\[
 D_\alpha(s,t)=T_P
\cos(\Omega(\alpha)(t-s))+\alpha K(s,t)\ ,
\]
so that $E_\alpha(t)=T_P+\alpha K(t)$. Thus the temperature of the probe (or better its average kinetic energy
$E_\alpha(t)$) and its average internal energy $U_\alpha(t)$ stay close to their initial values.
Notwithstanding this, the term $K(t)$ contains oscillating terms that do not vanish in time plus decaying term that
vanish only as a power law in time. Thus the probe reaches very slowly a state in which most of the energy is still
concentrated on a oscillation with frequency $\Omega(\alpha)=\Omega+O(\alpha)$. The remaining energy
is found on oscillations with a frequency $\rho(\alpha)=\mu_++O(\alpha^2)$, or combination of $\rho(\alpha)$ and 
$\Omega(\alpha)$, with amplitudes at most $O(\alpha^3)$, see \eqref{eq:Csplit}, \eqref{eq:CNsplit}, Remark 
\ref{rmk:1time} and Appendix \ref{app:stime}.

More interesting is the situation when $\Omega$ is found in the frequency spectrum of the bath, and we have strong
effective interaction between the two, that is in the resonant case. In this situation we expect the probe
to thermalize with the bath and reach equilibrium at the temperature of the bath. Moreover we expect it to be
found in a state very close to the steady state of a probe interacting with a stochastic thermostat. This means in 
particular that, for 
large $s$ and $t$, $C_\alpha(s,t)$ decays exponentially in $t-s$. Thus a natural guess is that, for large $t$ and $s$ 
we have
\[
 C_\alpha(s,t)\simeq
 %C(s,t):=
 \frac{T_B}{\Omega(\alpha)^2}\cos(\Omega(\alpha)(t-s))
 e^{-\xi(\alpha)|t-s|}\ ,
\]
for suitable $\Omega(\alpha)=\Omega+O(\alpha)$ and $\xi(\alpha)=O(\alpha^2)$, where $T_B$ is the temperature of the bath, see\eqref{eq:dens_prob}.

\begin{theo}\label{th:inband} Let $C_\alpha(s,t)$ be defined in \eqref{eq:CNi}
 with $\Omega\in[\mu_-,\mu_+]$ then, for $\alpha$ small enough we have
\begin{equation}\label{eq:Cdstimain}
\begin{aligned}
 C_\alpha(s,t)=&\frac{(T_P-T_B)
e^{-\xi(\alpha)(t+s)}+T_B\, e^{-\xi(\alpha)|t-s|}}{\Omega(\alpha)^2}
\cos(\Omega(\alpha)(t-s))+\alpha K(s,t)\ ,
\end{aligned}
\end{equation}
where
\[
\Omega(\alpha)=\sqrt{\Omega^2+\alpha\gamma^{-1}}+O(\alpha^2)\qquad\hbox{and}\qquad
\xi(\alpha)=\frac{\alpha^2}{2\Omega\sqrt{(\Omega-\mu_-)(\mu_+-\Omega)}}+O(\alpha^3)\, .
\]
\end{theo}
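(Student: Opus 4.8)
\emph{Proof proposal.}
The idea is to reduce the statement to a Laplace-inversion problem built on the one-time analysis of Section~\ref{sec:decay1}. Solving the Hamilton equations of \eqref{eq:Ham1} in Laplace variables and eliminating the bath modes $q_j$, the probe position obeys $\tilde Q(\lambda)=\tilde G_{\alpha,N}(\lambda)\big(\lambda Q(0)+P(0)+(\text{bath source})\big)$ with
\[
\tilde G_{\alpha,N}(\lambda)=\Big(\lambda^2+\Omega^2+\tfrac{\alpha}{\gamma}-\tfrac{\alpha^2\Sigma_N(\lambda)}{1+\alpha\gamma\,\Sigma_N(\lambda)}\Big)^{-1},\qquad
\Sigma_N(\lambda)=\frac1N\sum_{j=0}^N\frac{\eta_j^2}{\lambda^2+\omega_j^2}\,.
\]
By \eqref{eq:frequenze_catena}--\eqref{eq:omega-theta} the Riemann sum $\Sigma_N$ converges, uniformly on compacts off the imaginary axis, to $\Sigma(\lambda)=\frac1\pi\int_0^\pi\frac{\der\theta}{\lambda^2+\omega(\theta)^2}$, analytic on $\CCC$ minus the cuts $\pm\ii[\mu_-,\mu_+]$. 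Averaging over \eqref{eq:dens_prob}, whose law is a product of centered Gaussians with $\langle Q(0)^2\rangle=T_P/\Omega^2$, $\langle P(0)^2\rangle=T_P$, $\langle q_j(0)^2\rangle=T_B/\omega_j^2$ and $\langle p_j(0)^2\rangle=T_B$, yields (as in Section~\ref{sec:solution}) the decomposition
\[
C_\alpha(s,t)=\frac{T_P}{\Omega^2}\,a(s)a(t)+T_P\,b(s)b(t)+T_B\,B(s,t),
\]
where $a,b$ are the probe fundamental solutions, with Laplace transforms $\lambda\tilde G_\alpha(\lambda)$ and $\tilde G_\alpha(\lambda)$, and the bath term $B(s,t)$ --- equal, up to back-reaction corrections, to the double convolution $\alpha^2\int_0^s\!\!\int_0^t b(s-u)\,b(t-v)\,R(u-v)\,\der u\,\der v$ of $b$ with the band-limited noise kernel $R(\tau)=\frac1\pi\int_0^\pi\frac{\cos(\omega(\theta)\tau)}{\omega(\theta)^2}\der\theta$ --- has double Laplace transform $\alpha^2\tilde G_\alpha(\lambda)\tilde G_\alpha(\lambda')\Phi(\lambda,\lambda')$ with $\Phi$ explicit and singular only on $\pm\ii[\mu_-,\mu_+]$.

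Next I would pin down the analytic structure of $\tilde G_\alpha$ in the resonant regime $\Omega\in(\mu_-,\mu_+)$. On the physical sheet $\tilde G_\alpha$ is analytic in $\{\mathrm{Re}\,\lambda>0\}$ by stability of \eqref{eq:Ham1}, and its imaginary-axis singularities --- the cuts $\pm\ii[\mu_-,\mu_+]$ together with at most a pole just outside the band --- give only bounded (and partly decaying) contributions, so the exponential decay of the leading term must originate elsewhere. Continuing $\Sigma$ through the cut to $\Sigma_{\mathrm{II}}$, the resonance equation $\lambda^2+\Omega^2+\tfrac\alpha\gamma-\alpha^2\Sigma_{\mathrm{II}}(\lambda)=0$ has a conjugate pair of roots $\lambda_\pm=\pm\ii\Omega(\alpha)-\xi(\alpha)+O(\alpha^3)$ on the second sheet, with $\mathrm{Re}\,\lambda_\pm<0$. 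Splitting into real and imaginary parts and using $\Sigma(\ii\omega+0^+)=(\mathrm{real})-\ii\big[(\omega^2-\mu_-^2)(\mu_+^2-\omega^2)\big]^{-1/2}$ for $\omega\in(\mu_-,\mu_+)$, the real part gives $\Omega(\alpha)^2=\Omega^2+\alpha\gamma^{-1}+O(\alpha^2)$ and the imaginary part gives $\xi(\alpha)=\tfrac{\alpha^2}{2\Omega(\alpha)}\,\big|\mathrm{Im}\,\Sigma(\ii\Omega+0^+)\big|+O(\alpha^3)$, which is the quantity in the statement. The smallness hypothesis on $\alpha$ (Remark~\ref{rmk:inout}) keeps $\lambda_\pm$ uniformly away from the branch points $\pm\ii\mu_\pm$ and ensures $\xi(\alpha)>0$.

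Then I would invert the transforms by pushing the Bromwich contour(s) leftward across the cuts. In each variable the residues at $\lambda_\pm$ produce damped oscillations at frequency $\Omega(\alpha)$ and rate $\xi(\alpha)$, while the remaining hairpin integral around $\pm\ii[\mu_-,\mu_+]$, where the integrand no longer meets the resonance, is $O(\alpha)$ and decays --- as a power law produced by the square-root vanishing of $\omega'(\theta)$ at $\theta=0,\pi$, exactly as in \eqref{eq:C1nr} and Remark~\ref{rmk:1time}. Thus $a(t)=e^{-\xi(\alpha)t}\cos(\Omega(\alpha)t)+\alpha K(t)$ and $b(t)=\Omega(\alpha)^{-1}e^{-\xi(\alpha)t}\sin(\Omega(\alpha)t)+\alpha K(t)$, so that, using $\Omega^2=\Omega(\alpha)^2+O(\alpha)$ and the cosine addition formula,
\[
\tfrac{T_P}{\Omega^2}\,a(s)a(t)+T_P\,b(s)b(t)=\tfrac{T_P}{\Omega(\alpha)^2}\,e^{-\xi(\alpha)(t+s)}\cos(\Omega(\alpha)(t-s))+\alpha K(s,t).
\]
For the bath term, inverting its double transform --- equivalently, estimating the double convolution --- the dominant contribution comes from the resonant part $\mu\approx\Omega$ of the band integral, where $\tilde G_\alpha$ is nearly singular; a Lorentzian (Fermi golden rule) estimate of that part, a peak of width $\xi(\alpha)$ which promotes the $O(\alpha^2)$ prefactor to $O(1)$, gives $B(s,t)=\Omega(\alpha)^{-2}\big(e^{-\xi(\alpha)|t-s|}-e^{-\xi(\alpha)(t+s)}\big)\cos(\Omega(\alpha)(t-s))+\alpha K(s,t)$, the $-e^{-\xi(\alpha)(t+s)}$ piece being forced by $B(0,t)\equiv 0$ (equivalently $C_\alpha(0,t)=\tfrac{T_P}{\Omega^2}a(t)$), and the remainder again decaying by the endpoint analysis. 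Adding the three pieces gives \eqref{eq:Cdstimain}.

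The main obstacle is the uniform-in-$(s,t)$ control of this Laplace inversion in the presence of the resonance: (i) locating the second-sheet poles $\lambda_\pm$ to order $\alpha^3$ and showing they dominate; (ii) proving the cut (hairpin) contributions are genuinely $O(\alpha)$ uniformly in time --- the delicate point, since the bath term is only $O(\alpha^2)$ before resummation yet $O(1)$ afterwards, so one must cleanly separate the Lorentzian core, which reproduces the \emph{exact} $e^{-\xi(\alpha)|t-s|}$, from a smooth background that stays $O(\alpha)$; and (iii) the endpoint ($\mu_\pm$) stationary-phase/Watson analysis controlling the power-law decay of the $K$'s. The trigonometric bookkeeping and the identification of $\Omega(\alpha)$ and $\xi(\alpha)$ are routine once the one-time machinery of Section~\ref{sec:decay1} is in place; the real work is (ii).
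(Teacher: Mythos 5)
Your proposal is correct and follows essentially the same route as the paper: the same splitting of $C_\alpha$ into the $T_P$ product part and the $T_B$ double-convolution part (cf.\ \eqref{eq:Csplit}, \eqref{eq:CNsplit}, \eqref{eq:expCt}), the same identification of the second-sheet poles $\Omega_\pm(\alpha)$ by continuing $f_+$ through the cut (cf.\ \eqref{eq:poli}--\eqref{eq:polone}), and the same Lorentzian-peak mechanism of width $\xi(\alpha)$ promoting the $\alpha^2$ bath prefactor to $O(1)$, which is exactly what the paper's Lemma \ref{lem:duhamel} and Corollary \ref{cor:stimasing} make rigorous --- your step (ii). (Incidentally, your value $\xi(\alpha)=\alpha^2/\bigl(2\Omega\sqrt{(\Omega^2-\mu_-^2)(\mu_+^2-\Omega^2)}\bigr)$ is the one consistent with \eqref{eq:polone} and \eqref{eq:inb}; the expression printed in the theorem statement appears to omit the squares.)
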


\begin{rmk}\label{rmk:1timeres}
\emph{As for Remark \ref{rmk:1time}, we will show in Section \ref{sec:decay1} that
\begin{equation}\label{eq:C1r}
C_\alpha(0,t)=\frac{T_P e^{-\xi(\alpha)t}}{\Omega^2}(1-\alpha^2r_1(\alpha))
\cos(\Omega(\alpha)t+\phi(\alpha)) + \alpha^3
r_2(\alpha)\cos(\rho(\alpha)t)+\alpha K_v(t)\ ,
\end{equation}
with $r_1(\alpha)$, $r_2(\alpha)$, $\rho(\alpha)$ and $K_v(t)$ as in Remark \ref{rmk:1time} while
$\phi(\alpha)=O(\alpha^2)$. This characterization will be useful to obtain \eqref{eq:Cdstimain} in Section \ref{sec:decay2}.}
\end{rmk}

%From Theorem \ref{th:outband} we get that
%\[
% C_\alpha(t,t)=\frac{T_B+(T_P-T_B)e^{-2\xi(\alpha)t}}{\Omega(\alpha)^2}+\alpha
% K(t)\ ,
%\]
%while a similar analysis gives
An analysis analogous to the one leading to Theorem \ref{th:outband} gives
\[
\begin{aligned}
E(t)=&\left(T_B+(T_P-T_B)e^{-2\xi(\alpha)t}\right)
+\alpha K(t)\ .
\end{aligned}
\]
This shows that, on a time scale of the order of $\alpha^{-2}$ the temperature (or better, the average kinetic energy) 
of the probe converges to the temperature of the bath exponentially fast in agreement with Newton's law of cooling with 
heat transfer coefficient $2\xi(\alpha)$.

A better analysis of this thermalization, developed in Appendix \ref{app:stime} shows that we can write
\begin{equation}\label{eq:poly}
C_\alpha(t,s)=C_\alpha^{th}(t-s)+\alpha^5 K_o(t+s)+\alpha K_v(t,s)
\end{equation}
where $K_v(t,s)$ vanishes as a power law when $\min(t,s)\to\infty$ while $K_o(t)$ consists of oscillations of frequency 
$\rho(\alpha)$. Thus we can say that, but for correction $O(\alpha^5)$ we have $\lim_{\tau\to\infty} 
C_\alpha(\tau,t+\tau)=C_\alpha^{th}(t)$, see \eqref{eq:nonth}. Still from Appendix \ref{app:stime} we further learn that
\begin{equation}\label{eq:thermalize}
C_\alpha^{th}(t)=\frac{T_B}{\Omega(\alpha)^2}\cos(\Omega(\alpha)t)
e^{-\xi(\alpha)|t|}+\alpha^2K_v(t)+\alpha^3 K_o(t)
\end{equation}
where again $K_v(t)$ vanishes as a power law when $t\to\infty$ while $K_o(t)$ consists of
oscillation of frequency $\rho(\alpha)$. The first term on the r.h.s. of \eqref{eq:thermalize}
can be thought as the correlation function of a stationary and mixing Markov process. In this
sense we can say that, at order 0 in $\alpha$, the probe fully thermalizes with the bath.
Notwithstanding this, the higher order corrections in $\alpha$ do not vanish even when the bath is
effectively infinite.

\begin{rmk}[\textbf{Discussion on the involved physical parameters}]
\emph{ It is worthwhile to discuss briefly the role of each of the physical parameters entering the model and how they
affect the results. The frequencies involved are, as expected, the proper frequency $\Omega=\sqrt{G/M}$ of the probe and
those of the bath $\mu_-$ and $\mu_+$, related to the on--site pinning potential and to the nearest--neighbor coupling
as it is well known. In the small $\alpha$ regime, however, we notice that the proper frequency of the oscillator is
shifted to $\Omega(\alpha)$, whose first order correction is proportional to $\alpha\gamma^{-1}=\alpha'/M$, while an
additional oscillation occurs at $\rho(\alpha)\approx\mu_+$, whose smallest degree correction is proportional to the
square of $\alpha\gamma=\alpha'/m$ (see \eqref{eq:poletto} below). We point out, moreover, that the parameter $\gamma$
affects the value of $r_2(\alpha)$ in first approximation, but not that of $r_1(\alpha)$, nor $\phi(\alpha)$ or
$\xi(\alpha)$ (see Sec.~\ref{subsec:comp} and Th.~\ref{th:inband}). }
\end{rmk}

As we already observed, since $N\gg1$ one can expect that the state of the bath will be essentially unchanged by the
interaction with the probe, even in the resonant case. This should allow us to describe the system as a probe
interacting with a stationary stochastic process. We briefly explore this idea in the following subsection.

\subsection{Stochastic Thermostat} \red{The equation of motion for $Q(t)$ can also be expressed as the solution of an
integro-differential equation involving only $Q(s)$, with $s\leq t$, and an external forcing $F(t)$. This is the content
of the following Lemma, whose proof is as easy consequence of \eqref{eq:sol}.}

\red{\begin{lemma}
The solution $Q(t)$ corresponding to the Hamiltonian \eqref{eq:Ham1} satisfies the equation
\begin{equation}\label{eq:int-diff}
    \ddot Q(t)+\overline\Omega^2Q(t)+\alpha\gamma\int_0^t dt' \left(\ddot Q(t')+\Omega^2 Q(t')\right) \Gamma_N(t-t') =\alpha F_N(t)\ ,
\end{equation}
where
\begin{equation}\label{eq:forzante-diss}
F_N(t) = \frac{1}{\sqrt N}\sum_{j=0}^N \eta_j\left(q_j(0)\cos(\omega_j t)+\frac{p_j(0)}{\omega_j}\sin(\omega_j 
t)\right)\ , \quad\hbox{and}\qquad \Gamma_N(t)=\frac1N \sum_{j=0}^N\eta_j^2\frac{\sin(\omega_j t)}{\omega_j}\ .
\end{equation}
\end{lemma}
}

\red{ Notice here that the forcing is given by the unperturbed ({\sl i.e.} $\alpha=0$) motion of the bath. 
Moreover the dissipating integral term and the forcing satisfy
\begin{equation}\label{eq:FT}
\langle F_N(t)F_N(t')\rangle = -T_B\left(1+\int_0^{t-t'}\Gamma_N (s)\,ds\right)\ .
\end{equation}
The relationship in \eqref{eq:FT} can be seen as the analog of the classical fluctuation--dissipation theorem for our 
case.\footnote{\red{Equation \eqref{eq:int-diff} and the fluctuation--dissipation
relations should be compared with the modified Langevin equation for the system in its usual form (see \cite{Ku}).
Observe though that to obtain such an equation one needs to modify the initial distribution for the bath and change the definition of the forcing $F_N$ in \eqref{eq:forzante-diss}, see \cite{Weiss}.}} In order to better understand 
why the dissipation
occurs only in the resonant case, we consider briefly, instead of \eqref{eq:int-diff}, a system where the integral term 
is replaced by an instantaneous dissipation, thus providing a much simpler picture, at the price of violating the 
fluctuation--dissipation relation. This means considering the equation}
\begin{equation}\label{eq:stocastica}
\ddot Q(t)+\overline\Omega^2Q(t) +2\dam \dot Q(t) = \alpha F_N(t)\ ,
\end{equation}
where $\alpha>0$, while the damping constant $\dam\ge 0$ is so chosen as to have the energy of the probe stay finite for
$t, s\to \infty$. The behavior of $\dam$ as a function of $\alpha$ has to be determined imposing that the correlations
of the solutions are uniformly bounded in $\alpha$. \red{We are particularly interested in the case in which the
forcing is given by \eqref{eq:forzante-diss} and the initial values $q_j(0)$ and $p_j(0)$ are distributed according to
\eqref{eq:dens_prob}.}

Taking the limit when $N\to \infty$, under suitable regularity properties, the time  correlations of the limiting 
Gaussian process $F(t)=\lim_{N\to\infty}F_N(t)$ obey
\begin{equation}\label{eq:correl_stocastica}
\langle F(t)F(s)\rangle =
\int_{\mu_-}^{\mu_+}g(\omega)\cos\left(\omega(t-s)\right)\,d\omega\ ,
\end{equation}
\red{for a suitable $g(\omega)$. We observe that, in the case where $F(t)$ is the limit of \eqref{eq:forzante-diss}, we 
get }
\begin{equation}\label{eq:Gtildeg}
g(\omega)= \frac{1}{\red{\omega^2\sqrt{(\mu_+-\omega)(\omega-\mu_-)}}}%/\omega'(\omega)
%\qquad\hbox{with}\qquad 
%\omega'=\lim_{N\to \infty}N (\omega_{j+1}-\omega_j)\ . 
\ .
\end{equation}

We will study the correlations 
\begin{equation}\label{eq:Cs}
C_{\alpha,\dam,N}(s,t)=\langle Q(t)Q(s)\rangle\ ,
\end{equation}
for the system \eqref{eq:stocastica}, where $\langle\cdot\rangle$ represents now the average with respect to the 
forcing $F$ and initial
distribution on $Q$ and $P$, see \eqref{eq:stocastica}. We will mostly focus on 
the limit for $N\to \infty$ setting
\begin{equation}\label{eq:Csinf}
C_{\alpha,\dam}(s,t):=\lim_{N\to\infty}C_{\alpha,\dam,N}(s,t)\, .
\end{equation}
We will not discuss the convergence to this limit, which is similar to the
Hamiltonian case, but only its limiting value \red{for $N\to \infty$, and how it depends on $\mu_-$, $\mu_+$ and on the function $g(\omega)$}. Moreover,
to compare with the results in Section \ref{sec:model}, we will assume that $\alpha$ and $\beta$ are small.

\begin{rmk}\label{rmk:Markov}
\emph{Formula \eqref{eq:correl_stocastica} for the correlations of the forcing shows that the Gaussian process $F(t)$ is
stationary and its correlation cannot decay exponentially, as they are the Fourier transform of a non--analytic
function. This implies that the forcing cannot be a Markov process, apart for the limiting case when $\mu_-\to 0$ and
$\mu_+\to \infty$ (where, for $g$ constant, it represents a white noise).}
\end{rmk}

As in Section \ref{sec:model}, the results of our analysis depend on whether we are in a non resonant case
($\overline\Omega\not\in[\mu_-,\mu_+]$) or resonant case ($\overline\Omega\in[\mu_-,\mu_+]$). In the non resonant case, when $N=+\infty$,
the contribution of the stochastic forcing to $C_{\alpha,\dam}(s,t)$ vanishes with $\alpha$, uniformly in $s$ and $t$,
for any $\dam\ge 0$. In this sense we can say that the probe does not thermalize.

In the resonant case, choosing $\dam\propto\alpha^2$, a contribution appears that stays bounded away for 0 for vanishing
$\alpha$, while it diverges for any fixed $\alpha$, when $\dam\to 0$. This contribution can be interpreted as the 
probe thermalizing with a thermostat with temperature $T_B=\alpha^2\pi g(\Omega_\dam)/4\dam$, where $\Omega_\dam^2=\overline\Omega^2-\dam^2$

\begin{theo}\label{th:stoc} 
Let $0\leq\dam<\overline\Omega$ be such that
$\dam\le\max_\pm | \overline\Omega-\mu_\pm|/2$ and assume that \red{$g(\omega)\sqrt{(\mu_+-\omega)(\omega-\mu_-)}$} is analytic for $\mu_-\le \Re \omega\le \tilde \mu_+$ and $|\Im
\omega|\le 1$ . We have that the correlation functions 
$C_{\alpha,\dam}(s,t)$, see \eqref{eq:Csinf}, satisfy
\begin{equation}\label{eq:Cdstimastoc}
\begin{split}
C_{\alpha,\dam}(s,t)=&\frac{T_Pe^{-\dam(t+s)}}{\Omega_\dam^2}\left(
\cos(\Omega_\dam(t-s))-\frac{\beta^2}{\Omega_\dam^2}\cos(\Omega_\dam (t+s)) 
+\frac{\beta\Omega_\dam}{\Omega^2}\sin(\Omega_\dam(t+s))\right) \\ &+\frac{\alpha^2\pi}{4\dam\Omega_\dam^2}
I_\mu(\Omega_\dam)g(\Omega_\dam)\left(e^{-\dam|t-s|}-e^{-\dam(t+s)}\right)\cos\left(\Omega_\dam(t-s)\right)
+\alpha^2K(s,t)\ ,
\end{split}
\end{equation}
where $I_\mu(\overline\Omega)=1$  if $\overline\Omega\in(\mu_-,\mu_+)$, 0 otherwise while $K(s,t)$ is uniformly bounded in $\alpha$, 
$\dam$, $s$, and $t$.
\end{theo}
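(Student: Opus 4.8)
The plan is to solve the linear stochastic equation \eqref{eq:stocastica} explicitly, compute the correlation $C_{\alpha,\dam,N}(s,t)$ as a sum of a ``homogeneous'' part coming from the random initial data $(Q(0),P(0))$ and a ``forced'' part coming from $F_N$, and then take $N\to\infty$. First I would write the Green's function for the damped oscillator, $g_\dam(t)=e^{-\dam t}\sin(\Omega_\dam t)/\Omega_\dam$ for $t\ge 0$ (and $0$ for $t<0$), so that $Q(t)=\dot g_\dam(t)Q(0)+g_\dam(t)P(0)+\alpha\int_0^t g_\dam(t-u)F_N(u)\,du$. Since $F_N$ is independent of the initial data and both are mean zero, $C_{\alpha,\dam,N}(s,t)$ splits as $C^{\mathrm{hom}}_{\dam}(s,t)+\alpha^2 C^{\mathrm{for}}_{\dam,N}(s,t)$. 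The homogeneous term uses only $\langle Q(0)^2\rangle=T_P/\Omega^2$, $\langle P(0)^2\rangle=T_P$, $\langle Q(0)P(0)\rangle=0$, and a short computation with the product $(\dot g_\dam(s)\dot g_\dam(t)/\Omega^2+g_\dam(s)g_\dam(t))\cdot T_P$ together with the trig identity $\cos(\Omega_\dam s)\cos(\Omega_\dam t)+\ldots$ reproduces exactly the first line of \eqref{eq:Cdstimastoc}, with no $N$-dependence and no error term; this is pure bookkeeping.

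The substance is the forced term. Using \eqref{eq:correl_stocastica_somma} one gets $\alpha^2 C^{\mathrm{for}}_{\dam,N}(s,t)=\alpha^2\frac1N\sum_{j=1}^N G(\omega_j)\int_0^s\!\!\int_0^t g_\dam(s-u)g_\dam(t-v)\cos(\omega_j(u-v))\,du\,dv$, because the cosine/sine cross terms in $\langle F_N(u)F_N(v)\rangle$ collapse to $\frac1N\sum_j G(\omega_j)\cos(\omega_j(u-v))$. As $N\to\infty$ the Riemann sum $\frac1N\sum_j G(\omega_j)(\cdot)$ becomes $\int_{\mu_-}^{\mu_+}\tilde g(\omega)(\cdot)\,d\omega$ with $\tilde g=G/\omega'$ as in \eqref{eq:Gtildeg}; the convergence is uniform on compacts because the inner double integral is bounded by $\|g_\dam\|_{L^1}^2$. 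So I am left to analyze
\[
\Phi_\dam(s,t):=\int_{\mu_-}^{\mu_+}\tilde g(\omega)\,J_\dam(s,t,\omega)\,d\omega,\qquad
J_\dam(s,t,\omega)=\int_0^s\!\!\int_0^t g_\dam(s-u)g_\dam(t-v)\cos(\omega(u-v))\,du\,dv.
\]
The cleanest route is to evaluate $J_\dam$ by writing $\cos(\omega(u-v))=\Re e^{\ii\omega(u-v)}$ and using that $\int_0^t g_\dam(t-v)e^{-\ii\omega v}\,dv$ is, up to boundary terms that decay like $e^{-\dam t}$, equal to $\hat g_\dam(\omega)e^{-\ii\omega t}$ where $\hat g_\dam(\omega)=1/(\Omega^2-\omega^2-2\ii\dam\omega)$ is the transfer function. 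Thus $J_\dam(s,t,\omega)=\Re\big[\hat g_\dam(\omega)\overline{\hat g_\dam(\omega)}\,e^{\ii\omega(t-s)}\big]+(\text{terms }O(e^{-\dam t})+O(e^{-\dam s}))=|\hat g_\dam(\omega)|^2\cos(\omega(t-s))+(\text{decaying})$. Plugging the leading piece into $\Phi_\dam$ gives $\int_{\mu_-}^{\mu_+}\tilde g(\omega)|\hat g_\dam(\omega)|^2\cos(\omega(t-s))\,d\omega$, and the $e^{-\dam(t+s)}$ term in \eqref{eq:Cdstimastoc} is precisely what the boundary terms produce after the same Riemann-sum limit.

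The main obstacle, and the one place real work is needed, is extracting the $e^{-\dam|t-s|}$ behavior and the constant $\frac{\pi}{4\dam}I_\mu(\Omega_\dam)\tilde g(\Omega_\dam)$ from $\int_{\mu_-}^{\mu_+}\tilde g(\omega)|\hat g_\dam(\omega)|^2\cos(\omega(t-s))\,d\omega$. Here $|\hat g_\dam(\omega)|^2=1/\big((\Omega^2-\omega^2)^2+4\dam^2\omega^2\big)$ has, for small $\dam$, two sharp Lorentzian-type peaks near $\omega=\pm\Omega_\dam$ of width $O(\dam)$ and height $O(\dam^{-2})$, so mass $O(\dam^{-1})$. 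The strategy is a residue/contour computation: write $\cos(\omega(t-s))=\frac12(e^{\ii\omega(t-s)}+e^{-\ii\omega(t-s)})$, extend the $\omega$-integral using the assumed analyticity of $g$ (hence of $\tilde g$ away from $\mu_\pm$) and the analyticity of $h$ in \eqref{eq:omegap}, and push the contour for the $e^{\ii\omega(t-s)}$ piece (when $t>s$) into the upper half plane, picking up the pole of $|\hat g_\dam|^2$ at $\omega=\Omega_\dam+\ii\dam+O(\dam^2)$. The residue there yields $\frac{\pi}{4\dam\Omega_\dam}\tilde g(\Omega_\dam)e^{-\dam(t-s)}e^{\ii\Omega_\dam(t-s)}$ up to $O(\alpha)$, and taking the real part and symmetrizing in $t\leftrightarrow s$ gives $\frac{\pi}{4\dam\Omega_\dam^2}\tilde g(\Omega_\dam)e^{-\dam|t-s|}\cos(\Omega_\dam(t-s))$. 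The deformed-contour remainder is uniformly bounded (no $\dam^{-1}$ blow-up) because the contour now stays a fixed distance from the poles; the branch points at $\mu_\pm$ must be handled by small detours, contributing pieces that are bounded uniformly in $\dam,s,t$ — these, together with the $\omega=-\Omega_\dam$ image pole contribution (which cancels or is absorbed) and the earlier boundary terms, make up the $\alpha^2 K(s,t)$ error. The hypothesis $\dam\le\max_\pm|\Omega-\mu_\pm|/2$ guarantees the relevant pole $\Omega_\dam+\ii\dam$ stays inside the strip where $g$ is analytic and away from $\mu_\pm$, so the residue extraction is legitimate and $I_\mu(\Omega_\dam)$ correctly records whether that pole lies over the band. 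Assembling the homogeneous line, the leading forced term, and collecting everything else into $K(s,t)$ yields \eqref{eq:Cdstimastoc}.
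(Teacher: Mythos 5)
Your proposal follows essentially the same route as the paper: solve \eqref{eq:stocastica} by Duhamel, split the correlation into the homogeneous part (first line of \eqref{eq:Cdstimastoc}) plus the forced part, pass to the $N\to\infty$ Riemann--sum limit, and extract the $\tfrac{\pi}{4\dam}$ resonant contribution by deforming the $\omega$-integral past the width-$\dam$ pole at $\Omega_\dam+\ii\dam$ while treating the finite-time boundary terms (which yield the $-e^{-\dam(t+s)}$ piece) and the square-root endpoints separately --- which is precisely what the paper's Corollary~\ref{cor:duhamel} packages. The only quibbles are bookkeeping: the coefficient of $Q(0)$ in your homogeneous solution should be $\dot g_\dam(t)+2\dam\, g_\dam(t)$ rather than $\dot g_\dam(t)$ (otherwise the sign of the $\sin(\Omega_\dam(t+s))$ term comes out wrong), and your intermediate single-pole residue constant should be $\pi/(8\dam\Omega_\dam^2)$, although the final constant you state agrees with \eqref{eq:Cdstimastoc}.
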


\begin{rmk}\label{rmk:stoc_senza_smorzamento}
\emph{In the non resonant case, \eqref{eq:Cdstimastoc}  is valid also for $\beta=0$, i.e.,
\[
 C_{\alpha,0}(s,t)=\frac{T_P}{\overline\Omega^2}\cos(\overline\Omega(t-s)) + \alpha^2K_\delta(s,t)\ .
\]}
\end{rmk}

\begin{proof}
  \red{We solve \eqref{eq:stocastica} via Duhamel formula and take expectations with respect to initial data, taking into account \eqref{eq:correl_stocastica}, so that we get}
\begin{equation}\label{eq:contr}
  \begin{split}
 C_{\alpha,\dam}(s,t)&=\frac{T_Pe^{-\dam(t+s)}}{\Omega_\dam^2}\left(
 \cos(\Omega_\dam(t-s))-\frac{\beta^2}{\Omega^2}\cos(\Omega_\dam
 (t+s))  +\frac{\beta\Omega_\dam}{\Omega^2}\sin(\Omega_\dam(t+s))\right) \\
&+ \alpha^2\int_{\mu_-}^{\mu_+}d\omega g(\omega)\int_0^td\tau\int_0^sd\sigma 
 \cos(\omega(t-s-\tau+\sigma))e^{-\dam\tau}\cos(\Omega_\dam \tau) e^{-\dam\sigma}\cos(\Omega_\dam
 \sigma)\ .
  \end{split}
\end{equation}
In the non resonant case it is easy to see that the contribution in the second line of \eqref{eq:contr} is of order 
$\alpha^2$ uniformly in $\dam$, $t$, and $s$, see for example the discussion in Subsection \ref{subsec:Ctnr} below. In 
the resonant case, we can apply Corollary~\ref{cor:duhamel} to complete the proof of the theorem.
\end{proof}

\begin{rmk}\label{rmk:stoc_rumore_bianco}\emph{
Notice that the case of the white noise forcing (corresponding to $\mu_-\to 0$, $\mu_+\to \infty$ and constant $g$)
can be seen as a special case of the previous formula, where the integral over $\omega$ provides  a Dirac
delta term.}
\end{rmk}

\section{Solutions for the equation of motion: the Laplace transform}\label{sec:solution}
 
In order to find solutions for the evolution equations in a compact way, we write $q_{N+1}$, $p_{N+1}$ and
$\omega_{N+1}$ for, respectively, $Q$, $P$ and $\Omega$, and introduce the vectors $x_i=(q_i,p_i)$, for
$i=0,\ldots,N+1$, as well as the vector $X=\{x_i\}_{i=0,\ldots,N+1}$. Then the equations of motion have the compact
form
\begin{equation}\label{eq:evo}
\dot X= L X-\alpha  B X \ ,
\end{equation}
where we have introduced the $(2N+4)\times(2N+4)$ matrices $L$ and $B$. In order to simplify computations, we write such
matrices as composed by $(N+2)\times(N+2)$ square blocks $L_{ij}$, $B_{ij}$ of size $2\times 2$, indexed by $i,j$
ranging from  0 to $N+1$. We have then
\[
L_{ij}=\delta_{ij} \begin{pmatrix} 0 & 1\\
-\omega^2_i & 0 \end{pmatrix}\ ,\quad B_{ij}=\e_i\e_j\eta_i\eta_j
\begin{pmatrix} 0 & 0\\
1 & 0 \end{pmatrix}\ ,
\]
where we introduced the factor
\[
\e_i=\left\{\begin{array}{cc}1/\sqrt \gamma & i=N+1\\
-\sqrt{\frac{\gamma}{N}}& i\neq N+1\end{array}\right.\ .
\]
while $\eta_i$ is defined in \eqref{eq:normal} for $i\leq N$ with $\eta_{N+1}=1$. An implicit solution is given by
\begin{equation}\label{eq:impl}
X(t) = G(t) X(0) -\alpha\int_0^t dt'\,G(t-t')
B X(t')\ ,
\end{equation}
with $A$ denoting the block diagonal matrix
\[
G_{ij}(t) = \delta_{ij} \begin{pmatrix} \cos(\omega_i t) &
\frac{1}{\omega_i}\sin(\omega_i t)\\
-\omega_i \sin(\omega_i t)& \cos(\omega_i t) \end{pmatrix}\ .
\]

We will be mainly interested in the long term behavior of the solutions, so that we pass to the Laplace transform of
each term in \eqref{eq:impl}, getting
\[
\tilde X(\lambda) = \tilde G(\lambda) X(0) -\alpha \tilde G(\lambda)B
\tilde X(\lambda)\ ,
\]
where
\[
\tilde F(\lambda) = \int_0^{+\infty} e^{-\lambda t} F(t) \,dt,
\]
for each component of the matrix $F$. The latter equation can readily be solved with respect to $\tilde X(\lambda)$,
obtaining
\[
\tilde X(\lambda) =\left(1+\alpha \tilde G(\lambda)B\right)^{-1}
\tilde G(\lambda) X(0)=  \left(\tilde G(\lambda)^{-1}+\alpha B\right)^{-1} X(0)\, .
\]
Observe that $B=VW^T$ with $V_{2i+1}=\eta_i\e_i$ and  $V_{2i}=0$ while $W_{2i+1}=0$ and  $W_{2i}=\eta_i\e_i$, for
$i=0,\ldots, N+1$. To compute $\left(\tilde G(\lambda)^{-1}+\alpha B\right)^{-1}$ we must solve the equation
\[
(\tilde G(\lambda)^{-1}+\alpha VW^T) X=Y\ ,
\]
that we can write as
\[
 X=\tilde G(\lambda)Y-\alpha \tilde G(\lambda) VW^T X\, .
\]
Multiplying both sides by $W^T$ gives an equation for $W^T X$ that leads immediately to
\[
\left(\tilde G(\lambda)^{-1}+\alpha B\right)^{-1}=\tilde G(\lambda)
\left(I-\frac{\alpha B \tilde G(\lambda)  }{1+\alpha
W^T\tilde G(\lambda) V}\right)\ ,
\]
where $I$ is the $2(N+2)\times2(N+2)$ identity matrix. Hence we get the solution
\begin{equation*}
\begin{split}
\left(\tilde X(\lambda)\right)_i
&=\tilde G_{ii}(\lambda)\sum_{l=0}^{N+1}\left(\delta_{il}-\alpha
\frac{
%\e_i\e_l\eta_i\eta_l
B_{il} \tilde G_{ll}(\lambda)}{1+\alpha\bar
f_N(\lambda)}\right)X_l(0)\\
&= \frac{1}{\lambda^2+\omega_i^2} \sum_{l=0}^{N+1}\left(\begin{pmatrix}\lambda &
1\\ -\omega_i^2 & \lambda \end{pmatrix}\delta_{il}-\alpha
\frac{\e_i\e_l\eta_i\eta_l}{(\lambda^2+\omega_l^2)(1+\alpha\bar
f_N(\lambda))}\begin{pmatrix}\lambda& 1\\\lambda^2 &
\lambda\\
\end{pmatrix}\right)X_l(0)\ ,
\end{split}
\end{equation*}
where
\[
 \bar f_N(\lambda)=\sum_{l=0}^{N+1}
  \frac{\e^2_{l}\eta_{l}^{2}}{\lambda^2+\omega_l^2}\ .
\]
This solution can be expressed in a more explicit form as
\begin{equation}\label{eq:sol}
\begin{split}
\tilde Q(\lambda)
&=\frac{1}{D_{\alpha,N}(\lambda)}\left(\left(1+\alpha\gamma f_N(\lambda)\right)
\left(\lambda Q(0)+P(0)\right) +\frac{\alpha}{\sqrt N} \sum_{l=0}^N
\frac{\eta_l(\lambda q_l(0)+p_l(0))}{\lambda^2+\omega_l^2}\right)\ ,\\
\tilde P(\lambda)
&=\frac{1}{D_{\alpha,N}(\lambda)}\left(\left(1+\alpha\gamma f_N(\lambda)\right)
\left(\lambda P(0)-\Omega^2 Q(0)\right)-\alpha\gamma Q(0) +\frac{\alpha}{\sqrt
N} \sum_{l=0}^N
\frac{\eta_l(\lambda q_l(0)+p_l(0))}{\lambda^2+\omega_l^2}\right)\ ,
\end{split}
\end{equation}
where we have introduced the functions
\begin{equation}\label{eq:f_e_D}
\begin{aligned}
f_N(\lambda)=&
\frac{1}{N}
\sum_{j=0}^N \frac{\eta_j^{2}}{\lambda^2+\omega_j^2}\ ,\\
D_{\alpha,N}(\lambda)
=& \left(\lambda^2+\Omega^2\right)\left(1+\alpha\gamma
f_N(\lambda)\right) + \alpha\gamma^{-1}\ ,
\end{aligned}
\end{equation}
and we have used that $\tilde P(\lambda)=-Q(0)+\lambda \tilde Q(\lambda)$. Analogous expressions for $\tilde q_i$ and
$\tilde p_i$ can be obtained but would not be needed in the following.

\section{The function \matht{$f_N(\lambda)$} and its limit as \matht{$N\to\infty$}}\label{sec:f}

The function $f_N(\lambda)$ contains most of the information to understand the effect of the bath on the evolution of
the probe. In this section we study the properties of the limit of $f_N(\lambda)$ for large $N$.

Since the frequencies are distributed according to \eqref{eq:frequenze_catena}, from a minimum value $\mu_-$ to a
maximum $\mu_+ = \sqrt{\mu_-^2+4\tilde\omega^2}$, we set
\begin{equation}\label{eq:sets}
 \mathcal I:=\left[-i\mu_+, -i\mu_-\right]\cup
\left[i\mu_-, i\mu_+\right],\qquad\qquad \mathds{C}_r:=\mathds
C\backslash \mathcal I\ ,
\end{equation}
and observe that the limit
\[
 f_+(\lambda):=\lim_{N\to\infty}f_N(\lambda)
\]
is well defined for $\lambda\in \mathds{C}_r$ and it is obtained by replacing the sum in \eqref{eq:f_e_D} with an
integral, that is
\begin{equation}\label{eq:def_f_1}
f_+(\lambda)=\frac1{2\pi}
\int_{-\pi}^{\pi} \frac{dx}{\lambda^2+\mu_-^2+2\tilde\omega^2\left(1-\cos
x\right)}\ ,
\end{equation}
while, still for $\lambda\in \mathds C_r$,
\begin{equation}\label{eq:def_D_alpha} \lim_{n\to\infty} D_{\alpha,N}(\lambda)=
(\lambda^2+\Omega^2)(1 +\alpha\gamma f_+(\lambda))
+\alpha\gamma^{-1}:=D_{\alpha}(\lambda)\ .
\end{equation}

In the remainder of this section we will first find an exact expression for $f_+(\lambda)$ and then find an estimate of
the rate of convergence of $f_N(\lambda)$ to $f_+(\lambda)$.

\subsection{Exact expression for \matht{$f_+(\lambda)$}}\label{subsec:exact}

By changing the integration variable in \eqref{eq:def_f_1} we can write
\begin{equation}\label{eq:def_f_2}
f_+(\lambda)=  \frac{1}{2\pi i}\int_{|z|=1} \frac{z^{-1}
  \,dz}{\lambda^2+\mu_-^2+2 
  \tilde\omega^2- \tilde\omega^2(z+z^{-1})} =  \frac{i}{2\pi}\int_{|z|=1}
\frac{dz}{\tilde\omega^2(z-p_+(\lambda))(z-p_-(\lambda))}\ ,
\end{equation}
where $p_\pm(\lambda)$ are the roots of
\begin{equation}\label{poly}
z^2-\tfrac{\lambda^2+\mu_-^2+2 \tilde\omega^2}{ \tilde\omega^2}
z+1=0\, .
\end{equation}
For $\lambda$ real, we write them as
\begin{equation}\label{eq:radici}
p_\pm(\lambda) := 1+\frac{\lambda^2+\mu_-^2}{2\tilde
\omega^2}\pm\sqrt{\left(
  1+\frac{\lambda^2+\mu_-^2}{2\tilde \omega^2}\right)^2 -1}\, .
\end{equation}
Equation \eqref{eq:def_f_2} implies that $p_\pm(\lambda)$ lie on the unit circle if and only if $\lambda\in\mathcal I$,
see \eqref{eq:sets}, while, in general, $p_+(\lambda)p_-(\lambda)=1$. We can thus extend \eqref{eq:radici} to
$\lambda\in \mathds C_r$ by calling $p_+(\lambda)$, the root of \eqref{poly} with $|p_+(\lambda)|\geq1$. For $\lambda\in
\mathcal I$, $\lambda=iy$, we set $p_+(\lambda)=\lim_{\e\to 0^+} p_+(iy+\e)$ and $p_-(\lambda)=\lim_{\e\to 0^-}
p_-(iy+\e)$.\footnote{This corresponds to the fact that the imaginary part of $p_+(iy)$ has the same sign of $y$, while
	that of $p_-(iy)$ the opposite sign.} 
	
Finally, from Cauchy integral formula we get
\begin{equation}\label{eq:f_esplicita}
f_+(\lambda)= \frac{1}{\sqrt{(\lambda^2+\mu_-^2)(\lambda^2+\mu_+^2)}}\ .
\end{equation}

The behavior of $f_+$ near the imaginary axis is of particular interest. For $k$ real with $|k|< \mu_-$ or $|k|>
\mu_+$ we have
\begin{equation}\label{eq:nonb}
f_+(ik)=\frac{\sgn(\mu_--k)}
{\sqrt{(\mu_-^2-k^2)(\mu_+^2-k^2)}}\ ,
\end{equation}
while for $\mu_-<k<\mu_+$
\begin{equation}\label{eq:inb}
f_+(0^\pm+ik)=\mp\frac{i}{\sqrt{(k^2-\mu_-^2)(\mu_+^2-k^2)}}\ .
\end{equation}
and clearly $f_+(\lambda^*)=f_+(\lambda)^*$.

Observe that, calling $f_-(\lambda)=-f_+(\lambda)$, and
\[
\mathcal
F_+:=\{(\lambda,z)\,|\, z=1/f_+(\lambda)\}\ ,\qquad\qquad \mathcal F_-:=
\{(\lambda,z)\,|\, z=1/f_-(\lambda)\}\ .
\]
then $\mathcal F=\mathcal F_+\cup \mathcal F_-$ is a Riemann surface with 4 branching points of order 2 while $f_+$ and
$f_-$ form a meromorphic function $f$ on $\mathcal F$ with 4 poles of order 1.

\subsection{Comparison between \matht{$f_N(t)$} and \matht{$f_+(t)$}}\label{subsec:fNf+}

In this subsection we show that $f_+(\lambda)$ approximates $f_N(t)$ with an error that vanishes exponentially in $N$ 
for $\lambda$ away from $\mathcal I$. The analysis is based on the fact that $f_N(\lambda)$ can be seen as the
application of the trapezoidal rule with step $2\pi/N$ to compute the integral defining $f_+(\lambda)$. We can thus 
apply the standard methods to evaluate the error associated to the trapezoidal rule when the integrand is analytic, see 
for example \cite{DR1984} section 4.6.

Calling $w_j=e^{ix_j}$ with $x_j=j\pi/N$ and following \eqref{eq:def_f_2}, for $\lambda\in\mathds C_r$ we can write
\[
\begin{aligned}
f_N(\lambda)=&-\frac{1}{N}\sum_{j=-N+1}^N
\frac{w_j}{\tilde \omega^2(w_j-p_+(\lambda))(w_j-p_-(\lambda))}\\
=&\frac 1{2N\pi i}\int_{|z|=1+\epsilon}h(z,\lambda)\sum_{j=-N+1}^N
\frac{w_j}{z-w_j}dz-\frac 1{2 N \pi
i}\int_{|z|=1-\epsilon}h(z,\lambda)\sum_{j=-N+1}^N
\frac{w_j}{z-w_j}dz\ ,
\end{aligned}
\]
where $p_+$ and $p_-$ are defined after \eqref{eq:def_f_2} while
\[
h(z,\lambda)=-\frac{1}{\tilde \omega^2(z-p_+(\lambda))(z-p_-(\lambda))}
\]
and $\epsilon<\min\{1-|p_-(\lambda)|,|p_+(\lambda)|-1\}$. Choosing $\delta>|p_+(\lambda)|$, so that
$\delta^{-1}<|p_-(\lambda)|$, and observing that
\[
  \frac{w_j}{w_j-p_+(\lambda)} =1-\frac{w_{-j}}{w_{-j}-p_-(\lambda)
  }\ ,
\]
we get
\[
 \begin{aligned}
 f_N(\lambda)=&f_+(\lambda)\left(1 -
\frac 2N\sum_{j=N+1}^N \frac{w_j}{p_+(\lambda) - w_j }
\right )+\\
&\frac 1{2N\pi i}\int_{|z|=\delta}h(z,\lambda)\sum_{j=-N+1}^N
\frac{w_j}{z-w_j}dz-
\frac 1{2N\pi i}\int_{|z|=\delta^{-1}}h(z,\lambda)\sum_{j=-N+1}^N
\frac{w_j}{z-w_j}dz\ .
\end{aligned}
\]
Letting $\delta\to\infty$ we obtain
\[
\frac{ f_N(\lambda)-f_+(\lambda)}{f_+(\lambda)}=\frac {2}N\sum_{j=-N+1}^N
\frac{w_j}{p_+(\lambda)-w_j}:= 2G_N(p_+(\lambda))\ .
\]
For $|p|>1$, we can write
\begin{equation}\label{eq:fourier}
 G_N(p)=\sum_{n=1}^\infty\frac1{Np^n}\sum_{j=-N+1}^N
w_j^n=\sum_{n=1}^\infty\frac{(-1)^n}{Np^n}\sum_{j=0}^{2N-1}e^{i\pi jn/N}=
\frac{p^{-2N}}{1-p^{-2N}}.
\end{equation}
where we used that $\frac 1N\sum_{j=0}^{2N-1}e^{i\pi jn/N}=\delta_{n,2lN}$. Thus we get
\begin{equation}\label{eq:approssimazione_f}
f_N(\lambda)-f_+(\lambda)=
f_+(\lambda)\frac{1}{p_+(\lambda)^{2N}-1}.
\end{equation}

\section{The one time correlation function \matht{$C_{\alpha,N}(0,t)$}}\label{sec:decay1}

We are now ready to study the long time behavior of the one time correlation $C_{\alpha,N}(0,t)=\langle
Q(0)Q(t)\rangle$. The results of Sections \ref{sec:solution} and \ref{sec:f} give us a good control of the Laplace
transform $\tilde Q(\lambda)$ and $\tilde P(\lambda)$ of $Q(t)$ and $P(t)$. We can thus define
\begin{equation}\label{eq:C1}
\tilde C^1_{\alpha,N}(\lambda) := \int_0^\infty e^{-\lambda t} C_{\alpha,N}(0,t)\,dt =
\langle Q(0)\tilde Q(\lambda)\rangle_N\ .
\end{equation}
From \eqref{eq:sol} and the definition of the probability density \eqref{eq:dens_prob}, we get
\begin{equation}\label{eq:def_g^1_N}
	\tilde C^1_{\alpha,N}(\lambda) = \frac{\lambda T_P}{\Omega^2}\frac{(1+\alpha \gamma
		f_N(\lambda))}{\left(\lambda^2+\Omega^2\right)\left(1+\alpha\gamma
		f_N(\lambda)\right) + \alpha\gamma^{-1}}=:\frac{\lambda T_P}{\Omega^2}
	g^1_{\alpha,N}(\lambda)\ .
\end{equation}
We can then recover $C_N(0,t)$ from $\tilde C^1_{\alpha,N}(\lambda)$ via the anti-Laplace transform. That is we can
write 
\begin{equation}\label{eq:anti}
 C_{\alpha,N}(0,t)=\frac{1}{2\pi
i}\lim_{\Lambda\to\infty}\int_{\xi-i\Lambda}^{\xi+i\Lambda}\tilde
C^1_{\alpha,N}(\lambda)e^{\lambda
t}d\lambda\, .
\end{equation}
for $\xi>0$. Observe that since $\tilde C^1_{\alpha,N}(\lambda)$ has no singularities with positive real part (see also
below), the integral in \eqref{eq:anti} does not depend on $\xi$, for $\xi>0$.

\begin{rmk}\emph {The integral in \eqref{eq:anti} can only be defined as an improper integral since
$\tilde C^1_{\alpha,N}(0,\lambda)=O(\lambda^{-1})$ for $\lambda$ large. Observe though that
$C_{\alpha,N}(0,0^+)=T_P/\Omega^2$ (see \eqref{eq:dens_prob}) and $\lim_{\lambda\to\infty}\lambda \tilde
C^1_{\alpha,N}(\lambda)=T_P/\Omega^2$. Calling $H(t)$ the Heaviside function, we have that
$c_N(0,t)=C_{\alpha,N}(0,t)-H(t)T_P/\Omega^2$ is a continuous and piecewise differentiable function of $t\in\mathds{R}$
whose Laplace transform is $\tilde c_N(\lambda)=\tilde C^1_{\alpha,N}(\lambda)-\lambda^{-1}T_P/\Omega^2=O(\lambda^{-2})$
for $\lambda$ large. Thus we have
\[
 c_N(0,t)=\frac{1}{2\pi
i}\int_{\xi-i\infty}^{\xi+i\infty}\tilde
c_N(\lambda)e^{\lambda
t}d\lambda\ ,
\]
where the integral is now well defined. To avoid overburdening the notation, we will work with \eqref{eq:anti} without
explicitly indicating the limit as $\Lambda\to\infty$.}
\end{rmk}

As already observed, computing the anti-Laplace transform in \eqref{eq:anti} is made difficult by the singularities of
$f_N$, see also Remark \ref{rmk:shift} below. Taking the limit for $N\to\infty$ in \eqref{eq:anti} we can define
\begin{equation}\label{eq:antiinf}
 C_\alpha(0,t):=\frac{1}{2\pi
i}\int_{\xi-i\infty}^{\xi+i\infty}\tilde
C^1_\alpha(\lambda)e^{\lambda t}d\lambda\ ,
\end{equation}
where, for $\lambda\not\in\mathcal I$,  we set $\tilde C^1_\alpha(\lambda)=\lim_{N\to\infty}\tilde
C^1_{N,\alpha}(\lambda)$ and we obtain
\begin{equation}\label{eq:def_g_N}
\tilde C^1_\alpha(\lambda) = \frac{\lambda T_P}{\Omega^2}\frac{(1+\alpha\gamma
  f_+(\lambda))}{\left(\lambda^2+\Omega^2\right)\left(1+\alpha\gamma
f_+(\lambda)\right) + \alpha\gamma^{-1}}=:\frac{\lambda T_P}{\Omega^2}
g^1_\alpha(\lambda)\ .
\end{equation}
In this section we will first use the results in Section \ref{sec:f} on the relation between $f_N$ and $f_+$ to show
that $C_\alpha(0,t)$ approximates very well $C_{N,\alpha}(0,t)$ for $t$ shorter than $N$. We will then use our knowledge
of the function $f_+$ to obtain quantitative estimates on $C(0,t)$. As a preliminary step, we need to investigate
singularities and asymptotic behavior of $g^1_N(\lambda)$ and $g^1(\lambda)$ .

\subsection{Properties of \matht{$g^1_{\alpha,N}(\lambda)$} and
\matht{$g^1_\alpha(\lambda)$}}\label{subsec:galpha}

In this subsection we study the zeros, poles and asymptotic behavior of $g_N^1$ and $g^1$. We first look at the general
properties and then specialize our analysis to the resonant and non resonant cases separately.

We first observe that
\[
 \lim_{\lambda\to\infty} g^1_{\alpha,N}(\lambda)\lambda^2=1\qquad{\rm and}\qquad
  \lim_{\lambda\to\infty} g^1_\alpha(\lambda)\lambda^2=1
\]
and that $g^1_{\alpha,N}$ converges, as $N\to\infty$, to $g^1_\alpha$ in the space of analytic functions on
$\mathds{C}_r$.

To study the poles and zeros of $g^1_{\alpha,N}$, we observe that we can write $f_N=h_N/\bar h_N$ with
\[
 h_N(\lambda)=\frac 1N
\sum_{l=0}^N\prod_{j\not=l}(\lambda^2+\omega_j^2)\ ,\qquad\qquad \bar
h_N(\lambda)=\prod_{j}\eta_j^2(\lambda^2+\omega_j^2)\ ,
\]
so that
\[
 g^1_{\alpha,N}(\lambda)=\frac{\bar h_N(\lambda)+\alpha\gamma
h_N(\lambda)}{(\lambda^2+\Omega^2+\alpha\gamma^{-1})\bar h_N(\lambda)+
\alpha\gamma(\lambda^2+\Omega^2)h_N(\lambda)}=:\frac{s_{\alpha,N}(\lambda)}{r_{\alpha,N}(\lambda)}\, .
\]
Moreover $r_{\alpha,N}(\lambda)$ can be written as
\[
 r_{\alpha,N}(\lambda)=(\lambda^2+\Omega^2)\bar h_N(\lambda)+ \tilde
 h_{\alpha,N}(\lambda)\ ,
\]
where
\[
 \tilde h_{\alpha,N}(\lambda)=\alpha\gamma^{-1}\bar
h_N(\lambda)+\alpha\gamma(\lambda^2+\Omega^2)h_N(\lambda).
\]
Observe that $\bar h_N(i\xi)$ and $\tilde h_{\alpha,N}(i\xi)$ are real if $\xi\in\mathds R$. Moreover
$(\Omega^2-\xi^2)\bar h_N(i\xi)$ is 0 for every
$\xi\in\boldsymbol{\Omega}=\{\omega_0,\ldots,\omega_N,\omega_{N+1}:=\Omega\}$ while, still for
$\xi\in\boldsymbol{\Omega}$, we have that $\tilde h_{\alpha,N}(i\xi)$ is positive or negative depending on whether the
number of elements of $\boldsymbol{\Omega}$ smaller than $\xi$ is even or odd, respectively. Thus $r_{\alpha,N}(i\xi)$
has a zeros in each of the $N+1$ finite interval with end points on successive elements of $\boldsymbol{\Omega}$.
Finally observe that, if $\bar \xi$ is the largest element of $\boldsymbol{\Omega}$, we have $(\Omega^2-\xi^2)\bar
h_N(i\xi)\tilde h_{\alpha,N}(i\xi)<0$ for $\xi>\bar \xi$ while $(\Omega^2-\bar\xi^2)\bar h_N(i\bar \xi)=0$ and ${\rm
deg}((\Omega^2-\xi^2)\bar h_N)>{\rm deg}(\tilde h_{\alpha,N})$. Thus we have one more zero of $r_{\alpha,N}(i\xi)$ for
$\xi\in(\bar \xi,\infty)$.\footnote{This is nothing but the interlacing property for the potential of \eqref{eq:Ham1}.}
This gives $N+2$ zeros of $r_N(i\xi)$ on the positive real axis and thus $N+2$ on the negative real axis since
$r_N(i\xi)$ depends only on $\xi^2$. Observing that $r_N(\lambda)$ is a polynomial of degree $2N+4$, this implies that
all the singularities of $g^1_{\alpha,N}(\lambda)$ are on the imaginary axis. A similar argument for the zeros of
$s_{\alpha,N}(\lambda)$ tells us that they are all on the imaginary axis with one of them in each of the $N$ segments
$(\omega_j,\omega_{j+1})$, for $j=0,\ldots,N$, and one above $\omega_N=\mu_+$. Observe finally that no $\lambda$ can be
a zero of both $r_{\alpha,N}$ and $s_{\alpha,N}$, if $\alpha\not=0$.

Similarly calling $h_+(\lambda):=1/f_+(\lambda)$ we get
\[
 g^1_\alpha(\lambda)=\frac{h_+(\lambda)+\alpha\gamma}{(\lambda^2+\Omega^2+\alpha\gamma^{-1})
h_+(\lambda)+\alpha\gamma(\lambda^2+\Omega^2)}:=\frac{s_\alpha(\lambda)}{r_\alpha(\lambda)}.
\]
We observe that $s_\alpha$ and $r_\alpha$ are analytic in $\mathds C_r$ while $s_\alpha(i\xi)$ and $r_\alpha(i\xi)$ are
real for $\xi\in\mathds R\backslash i\mathcal I$. An analysis of their sign tell us that $s_\alpha(i\xi)$ has a zero for
$\xi\in(\mu_+,\infty)$. On the other hand, if $\Omega<\mu_-$, $r_\alpha(i\xi)$ has one zero for $\xi\in(\Omega,\mu_-)$
and one for $\xi\in(\mu_+,\infty)$ while, if $\Omega> \mu_+$, $r_\alpha(i\xi)$ has one zero for $\xi\in(\mu_+,\Omega)$
and one for $\xi\in(\Omega,\infty)$. Finally, if $\mu_-<\Omega<\mu_+$, $r_\alpha(i\xi)$ has one zero for
$\xi\in(\mu_+,\infty)$. Comparing with the discussion for $g^1_{\alpha,N}$ and using Hurwitz's Theorem, we see that
$g^1_\alpha$ has no other zero or pole than those listed above and their complex conjugates.

It is interesting to look for the singularities of
\begin{equation}\label{eq:g-}
g^1_{\alpha,-}(\lambda):=\frac{(1+\alpha\gamma
  f_-(\lambda))}{\left(\lambda^2+\Omega^2\right)\left(1+\alpha\gamma
f_-(\lambda)\right) + \alpha\gamma^{-1}}=\frac{(1-\alpha\gamma
  f_+(\lambda))}{\left(\lambda^2+\Omega^2\right)\left(1-\alpha\gamma
f_+(\lambda)\right) + \alpha\gamma^{-1}}\ ,
\end{equation}
that can be seen as the analytic extension of $g^1_\alpha$ on the Riemann surface $\mathcal F$, see Subsection
\ref{subsec:exact}. Combining $g^1_\alpha$ and $g^1_{\alpha,-}$ we look for solution of
\begin{equation}\label{eq:poli}
\left(\lambda^2+\overline\Omega^2\right)^2(\lambda^2+\mu_-^2)(\lambda^2+\mu_+^
2)
-\alpha^2\gamma^2(\lambda^2+\Omega^2)^2=0\ ,
\end{equation}
where we set $\overline \Omega=\sqrt{\Omega^2+\alpha\gamma^{-1}}$. Since we are interested in the $\alpha$ small regime,
we will solve \eqref{eq:poli} perturbatively.

Clearly if $\alpha=0$, $\pm i\Omega$ are solution of order 2 while $\pm i\mu_-$ and $\pm i\mu_+$ are solution of order
1. For small $\alpha$ we still have 8 solutions that can be written as $\pm i \Omega_+(\alpha)$, $\pm i
\Omega_-(\alpha)$, $\pm i \rho_+(\alpha)$ and $\pm i \rho_-(\alpha)$ where
\begin{equation}\label{eq:polone}
  \begin{split}
\Omega_\pm(\alpha)=&\overline \Omega\mp\alpha^2
\frac{f_+(i\overline\Omega)}{2\overline\Omega}+\frac{\alpha^3\gamma}{2\overline
  \Omega}f^2_+(i\overline \Omega)
\\
&+\frac{\alpha^4f_+^2(i\overline\Omega)}{4\overline
  \Omega}\left(f_+^2(i\overline \Omega)
(\mu_-^2+\mu_+^2-2\overline \Omega^2)-\frac{1}{2\overline \Omega^2}
\mp 2\gamma^2 f_+(i\overline\Omega)\right) +O(\alpha^5)\ .
\end{split}\end{equation}
while
\begin{equation}\label{eq:poletto}
\begin{aligned}
\rho_+(\alpha)=&\mu_++\frac{\alpha^2\gamma^2}{8\mu_+\tilde\omega^2}
+ O(\alpha^3)\ ,\\
\rho_-(\alpha)=&\mu_--\frac{\alpha^2\gamma^2}{8\mu_-\tilde\omega^2}+O(\alpha
^3)\ .\\
\end{aligned}
\end{equation}
Observe that, for small $\alpha$, in the non resonant case all the 8 solutions are on the imaginary axis while, in the
resonant case, $i\Omega_\pm(\alpha)$ acquires a non zero real part.

\begin{rmk}\label{rmk:shift}\emph{
Thus we see that most of the singularities of $g^1_{\alpha,N}$ are in the set $\mathcal I$ on the imaginary axis. Their
structure makes it very difficult to compute $C_{\alpha,N}(0,t)$ using \eqref{eq:anti} and shifting the integral from
$\xi>0$ to $\xi<0$. By taking the limit as $N\to\infty$ we see that $g^1_\alpha$ has 2 or 4 poles on the imaginary axis
outside $\mathcal I$ while it inherits form $f_+$ a jump discontinuity on $\mathcal I$ and square root singularities at
$\pm i\mu_-$ and $\pm i\mu_+$. Thus it will be much easier to study the behavior of $C_{\alpha}(0,t)$ using
\eqref{eq:antiinf}.}
\end{rmk}

To summarize we distinguish between the two physically relevant cases. Since $g^1_{\alpha,N}$ and $g^1_\alpha$ depend
only on $\lambda^2$, we only discuss poles in the half plane $\mathds C^+=\{z\, |\, \Im z>0\}$.

\begin{description}
\item [The non resonant case] All upper half plane poles of $g^1_{\alpha, N}$ but two are in the set $(i\mu_-,i\mu_+)$
on the imaginary axis. If $\Omega<\mu_-$ of the two remaining poles, one is in $(i\Omega,i\mu_-)$ and the other in
$(i\mu_+,i\infty)$. These poles converge to the corresponding poles of $g^1_{\alpha}$ which, for $\alpha$ small, are
given by $i\Omega_+(\alpha)\in (i\Omega,i\mu_-)$ while $i\rho_+(\alpha)\in (i\mu_+,i\infty)$. Analogously, if
$\Omega>\mu_+$, of the two remaining poles of $g^1_{\alpha,N}$, one is in $(i\mu_+,i\Omega)$, the other in
$(i\Omega,i\infty)$. Again, the corresponding poles of $g^1_\alpha$, for $\alpha$ small, are $i\rho_+(\alpha)\in
(i\mu_+,i\Omega)$ and $i\Omega_+(\alpha)\in (i\Omega,i\infty)$.

\item [The resonant case] In this case, all upper half plane poles of $g^1_{\alpha,N}$ but one are in $[i\mu_-,i\mu_+]$
and the remaining one is in $(i\mu_+,i\infty)$. This converges to the pole of $g^1_\alpha$ given by $i\rho_+(\alpha)$,
for $\alpha$ small. In this case, it is important to notice that both $\pm i\Omega_+(\alpha)$ and $\pm
i\Omega_-(\alpha)$ are poles of $g^1_{\alpha,-}$. Although they do not directly appear in $g^1_\alpha$, they are very
close to $\mathcal I$ and they will play a fundamental role in computing \eqref{eq:antiinf}.

\end{description}

\subsection{Bounds for \matht{$|C_{\alpha,N}(0,t)-C_{\alpha}(0,t)|$}}\label{subsec:bound}

We first observe that
\[
g^1_{\alpha,N}(\lambda)-g^1_\alpha(\lambda) = \alpha^2\frac{1}
{D_{\alpha,N}(\lambda)D_\alpha(\lambda)}(f_N(\lambda)-
f(\lambda))
=\alpha^2\frac{f_+(\lambda)}
{D_{\alpha,N}(\lambda)D_\alpha(\lambda)}G_N(p_+(\lambda))\ ,
\]
If we take $\lambda$ with $\Re(\lambda)>1$ we get
\[
\left| \frac{\lambda f_+(\lambda)}
{D_{\alpha,N}(\lambda)D_\alpha(\lambda)}\right|\leq\frac{K}{|\lambda|^{5}}\ ,
\]
while from \eqref{eq:radici} it follows that, still assuming $\Re(\lambda)>1$, we have $|p_+(\lambda)|\geq
k\Re(\lambda)/\tilde \omega^2$, see Remark \ref{rmk:Aa}.

Choosing $\xi=N/t$ in \eqref{eq:anti} and \eqref{eq:antiinf} we get, for $t<N$,
\begin{equation}\label{eq:N^N}
 |C_{\alpha,N}(0,t)-C_\alpha(0,t)|\leq \alpha^2
e^{N}\left(\frac{k\tilde\omega t}N\right)^{2N}
\int_{-\infty}^{\infty} \frac{K\,dx}{|N/t+ix|^5}\leq
\alpha^2 K\left(\frac{k\tilde\omega  t}N\right)^{2N}.
\end{equation}

\subsection{Asymptotic behavior of \matht{$C_\alpha(0,t)$}}\label{subsec:asym}

We can now write
\begin{equation}\label{0plus}
 C_\alpha(0,t)=\frac{1}{2\pi i}\int_{0^+-i\infty}^{0^++i\infty}\tilde
C^1_\alpha(\lambda)e^{\lambda
t}d\lambda\,.
\end{equation}

\begin{rmk}\label{rmk:00+}\emph{
Since the function $\tilde C^1_\alpha$ presents pole singularities and discontinuities on the imaginary axis the
integration path in \eqref{0plus} follows the imaginary axis, where $\widetilde C^1_\alpha(\lambda)$ is taken as
$\widetilde C^1_\alpha(\lambda+0^+)$, but for $\delta$-neighborhood of $\pm i\rho_+(\alpha)$ and, depending on $\Omega$,
of  $\pm i\Omega_+(\alpha)$, where it is replaced by the path $\pm i\rho_+(\alpha)+l_{\delta,+}(s)$ with
$l_{\delta,+}(s)=\delta e^{i s}$, $s\in [-\pi/2,\pi/2]$, or $\pm i\Omega_+(\alpha)+l_{\delta,+}(s)$, respectively.
Analogously we define $\int_{0^--i\infty}^{0^-+i\infty}$ with $l_{\delta,-}(s)=\delta e^{i s}$, $s\in [\pi/2,3\pi/2]$, in
place of $l_{\delta,+}$.}
\end{rmk}

To compute this integral we want to shift the integration path to the negative real half plane. The results of this
shift depends again on the value of $\Omega$.

\subsubsection{The non resonant case}\label{subsec:nonr}

By shifting the integral in \eqref{0plus} form $0^+$ to $0^-$, see Remark \ref{rmk:00+}, we get
\begin{equation}\label{eq:tutti}
\begin{aligned}
 C_\alpha(0,t)=&2\Res(\tilde C^1_\alpha,i\Omega_+(\alpha))\cos(\Omega_+(\alpha)t)+
2\Res(\tilde C^1_\alpha,i\rho_+(\alpha))\cos(\rho_+(\alpha)t)+\\
&C_{\alpha,d}(t)+\frac{1}{2\pi i}\int_{0^--i\infty}^{0^-+i\infty}\tilde
C^1_\alpha(\lambda)e^{\lambda t}d\lambda\ ,
\end{aligned}
\end{equation}
where $\Res(f,z)$ is the residue of the meromorphic function $f$ at the point $z$, while $C_{\alpha,d}(t)$ accounts for
the integration around the discontinuity on $\mathcal I$ and, using that $\tilde C^1_\alpha(-\lambda)=-\tilde
C^1_\alpha(\lambda)$, is given by
\begin{equation}\label{eq:Cd}
 C_{\alpha,d}(t)=\frac{1}{\pi}\int_{\mu_-}^{\mu_+}
\left(\tilde C^1_\alpha(0^++i\xi)-\tilde
C^1_\alpha(0^-+i\xi)\right)\cos(\xi t)d\xi\ ,
\end{equation}
We first observe now that for every $\xi<0$
\[
 \frac{1}{2\pi i}\int_{0^--i\infty}^{0^-+i\infty}\tilde
C^1_\alpha(\lambda)e^{\lambda t}d\lambda=
\frac{1}{2\pi i}\int_{\xi-i\infty}^{\xi+i\infty}\tilde
C^1_\alpha(\lambda)e^{\lambda t}d\lambda\ ,
\]
so that, letting $\xi\to-\infty$, the last term in the r.h.s. of \eqref{eq:tutti} vanishes.

On the other hand, for $\xi>0$, we have
\begin{equation}\label{eq:Hacca}
\begin{aligned}
\tilde C^1_\alpha(0^++i\xi)-\tilde C^1_\alpha(0^-+i\xi)=
&\frac{T_P}{\Omega^2}\frac{2\alpha^2 \xi\sqrt{(\xi^2-\mu_-^2)(\mu_+^2-\xi^2)}}
{(\overline \Omega^2-\xi^2)^2(\xi^2-\mu_-^2)(\mu_+^2-\xi^2)+\alpha^2\gamma^2(\Omega^2-\xi^2)^2}=:\\
&\frac{2\alpha^2 T_P}{\Omega^2}\sqrt{(\xi-\mu_-)(\mu_+-\xi)}\mathcal
G(\xi)\ ,
\end{aligned}
\end{equation}
so that we get
\begin{equation}\label{eq:Bessel}
\begin{aligned}
 C_{\alpha,d}(t)=&\frac{2\alpha^2 T_P}{\pi\Omega^2}\int_{\mu_-}^{\mu_+}
\sqrt{(\xi-\mu_-)\left(\mu_+-\xi\right)}\mathcal G(\xi)\cos(\xi t)d\xi=\\
&\frac{\alpha^2 T_P\delta_\mu}{\pi\Omega^2}\int_{-1}^{1}\sqrt{1-\kappa^2}\overline
{\mathcal G}(\kappa)
\cos(\bar \mu t + \delta_\mu \kappa t)d\kappa\ ,
\end{aligned}
\end{equation}
where we set $\xi=\bar \mu+\delta_\mu \kappa$, with $\bar \mu=(\mu_-+\mu_+)/2$ and $\delta_\mu=(\mu_--\mu_+)/2$, and
$\overline {\mathcal G}(\kappa)=\mathcal G(\bar \mu+\delta_\mu \kappa)$. We can thus apply Lemma \ref{lem:stima} and,
given $\e\leq \frac12$, we obtain
\begin{equation}\label{eq:optimal}
|C_{\alpha,d}(t)|\leq \sup_{\kappa\in\mathcal R_d}
\left|(1-\kappa^2)^{\frac12-\e}\overline {\mathcal G}(\kappa)\right|
\frac{K\alpha^2 T_P}{\Omega^2}
\frac 1{t^{1+\e}}\ ,
\end{equation}
where $\mathcal R_d=\{\kappa\,:\, |\Re(\kappa)|\leq 1\,,\, |\Im(\kappa)|\leq d\}$ for $d>0$. Observe that $\mathcal
G(\xi)$ has two poles of order 1 at $k=\rho_\pm(\alpha)$, with $\rho_\pm(\alpha)=\mu_\pm+O(\alpha^2)$ and
$\rho_-(\alpha)<\mu_-$ while $\rho_+(\alpha)>\mu_+$. Thus, for $\xi$ close to $\mu_\pm$ we have
\begin{equation}\label{eq:near}
 \mathcal G(\xi)\simeq \frac{K}{\xi-\rho_\pm(\alpha)}\ ,
\end{equation}
from which we obtain that for $\e>-1/2$ we have
\[
\sup_{\kappa\in\mathcal R_d}
\left|(1-\kappa^2)^{\frac12-\e}\overline {\mathcal G}(\kappa)\right|=K\alpha^{-1-2\e}\, .
\]
The optimal bound is thus obtained choosing $\epsilon=-1/2$ for $t\leq \alpha^{-2}$ and
$\epsilon=1/2$ for $t\geq \alpha^{-2}$.

From \eqref{eq:near} we also see that we can write
\[
 2\Res(\tilde C^1_\alpha,i\rho_+(\alpha))=:\alpha^3 r_2(\alpha)\ ,
\]
while calling $R(\alpha):=2\Res(\tilde C^1_\alpha,i\Omega_+(\alpha))$ we can finally write
\begin{equation}\label{eq:final_out}
 C_\alpha(0,t)=R(\alpha)\cos(\Omega_+(\alpha)t)+ \alpha^3 r_2(\alpha)
 \cos(\rho_+(\alpha)t)+ \alpha^2 K_{v}(t)\ ,
\end{equation}
where the first term gives the principal contribution, the second is an oscillating correction and
\begin{equation}\label{eq:stima1}
 |K_{v}(t)|\leq \frac{K}{1+\sqrt t(1+\alpha^2 t)}\ .
\end{equation}
is an asymptotically vanishing correction.

\begin{rmk}\label{rmk:int}\emph{Notice that form \eqref{eq:stima1} we get
\[
 \int_0^\infty |K_{v}(t)|dt\leq K\alpha^{-1}\ .
\]
In this sense, one can say the contribution $C_{\alpha,d}$ to the
correlation function is $O(\alpha)$} .
\end{rmk}

\subsubsection{The resonant case}\label{subsec:r}

Proceeding as for \eqref{eq:tutti} we get

\begin{equation}\label{eq:tutti_bis}
 C_\alpha(0,t)=
\alpha^3 r_2(\alpha) \cos(\rho_+(\alpha)t)+C_{\alpha,d}(t)\ ,
\end{equation}
with $C_{\alpha,d}(t)$ still given by \eqref{eq:Cd}. The main difference with Subsection \ref{subsec:nonr} is that the
function $\mathcal G(\xi)$ has poles at $\pm\Omega_\pm(\alpha)$ close to the integration domain $[\mu_-,\mu_+]$.

We thus proceed as for \eqref{eq:Bessel} but use Corollary \ref{cor:stimasing} and we get
\[
C_{\alpha,d}(t)=
|R(\alpha)|\cos(\Omega_p(\alpha)t+\phi(\alpha))e^{-\xi(\alpha)t}+\overline
C_{\alpha,d}(0,t)\ ,
\]
where we wrote $\Omega_+(\alpha):=\Omega_p(\alpha)+i\xi(\alpha)$ while
\[
 R(\alpha)=|R(\alpha)|e^{i\phi(\alpha)}:=\frac{4\alpha^2 T_P}{\Omega^2}
\sqrt{(\Omega_+(\alpha)-\mu_-)(\mu_+-\Omega_+(\alpha))}\Res(\mathcal
G,\Omega_+(\alpha))\ .
\]
Observe that $\mathrm{Res}(\mathcal G,\Omega_+(\alpha))=O(\alpha^{-2})$ due to the presence of the pole in
$\Omega_-(\alpha)$ while $\xi(\alpha)=O(\alpha^2)$ and $\phi(\alpha)=O(\alpha^2)$. See Subsection \ref{subsec:comp} for
more precise values.

In analogy with \eqref{eq:final_out} we can write
\begin{equation}\label{eq:final_in}
C(0,t)=|R(\alpha)|\cos(\Omega_p(\alpha)t+\phi(\alpha)) e^{-\xi(\alpha)t} +
\alpha^3 r_2(\alpha)\cos(\rho_+(\alpha)t)+\alpha^2 K_{v}(t)\ ,
\end{equation}
where $\phi(\alpha)=O(\alpha^2)$, see \eqref{eq:poloneex} below, and $K_{v}(t)$ still satisfies \eqref{eq:stima1} and
Remark \ref{rmk:int}.

\subsection{Computing the residues}\label{subsec:comp}

We are left with the task of computing the residue $R(\alpha)$ and $r_2(\alpha)$. To this extent we observe that in both
the cases studied above we can write
\[
 R(\alpha)=2\Res(\tilde C^1_{\alpha}(\lambda)-\tilde C^1_{\alpha,-}(\lambda),i\Omega_+(\alpha))
\]
where
\[
\tilde C^1_{\alpha,-}(\lambda)=\frac{\lambda T_P}{\Omega^2}
g^1_{\alpha,-}(\lambda)
\]
see \eqref{eq:g-}. That is, $\tilde C^1_{\alpha,-}(\lambda)$ is the analytic continuation of $\tilde
C^1_{\alpha}(\lambda)$ past the discontinuity at $\mathcal I$ computed using $f_-(\lambda)$ in place of $f_+(\lambda)$.
This is so because the $\tilde C^1_{\alpha}$ and $\tilde C^1_{\alpha,-}$, if $\alpha\not=0$, have no common singularity,
see Subsection \ref{subsec:galpha}. A similar identity holds for $r_2(\alpha)$. As for \eqref{eq:Hacca} we get
\[
\begin{aligned}
 \tilde C^1_\alpha(\lambda)-\tilde C^1_{\alpha,-}(\lambda)=
&\frac{T_P}{\Omega^2}\frac{2\alpha^2
\lambda\sqrt{(\lambda^2+\mu_-^2)(\lambda^2+\mu_+^2)}}{(\lambda^2+\Omega^2_+(\alpha))
(\lambda^2+\Omega^2_-(\alpha))(\lambda^2+\rho^2_+(\alpha))
(\lambda^2+\rho^2_-(\alpha))}\ ,
\end{aligned}
\]
so that
\[
R(\alpha)=\frac{T_P}{\Omega^2}
\frac{2\alpha^2\sqrt{(\mu_-^2-\Omega^2_+(\alpha))(\mu_+^2-
    \Omega^2_+(\alpha))}}{
(\Omega^2_-(\alpha)-\Omega^2_+(\alpha))(\rho^2_+(\alpha)-\Omega^2_+(\alpha))
(\rho^2_-(\alpha)-\Omega^2_+(\alpha))}\ ,
\]
and, using \eqref{eq:polone}, we get
\begin{equation}\label{eq:poloneex}
 R(\alpha)=\frac{T_P}{\Omega^2}\left(1-\alpha^2f_+^3(i\Omega)\left(\frac{\mu_-^2+
 \mu_+^2}2-\Omega^2\right)\right)+O(\alpha^3)\, .
\end{equation}
With a similar argument we get
\[
 r_2(\alpha)=-\frac{2T_P}{\Omega^2}\frac{\gamma}{\left(\mu_+^2-\Omega^2\right)^2(\mu_+^2-\mu_-^2)}+O(\alpha)\,.
\]

\section{The two time correlation function \matht{$C_N(s,t)$}}
\label{sec:decay2}

In this section we extend the analysis of Section \ref{sec:decay1} to the full two time correlation function
$C_{\alpha,N}(s,t)$. As for Section \ref{sec:decay1}, we will use the exact expression for the Laplace transform of
$\widetilde Q$ obtained in Section \ref{sec:solution}. We thus define the Laplace transform $\tilde
C_{\alpha,N}(\lambda,\lambda')$ of $C_{\alpha,N}(s,t)$ as
\[
\tilde C_{\alpha,N}(\lambda,\lambda'):=\int_0^\infty\int_0^\infty e^{-\lambda t}e^{-\lambda'
s}C_{\alpha,N}(s,t)dtds=\int_0^\infty\int_0^\infty e^{-\lambda s}e^{-\lambda'
t}\langle Q(s)Q(t)\rangle_N\,dsdt \ .
\]
Taking into account the distribution $\rho_N$ for the initial values, see \eqref{eq:dens_prob}, and the exact expression
for $\tilde Q$, see \eqref{eq:sol}, we get
\begin{equation}\label{eq:trasf_Q^2}
\begin{split}
\tilde C_{\alpha,N}(\lambda,\lambda')=&
\frac{1}{D_{\alpha,N}(\lambda)D_{\alpha,N}(\lambda')}
 \Biggl[\left(1+\alpha \gamma f_N(\lambda))
(1+\alpha\gamma f_N(\lambda')\right)
\left(\lambda\lambda'\langle Q(0)^2\rangle+\langle P(0)^2\rangle\right)\\
&\qquad\qquad\qquad\qquad
+\frac{\alpha^2}{N}\sum_j\frac{\eta_j^2(\lambda\lambda'\langle
q_j(0)^2\rangle+\langle p_j(0)^2\rangle)}
{\left(\lambda^2+\omega_j^2\right)
\left(\lambda'^2+\omega_j^2\right)}\Biggr]\\
=& T_P
g^1_{\alpha,N}(\lambda)g^1_{\alpha,N}(\lambda')\frac{\lambda\lambda'+\Omega^2}
{\Omega^2}
+\frac{\alpha^2 }{D_{\alpha,N}(\lambda)D_{\alpha,N}(\lambda')}
g^2_{N}(\lambda,\lambda')\\
=:&\tilde C_{\alpha,N}^{\rm nt}(\lambda,\lambda') +\alpha^2\tilde
C_{\alpha,N}^{\rm t}(\lambda,\lambda') \ ,
\end{split}
\end{equation}
where $g^1_N$ is defined in \eqref{eq:def_g^1_N} and
\begin{equation*}
%  \begin{split}
g^2_{N}(\lambda,\lambda'):=\frac{T_B}{N}\sum_j
\frac{\eta_j^2(\lambda\lambda'+\omega_j^{2})}{\omega_j^2\left (
  \lambda^2+\omega_j^2\right)\left(\lambda'^2+\omega_j^2\right)}\, .
\end{equation*}
We will consider separately the two terms defined in \eqref{eq:trasf_Q^2}. More precisely we define
\begin{equation}\label{eq:anti2}
C^{\rm nt}_{\alpha,N}(s,t) =-\frac{1}{4\pi^2}
\int_{\xi-i\infty}^{\xi+i\infty}\int_{\xi-i\infty}^{\xi+i\infty}\tilde
C_{\alpha,N}^{\rm nt}(\lambda,\lambda')e^{\lambda s +\lambda' t}
dsdt\ ,
\end{equation}
and similarly for $C^{\rm t}_{\alpha,N}(s,t)$, so that
\begin{equation}\label{eq:Csplit}
 C_{\alpha,N}(s,t)=C^{\rm nt}_{\alpha,N}(s,t)+\alpha^2  C^{\rm t}_{\alpha,N}(s,t)\, .
\end{equation}
Although we have an explicit expression, a direct analysis of the inverse Laplace transform in \eqref{eq:anti2} is
quite
difficult. We will show that both  $C_{\alpha,N}^{\rm nt}$ and $C_{\alpha,N}^{\rm t}$ can be expressed is term of
functions depending only on $s$ or $t$ whose behavior can be analyzed using the methods of Section \ref{sec:decay1}. In
particular we will show that, in the non resonant case, $C^{\rm nt}_{\alpha,N}(t,t)$ is the dominant contribution to
$C_{\alpha,N}(t,t)$ when $t$ is large and it represent the fact that, in this case, the probe does not thermalize with
the chain. On the other hand, in the resonant case $C^{\rm t}_{\alpha,N}(t,t)$ is dominant for large $t$ and represents
the fact that the probe does thermalize with the chain, after a long enough time.

\subsection{Behavior of \matht{$C^{\rm nt}_{\alpha,N}(s,t)$}}\label{subsec:C12}

Observing that
\begin{equation*}
\tilde C^{\rm nt}_{\alpha,N}(\lambda,\lambda') = \frac{\Omega^2}{T_P}
\tilde C^1_{\alpha,N}(\lambda)\tilde C^1_{\alpha,N}(\lambda')
\left(1+\frac{\Omega^2}{\lambda\lambda'}\right),
\end{equation*}
see \eqref{eq:C1}, we can write the inverse Laplace transform of $\tilde C^{\rm nt}_{\alpha,N} $ as
\begin{equation}\label{eq:CNsplit}
C^{\rm nt}_{\alpha,N}(s,t)=\frac{\Omega^2}{T_P}C_{\alpha,N}(0,s)C_{\alpha,N}(0,t)
+\frac{\Omega^4}{T_P}\left(\int_0^s C_{\alpha,N}(0,\tau)d\tau\right)\left(\int_0^t
C_{\alpha,N}(0,\tau)d\tau\right) ,
\end{equation}
so that its contribution to the $C_{\alpha,N}(s,t)$ is completely determined by $C_{\alpha,N}(0,t)$. Thus calling
\[
S_{\alpha,N}(0,t):=\int_0^s C_{\alpha,N}(0,\tau)d\tau=\frac{T_P}{2\pi \Omega^2 i}\int_{\xi-i\infty}^{\xi+i\infty}
g^1_{\alpha,N}(\lambda)d\lambda\ ,
\]
see \eqref{eq:C1} and \eqref{eq:def_g^1_N}, we can write
\begin{equation*}
 C^{\rm nt}_\alpha(s,t):=\frac{\Omega^2}{T_P}\left(C_\alpha(0,s)C_\alpha(0,t)
+\Omega^2 S_\alpha(0,s) S_\alpha(0,t)\right)\, .
\end{equation*}
Reasoning like in Section \ref{subsec:bound} we get that, for $t<N$,
\begin{equation}\label{eq:N^NS}
|S_{\alpha,N}(0,t)-S_\alpha(0,t)|\leq 
\alpha^2 K\left(\frac{k\tilde\omega  t}N\right)^{2N}\ ,
\end{equation}
so that, for $t,s<N$ we obtain
\begin{equation}\label{eq:diff-nt}
|C^{\rm nt}_{\alpha,N}(s,t)-C^{\rm nt}_{\alpha}(s,t)|\leq \alpha^2 K\left(\frac{k\tilde\omega 
\max(s,t)}{N}\right)^{2N}\ .
\end{equation}
To analyze the behavior of $S_\alpha(0,t)$ we can repeat the argument of section \ref{subsec:asym}. The results can be
summarized as follows.

\begin{description}
\item[Non resonant case:] we get
\begin{equation}\label{eq:Sfinal_in}
S_\alpha(0,t)=\frac{R(\alpha)}{\Omega_+(\alpha)}\sin(\Omega_+(\alpha)t)+ \alpha^3
\frac{r_2(\alpha)}{\rho_+(\alpha)}\sin(\rho_+(\alpha)t)+ \alpha^2
K_{v}(t)\ ,
\end{equation}
with $K_{v}(t)$ still satisfying \eqref{eq:stima1}. Combining with \eqref{eq:final_out} we get
\begin{equation*}
C^{\rm nt}_\alpha(s,t)=\frac{\Omega^2}{T_P}R(\alpha)^2\left(\cos(\Omega_+(\alpha)s)\cos(\Omega_+(\alpha)t)+
\frac{\Omega^2}{\Omega_+(\alpha)^2}\sin(\Omega_+(\alpha)s)\sin(\Omega_+(\alpha)t)\right)+
\alpha^2 K(s,t)\ ,
\end{equation*}
where $K(s,t)$ contains oscillating corrections to the main behavior and terms that vanish as $t,s\to\infty$. Using
\eqref{eq:polone} and \eqref{eq:poloneex}, we see that
\[
\frac{\Omega^2}{T_P}R(\alpha)^2=\frac{T_P}{\Omega^2}+O(\alpha^2)\qquad\hbox{and}\qquad 
\frac{\Omega^2}{\Omega_+(\alpha)^2}=1+O(\alpha)\ ,
\]
so that we can write
\begin{equation}\label{eq:Cntnr}
 C^{\rm nt}_\alpha(s,t)=\frac{T_P}{\Omega^2}\cos(\Omega_+(\alpha)(t-s))+
 \alpha K(s,t)\ ,
\end{equation}
where the only contribution of order $\alpha$ to the correction term $K(s,t)$ is of the form
$\sin(\Omega_+(\alpha)s)\sin(\Omega_+(\alpha)t)$. It follows that in this case
\[
\limsup_{\tau\to\infty}C^{\rm nt}_\alpha(\tau,\tau+t)- \liminf_{\tau\to\infty}C^{\rm 
nt}_\alpha(\tau,\tau+t)=O(\alpha)\, .
\]

\item[Resonant case:] proceeding in a similar way in this case we get
\[
S_\alpha(0,t)=\frac{|R(\alpha)|}{|\Omega_+(\alpha)|}e^{-\xi(\alpha)t}
\sin(\Omega_p(\alpha)t+\bar\phi(\alpha))+ 
\alpha^3
\frac{r_2(\alpha)}{\rho_+(\alpha)}\sin(\rho_+(\alpha)t)+ \alpha^2
K_{v}(t)\ ,
\]
where $\bar\phi(\alpha)=\phi(\alpha)+\phi'(\alpha)$ with $\Omega_+(\alpha)=e^{i\phi'(\alpha)}|\Omega_+(\alpha)|$. 
Collecting the principal term in $\alpha$ we get
\begin{equation}\label{eq:Cntr}
C^{\rm nt}_\alpha(s,t)=\frac{T_P}{\Omega^2}\cos(\Omega_p(\alpha)(t-s))e^{-\xi(\alpha)(t+s)}+
\alpha K(s,t)\ ,
\end{equation}
where again the only contribution of order $\alpha$ to the correction term $K(s,t)$ is of the form
$\sin(\Omega_p(\alpha)s)\sin(\Omega_p(\alpha)t)$. On the other hand, we observe that the terms involving oscillations of
frequency $\rho_+(\alpha)$ are the only contribution to $C_\alpha(0,t)$ and $S_\alpha(0,t)$ that do not vanish as
$t\to\infty$, see \eqref{eq:final_in} and \eqref{eq:Sfinal_in}. We thus have that in this case
\begin{equation}\label{eq:CtMark}
\limsup_{\tau\to\infty}C^{\rm nt}_\alpha(\tau,\tau+t)- \liminf_{\tau\to\infty}C^{\rm 
nt}_\alpha(\tau,\tau+t)=O(\alpha^6)\, .
\end{equation}

\end{description}

\subsection{Behavior of \matht{$C^{\rm t}_{\alpha,N}(s,t)$}}\label{subsec:C22}

We can now come to the analysis of the second term in \eqref{eq:Csplit}.The main observation is that $C^{\rm 
t}_{\alpha,N}(s,t)$ can be written as a convolution of functions depending on a single variable. More precisely, after 
some straightforward algebra we get
\begin{equation}\label{eq:expCt}
C^{\rm t}_{\alpha,N}(s,t)=\int_0^s\int_0^t {\mathcal K_N}(t-s-\tau+\sigma)d_{\alpha,N}(\tau)d_{\alpha,N}(\sigma)d\tau
d\sigma\ ,
\end{equation}
where
\[
d_{\alpha,N}(t)= \frac{1}{2\pi
  i}\int_{\xi-i\infty}^{\xi+i\infty}\frac{ e^{\lambda t}
  d\lambda}{D_{\alpha,N}(\lambda)}\ ,
\]
with $D_{\alpha,N}(\lambda)$ defined in \eqref{eq:f_e_D}, while
\[
{\mathcal K}_N(t)=\frac{T_B}{N}\sum_{j=0}^N 
\frac{\eta_j^2\cos(\omega_jt)}{\omega_j^2}=\frac{T_B}{2N}\sum_{j=-N+1}^N
\frac{\cos(\omega_jt)}{\omega_j^2}\, ,
\]
see \eqref{eq:normal}. We can thus define
\begin{equation}\label{eq:calK}
 C^{\rm t}_{\alpha}(s,t):=\int_0^s\int_0^t {\mathcal
   K}(t-s-\tau+\sigma)d_{\alpha}(\tau)d_{\alpha}(\sigma)d\tau
 d\sigma\ ,
\end{equation}
where
\begin{equation}\label{eq:def-d-K}
 d_{\alpha}(t):= \frac{1}{2\pi i}\int_{\xi-i\infty}^{\xi+i\infty}\frac{ e^{\lambda \tau} d\lambda}{D_{\alpha}(\lambda)}
\qquad\mathrm{and}\qquad {\mathcal K}(t):=\frac{T_B}{2\pi}\int_0^{2\pi}
\frac{\cos(\omega(\theta)t)}{\omega(\theta)^2}d\theta\, ,
\end{equation}
and, analogously to what we did in Subsection \ref{subsec:C12}, we first compare $C_{N,\alpha}(s,t)$ and $C_\alpha(s,t)$
and then study the behavior of $C_\alpha(s,t)$. To start with, we show that $d_{\alpha,N}(t)$ and $d_\alpha(t)$ are 
closely related to $C_{\alpha,N}(0,t)$ and  $C_{\alpha}(0,t)$, see Section \ref{sec:decay1}.

\subsubsection{Properties of \matht{$d_{\alpha,N}(t)$ and $d_\alpha(t)$}}
\label{subsec:dalpha}
	
We observe that
\[
\frac{1}{D_{\alpha,N}(\lambda)}-\frac{1}{D_{\alpha}(\lambda)}=
\frac{\alpha\gamma(\lambda^2+\Omega^2)(f_+(\lambda)-f_N(\lambda))}{D_{\alpha,N}(\lambda)D_{\alpha}(\lambda)}\ ,
\]
so that, proceeding as in Subsection \ref{subsec:bound}, for $t<N$ we get
\[
 |d_{\alpha,N}(t) - d_\alpha(t)|\leq \alpha K\left(\frac{k\tilde\omega t}N\right)^{2N}\, .
\]
Moreover the structure of discontinuity and singularities of $1/D_\alpha(\lambda)$ is very similar to that of 
$g^1_\alpha(\lambda)$. In particular we can write
\[
\frac 1{D_\alpha(0^++i\xi)}-\frac 1{D_\alpha(0^-+i\xi)}= 
\frac{\gamma\Omega^2}{\alpha T_P}\frac {\xi^2-\Omega^2}{i\xi} \left(\tilde C^1_\alpha(0^++i\xi)-\tilde 
C^1_\alpha(0^-+i\xi)\right)\, ,
\]
see \eqref{eq:Hacca}. We can thus summarize the behavior of $d_\alpha(t)$ in the two relevant cases as follows:

\begin{description}
\item[Non Resonant Case] We get
\begin{equation}\label{eq:dalphanr}
d_\alpha(t)=R^{\rm t}(\alpha)\sin(\Omega_+(\alpha)t)+\alpha^2r_2^{\rm t}(\alpha)\sin(\rho_+(\alpha)t)+\alpha K^{\rm
	t}_{v}(t)\ ,
\end{equation}
where
\begin{equation}\label{eq:Rt}
R^{\rm t}(\alpha)=\frac{\gamma\Omega^2
  (\Omega^2_+(\alpha)-\Omega^2)R(\alpha)}{\alpha T_P\Omega_+(\alpha)}
= \frac 1{\Omega_+(\alpha)}+O(\alpha)\ ,
\end{equation}
and a similar expression for $r^{\rm t}_2(\alpha)$. Observing that, in the notation of \eqref{eq:Bessel}, we have
\begin{equation}\label{eq:Kr}
K^{\rm t}_{v}(t)=\frac{2\alpha}{\pi}\int_{\mu_-}^{\mu_+}\frac{k^2-\Omega^2}k
\sqrt{(k-\mu_-)\left(\mu_+-k\right)}\mathcal G(k)\cos(kt)dk\ ,
\end{equation}
we get that $K^{\rm t}_{v}$ satisfies \eqref{eq:stima1} so that
\begin{equation}\label{eq:Ktv}
\alpha\int_0^\infty |K^{\rm t}_{v}(t)|dt\leq K.
\end{equation}
	
\item[Resonant Case] In this case we get
\begin{equation*}
d_\alpha(t)=|R^{\rm t}(\alpha)|\sin(\Omega_p(\alpha)t+\phi''(\alpha))e^{-\xi(\alpha)t}+\alpha^2r^{\rm
t}_2(\alpha)\sin(\rho_+(\alpha)t)+\alpha K^{\rm t}_{v}(t)\ ,
\end{equation*}
with $R^{\rm t}(\alpha)=|R^{\rm t}(\alpha)|e^{i\phi''(\alpha)}$ still given by \eqref{eq:Rt} and $K^{\rm t}_{v}$ still
satisfying \eqref{eq:Ktv}. Observe moreover that $\phi''(\alpha)=O(\alpha)$ so that we can write
\begin{equation}\label{eq:dalphar}
d_\alpha(t)=|R^{\rm t}(\alpha)|\sin(\Omega_p(\alpha)t)e^{-\xi(\alpha)t}+\alpha^2r^{\rm
t}_2(\alpha)\sin(\rho_+(\alpha)t)+\alpha \overline K^{\rm t}_{v}(t)
\end{equation}
where now $\overline K^{\rm t}_{v}(t)$ is only uniformly bounded in $t$.
	
\end{description}
	
\subsubsection{Bound on \matht{$|{\mathcal K}_N(t)-{\mathcal K}(t)|$}}

To complete the comparison between $C^{\rm t}_{\alpha,N}(s,t)$ and $C^{\rm t}_{\alpha}(s,t)$ we need to estimate the
difference between ${\mathcal K}_N(t)$ and ${\mathcal K}(t)$. Observe that ${\mathcal K}_N(0)=f_N(0)$ while 
${\mathcal K}(0)=f_+(0)$ so that, from \eqref{eq:approssimazione_f} we know that
\[
|\mathcal K_N(0)-\mathcal K(0)|=\left|\mathcal
K(0)\frac{1}{p_+(0)^{2N}-1}\right|\leq K e^{-kN}\ ,
\]
since $|p_+(0)|>1$. On the other hand we have
\[
\mathcal K(t)-\mathcal
K(0)=\frac{T_B}{2\pi}\int_0^{2\pi}\frac{\cos(\omega(\theta)t)-1}{\omega(\theta)^2}d\theta\ ,
\]
where now the integrand is an entire and periodic function of $\theta$. Again following \cite{DR1984}, we define
\begin{equation}\label{eq:fourcoeff}
\hat {\mathcal
  K}_{n}(t)=\frac{T_B}{2\pi}\int_0^{2\pi}\frac{\cos(\omega(\theta)t)-1}{\omega(\theta)^2}e^{-in\theta}d\theta\ .
\end{equation}
Observing that, for $\theta\in\mathds C$ with $|\Im(\theta)|>1$, we have
\[
 |\cos(\omega(\theta)t)|\leq \exp\left(k\tilde\omega te^{|\Im(\theta)|}\right)
\]
and shifting the integral in \eqref{eq:fourcoeff} to the segment $\theta\in[i\bar\theta,2\pi+i\bar\theta]$ with 
$\bar\theta=\ln\left(\frac {\tilde\omega t}N\right)\sgn(n)$, for $\tilde\omega t<N$ we get
\[
 \hat {\mathcal K}_{n}(t)
% \frac 1{2\pi}\int_{0+i\bar\theta}^{2\pi+i\bar\theta}\frac{\cos(\omega(\theta)t)-1}{\omega(\theta)^2} e^{-in\theta } d\theta
\leq Ke^{kN}\left(\frac{k\tilde\omega t}N\right)^{|n|}\, .
\]
Reasoning like in \eqref{eq:fourier} we get
\[
\begin{aligned}
\left|\mathcal K_N(t)-\mathcal K_N(0)-\mathcal K(t)+\mathcal K(0)\right|=&
\left|\sum_{n=-\infty}^\infty \hat{\mathcal  K}_n(t)\left( 
\frac 1{2N} \sum_{j=-N+1}^N e^{in\theta_j}-\frac1{2\pi}\int_{-\pi}^\pi e^{in\theta'}d\theta'\right)\right|=\\
&\left|\sum_{n=1}^\infty\hat{\mathcal  K}_{nN}(t)\right|\leq 
K\left(\frac{k\tilde\omega t}N\right)^{2N}\ .
\end{aligned}
\]
Since $d_\alpha(0)=0$ we finally get
\begin{equation}\label{eq:diff-t}
|C^{\rm t}_{\alpha,N}(s,t)-C^{\rm t}_{\alpha}(s,t)|\leq
K\left(\left(\frac{k\tilde\omega\max(s,t)}{N}\right)^{2N}+ t^2s^2e^{-kN}\right)
\end{equation}
Combining \eqref{eq:diff-nt} and \eqref{eq:diff-t} we obtain a complete proof of Theorem \ref{th:appo}.
\medskip

We can now turn to the study of the behavior of $C^{\rm t}_{\alpha}(s,t)$ for large $s$ and $t$. 
Using \eqref{eq:dalphanr} or \eqref{eq:dalphar} in \eqref{eq:calK} we can write $C^{\rm t}_{\alpha}(s,t)$ as a sum of 
integrals involving $\cos(\Omega_+(\alpha)t)$, $\cos(\rho_+(\alpha)t)$ and $K_v(t)$. We will show that most of 
these integrals give bounded contribution, in $\alpha$ as well as $s$ and $t$, to $C^{\rm t}_{\alpha}(s,t)$ and thus 
contribution of order $\alpha^2$ to $C_\alpha(s,t)$, see \eqref{eq:Csplit}. Indeed we first observe that
\begin{equation}\label{eq:KK}
\left|\int_0^s\int_0^t \mathcal K(t-s-\tau+\sigma)K^{\rm t}_v(\tau)K^{\rm t}_v(\sigma)d\tau d\sigma\right|\leq
K\alpha^{-2}\ ,
\end{equation}
thanks to \eqref{eq:Ktv}, so that, taking into account \eqref{eq:dalphanr} and \eqref{eq:Csplit}, we see that the above 
integral contributes a term order $\alpha^2$ to $C_\alpha(s,t)$. A more detailed analysis, sketched in Appendix 
\ref{app:stime}, shows that contribution in
\eqref{eq:KK} vanishes as a power law in $t$ and $s$. 

Observe now that for $\Xi\in\mathds R$ we have
\[
\begin{aligned}
\int_0^t \cos(\omega (t-\tau))\sin(\Xi\tau)d\tau=&\frac 
12\left(\frac{\cos((\omega+\Xi)t)-1}{\omega+\Xi}-\frac{\cos((\omega-\Xi)t)-1}{\omega-\Xi}\right)\cos(\omega t)+\\
&\frac 
12\left(\frac{\sin((\omega+\Xi)t)}{\omega+\Xi}-\frac{\sin((\omega-\Xi)t)}{\omega-\Xi}\right)\sin(\omega t)
\end{aligned}
\]
while a similar expression holds for $\int_0^t \sin(\omega(t- \tau))\sin(\Xi\tau)d\tau$. Thus, using the definition of
$\mathcal K$ in \eqref{eq:def-d-K} and \eqref{eq:Ktv}, we get
\begin{equation}\label{eq:rhorho}
\begin{aligned}
\left|\int_0^s\int_0^t \mathcal K(t-s-\tau+\sigma)\sin(\rho_+(\alpha)\tau)\sin(\rho_+(\alpha)\sigma)d\tau
d\sigma\right|&\leq K|\rho_+(\alpha)-\mu_+|^{-2}\leq K\alpha^{-4}\ ,\\
\left|\int_0^s\int_0^t \mathcal K(t-s-\tau+\sigma)K^{\rm t}_v(\tau)\sin(\rho_+(\alpha)\sigma)d\tau
d\sigma\right|&\leq K\alpha^{-1}|\rho_+(\alpha)-\mu_+|^{-1}\leq K\alpha^{-3}\, ,
\end{aligned}
\end{equation}
so that the contributions of the first and second line of \eqref{eq:rhorho} are of order $\alpha^2$ and $\alpha$ 
respectively.
Again we observe that a more detailed analysis, see Appendix
\ref{app:stime}, shows that the contribution in the second
line of \eqref{eq:rhorho} vanishes as a power law in $t$. 

Thus, as expected, the only contributions
potentially non vanishing in $\alpha$ are those containing $\Omega_+(\alpha)$.

\subsubsection{The non resonant case}\label{subsec:Ctnr}
Proceeding as for \eqref{eq:rhorho} we see that
\[
\left|\int_0^s\int_0^t \mathcal K(t-s-\tau+\sigma)\sin(\Omega_+(\alpha)\tau)\sin(\Omega_+(\alpha)\sigma)d\tau
d\sigma\right|\leq K\max_\pm|\Omega_+(\alpha)-\mu_\pm|^{-2}\leq K\ ,
\]
and similar estimates hold for the remaining terms. Summing up we get
\begin{equation}\label{eq:Ctnr}
|C^{\rm t}(s,t)|\leq K.
\end{equation}
Thus \eqref{eq:Ctnr}, together with \eqref{eq:Cntnr} and \eqref{eq:Csplit}, completes the proof of Theorem
\ref{th:outband}. Moreover, as already observed after \eqref{eq:Cntnr}, the
correction term $K(s,t)$ in \eqref{eq:Cdstima} contains a term of the form $\sin(\Omega_+(\alpha)s)\sin(\Omega_+(\alpha)t)$.
We can thus conclude that
\begin{equation}\label{eq:infsupnr}
\limsup_{\tau\to\infty}C_\alpha(\tau,\tau+t)- \liminf_{\tau\to\infty}C_\alpha(\tau,\tau+t)=O(\alpha)\, .
\end{equation}

\subsubsection{The resonant case}

Since in this case $\Omega(\alpha)$ is close to the real segment $[\mu_-,\mu_+]$  we have to be more careful. To this 
extent we write 
\eqref{eq:expCt} as
\begin{equation}\label{eq:calC}
	C^{\rm
		t}_{\alpha}(s,t)=\frac{T_B}\pi\int_{\mu_-}^{\mu_+}\frac{d\omega}{\omega^2}\frac{d\theta}{d\omega}\int_0^s
	\int_0^t \cos(\omega(t-s-\tau+\sigma))d_{\alpha}(\tau)d_{\alpha}(\sigma)d\tau d\sigma
\end{equation}
where
\begin{equation}\label{eq:dtdo}
\frac{d\theta}{d\omega}=\frac{2\omega}{\sqrt{(\omega^2-\mu_-^2)(\mu_+^2-\omega^2)}}\, .
\end{equation}
Expanding $d_\alpha(t)$ using \eqref{eq:dalphar} the contribution containing 
$\sin(\Omega(\alpha)t)\sin(\Omega(\alpha)s)$ is the most relevant. Using Lemma \ref{lem:duhamel} we get
\begin{equation}\label{eq:Ctr}
\begin{aligned}
\int_{\mu_-}^{\mu_+}&\frac{d\omega}{\omega^2}\frac{d\theta}{d\omega}\int_0^s \int_0^t
\cos(\omega(t-s-\tau+\sigma))e^{-\xi(\alpha)\tau}
\sin(\Omega_p(\alpha)\tau)e^{-\xi(\alpha)\sigma}\sin(\Omega_p(\alpha)\sigma)=\\
&\frac{\pi}{2\xi(\alpha)}\frac{1}{\Omega_p(\alpha)\sqrt{(\mu_+-\Omega_p(\alpha))(\mu_--\Omega_p(\alpha))}}  
\left(e^{-\xi(\alpha)|t-s|}-e^{-\xi(\alpha)(t+s)}\right)\cos(\Omega_p(\alpha)(t-s))+
K(t,s)\,.
\end{aligned}
\end{equation}
Finally, following the scheme of the proof of Lemma \ref{lem:duhamel}, it is easy to see the the remaining contributions
to $C_\alpha^{\rm nt}$ are uniformly bounded in $\alpha$, $s$ and $t$. Thus \eqref{eq:Ctr}, together with
\eqref{eq:Cntr} and \eqref{eq:Csplit}, completes the proof of Theorem \ref{th:inband}.

From Appendix \ref{app:stime} we see that for $t$ and $s$ going to infinity, the only two contributions to $C_\alpha^{\rm
t}(s,t)$ that do not vanish are oscillations of the form
$\sin(\rho_+(\alpha)(t+s))$ or $\sin(\rho_+(\alpha)(t-s))$, together with
the term $e^{-\xi(\alpha)|t-s|}\cos(\Omega_p(\alpha)(t-s))$, if $t-s$
remain finite. We thus get
\[
\limsup_{\tau\to\infty}C^{\rm t}_\alpha(\tau,\tau+t)-
\liminf_{\tau\to\infty}C^{\rm t}_\alpha(\tau,\tau+t)=O(\alpha^3)\ ,
\]
so that, considering \eqref{eq:Csplit} and \eqref{eq:CtMark}, we obtain
\begin{equation}\label{eq:nonth}
\limsup_{\tau\to\infty}C_\alpha(\tau,\tau+t)- \liminf_{\tau\to\infty}C_\alpha(\tau,\tau+t)=O(\alpha^5)\, .
\end{equation}

\section{Discussion and outlook}
\label{sec:discussion}

\red {From Remark \ref{rmk:1timeres} we get a detailed description of the one time correlation function $C_{\alpha}(0,t)$
in the resonant case that may be of interest if one wishes to check numerically our predictions. If we assume that
$\alpha$ is small, so that the three terms in \eqref{eq:C1r} are clearly distinguishable, we see that
$C_{\alpha}(0,t)$ initially decays exponentially for a time of the order of $\alpha^{-2}$. It then start decaying
slowly until it settles on to a very small oscillation of the order of $\alpha^3$. On the other hand, Theorem
\ref{th:appo} guarantees that $C_{\alpha,N}(0,t)$ stays close to $C_{\alpha}(0,t)$ only for time of order $N$.
Since the Hamiltonian \eqref{eq:Ham0} is harmonic, we know that $C_{N,\alpha}(0,t)$ is a quasi--periodic function
of $t$. Although in Theorem \ref{th:appo} we do not have a lower bound, we expect that
$|C_{\alpha,N}(0,t)-C_{\alpha}(0,t)|$ grows rapidly to be order 1 for times larger than $N$. Thus for $N$ of order
of $\alpha^{-2}$, $C_{\alpha,N}(0,t)$ will follow the initial exponential decay of $C_{\alpha}(0,t)$ before
departing. Clearly the larger $N$ the larger the portion of exponential decay one can observe. On the other hand, a
much larger $N$ is needed to observe the slowly decaying corrections or the steady oscillation in \eqref{eq:C1r}.
An analogous but more complex control on the behavior of $C_{\alpha}(s,t)$ follows from Appendix \ref{app:stime}.}

\medskip

It is natural to wonder how much our results depend on the specific form of the model we have decided to consider, that
is, on the form of the Hamiltonian \eqref{eq:Ham0}. If we insist on the full dynamics to be linear, there is little
freedom for $H_P$. Regarding $H_B$ we can consider a more general translation invariant potential by taking
\[
 H_B(\hat q,\hat p)=\sum_{l=-N+1}^{N} \frac{\hat
  p_l^2}{2m}+\sum_{l,m=-N+1}^{N} \hat V_{|l-m|} \hat q_l\hat q_m\ ,
\]
where, for simplicity sake, we assume that $V$ has finite range $L$, i.e., $V_k=0$ for $k>L$. In
this case, the normal modes of the Hamiltonian are still given by \eqref{eq:normal} while the
frequencies satisfy $\omega^2_j=\omega^2(\pi j/N)$ with $\omega^2(\theta)$ a trigonometric polynomial, see 
\eqref{eq:frequenze_catena}.

We can now repeat our analysis up to Subsection \ref{subsec:exact} and, in this more general case we get
\begin{equation}\label{eq:f+gen}
 f_+(\lambda)=\sum_{\genfrac{}{}{0pt}{2}{k:\Im(\theta_k)>0}{ 
 \omega^2(\theta_k)=-\lambda^2}}\frac{i}{\tfrac{d}{d\theta}\omega^2(\theta_k)}\ ,
\end{equation}
so that the function $f_+(i\omega)$ is strictly linked with the density of states around the frequency $\omega$.
Equation \eqref{eq:f+gen} also makes it clear that the analytic structure of $f_+$ near $\mathcal I$, and thus the
geometry of the Riemann surface $\mathcal F$, depends on the number of solutions $\theta_k\in\mathds R$ of
$\omega^2(\theta_k)=-\lambda^2$ for $\lambda\in\mathcal I$. If we assume that $\omega^2(\theta)$ is strictly increasing
in $(0,\pi)$, for example requiring $V_k$ to be small for $k>1$, then the analysis in Sections \ref{sec:decay1} and
\ref{sec:decay2} can proceed without modifications. In the general case, new branch points may appear in $\mathcal
F$ in coincidence with the maxima and minima of $\omega^2(\theta)$. This will not qualitatively change the analysis in 
subsection \ref{subsec:r} and thus the behavior of the correlation in the non resonant case. On the contrary, a more 
detailed analysis is needed if $i\Omega$ is close to one such branch points but this is outside the scope of this 
paper.

\medskip

\red{As another possible extension we can consider a different initial distribution for the initial condition of the bath. Indeed, \eqref{eq:dens_prob} corresponds to equipartition in the bath but there 
are many other invariant distributions for the evolution with $\alpha=0$. More generally, we can consider the initial 
distribution
\begin{equation}\label{eq:nonther}
\tilde\rho_{N}(q,p,Q,P)=\frac{1}{Z(T_B,T_P)} \exp\left(-\frac12\sum_{j=0}^N\frac 1{T_B(j)}
(p_j^2+\omega_j^2 q_j^2)-\frac 1{2T_P} (P^2+\Omega^2Q^2)\right)\ .
\end{equation}
By properly choosing $T_B(j)$, the time correlation generated by \eqref{eq:nonther} are analogous to those considered 
in \cite{Das} for a quantum bath.}

\red{The only relevant change to our computations appears in Subsection \ref{subsec:C22} 
where 
${\mathcal K}_N(t)$ must be replaced by
\[
\widetilde{\mathcal K}_N(t)=\frac{1}{N}\sum_{j=0}^N 
\frac{T_B(j)\eta_j^2\cos(\omega_jt)}{\omega_j^2}
\]
Assuming that we can write $T_B(j)=T(\omega_j)$ with $T$ smooth we get
\[
\widetilde{\mathcal K}_N(t)=\frac{1}{2N}\sum_{j=-N+1}^N
\frac{T(\omega_j)\cos(\omega_jt)}{\omega_j^2}\qquad\hbox{and}\qquad
\widetilde{\mathcal K}(t)=\frac{1}{2\pi}\int_0^{2\pi}
\frac{T(\omega(\theta))\cos(\omega(\theta)t)}{\omega(\theta)^2}d\theta\, .
\]
We can now repeat the analysis in Section \ref{sec:decay2} with only minor changes. Observe though that in the resonant 
case the probe will thermalize at the temperature of the oscillator in the bath with frequency $\Omega(\alpha)$, that 
is, in \eqref{eq:Cdstimain} we have $T(\Omega(\alpha))$ instead of $T_B$.
}

\medskip

At last, a natural extension would be to investigate how the results change when a macroscopic probe is considered, that
is, a system with $M$ degrees of freedom, with $1\ll M\ll N$, and the dependence of the estimates on $M$, $N$, and the
relative size $M/N$. The case is of particular interest when the probe is composed by some degrees of freedom resonating
with the bath, and some not at all. We expect the probe then to split somehow into two subsystems, one thermalizing with
the bath, and the other one preserving the initial temperature, leading to an occurrence of incomplete thermalization,
similar to that of diatomic gases (see \cite{BCG95}). This would not be surprising in the fully linear case, while if a
nonlinear perturbation is introduced, a similar behavior has been numerically observed for a linear chain in contact
with a perfect gas thermostat in \cite{CCG04}. For an analytical treatment, we plan to consider a diatomic chain for the
probe, where the optical and acoustical branch have well separated frequencies, and we have already at hand results
guaranteeing that the internal dynamic of the macroscopic probe do not allow energy exchanges between the branches, even
in the thermodynamic limit (see \cite{Mai19}).
\bigskip

\noindent
{\bf Acknowledgements}. The authors wish to thank Andrea Carati, Livia Corsi e Luigi Galgani for their help and support. F.B.~has been partially supported by the NSF grant DMS-1907643.

\medskip

\noindent
{\bf Data Availibility Statement}.
Data sharing is not applicable to this article as no datasets were generated or analyzed
during the current study.

\medskip

\noindent
{\bf Conflict of Interest}.
The authors have no conflict of interest to declare.

\bibliographystyle{unsrt}

%\bibliography{nonequi}

\appendix

\section{Technical Lemmas}
\label{app:lemmas}

In this appendix we collect few results that are used many times in the course of the paper.

As discussed in Remark \ref{rmk:shift}, we want to compute \eqref{eq:antiinf} by shifting the integral from a line with
real part $\xi>0$ to a line with real part $\xi<0$. To do this, we need to take into account the discontinuity of
$g^1_\alpha(\lambda)$ for $\lambda\in \mathcal I$. This is the purpose of the following Lemma.

\begin{lemma}\label{lem:stima} 
Let $h(\xi)$ be analytic in $\mathcal R=\{\xi; |\Re(\xi)|< 1\}$ and for $d>0$, let $\mathcal R_d=\{\xi\in \mathcal R;
|\Im(\xi)|\leq d\}$. Then, for $\e>-1$ and $t\ge 0$ we have
\begin{equation}\label{eq:fint}
\left |\int_{-1}^1 (1-\xi^2)^{\e}h(\xi)e^{i \xi t}d\xi\right|\leq
\frac{K_{d,\e}\sup_{\xi\in \mathcal R_d}|h(\xi)|}{\left(1+dt\right)^{1+\e}}\ .
\end{equation}

\end{lemma}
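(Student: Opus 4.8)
The plan is to split the range of $t$ into $dt\le 1$ and $dt\ge 1$. In the first range the estimate is immediate: one bounds $\bigl|\int_{-1}^1(1-\xi^2)^{\e}h(\xi)e^{i\xi t}\,d\xi\bigr|\le \bigl(\sup_{(-1,1)}|h|\bigr)\int_{-1}^1(1-\xi^2)^{\e}\,d\xi$, and the last integral equals $B(\tfrac12,1+\e)<\infty$ precisely because $\e>-1$; since $1+dt\le 2$ here, $(1+dt)^{1+\e}\le 2^{1+\e}$, so the claimed bound holds with a constant depending only on $\e$. (We may assume $\sup_{\mathcal R_d}|h|<\infty$, the statement being vacuous otherwise, and then $\sup_{(-1,1)}|h|\le\sup_{\mathcal R_d}|h|$.)

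For $dt\ge 1$ I would deform the contour to exploit the decay of $e^{i\xi t}$ for $\Im\xi>0$. Fix the branch of $(1-\xi^2)^{\e}=(1-\xi)^{\e}(1+\xi)^{\e}$ that is positive on $(-1,1)$ and analytic on $\mathds{C}\setminus\bigl((-\infty,-1]\cup[1,\infty)\bigr)$; then $F(\xi):=(1-\xi^2)^{\e}h(\xi)e^{i\xi t}$ is analytic in $\{|\Re\xi|<1\}$ away from those two half-lines. For $0<\epsilon'<1$, Cauchy's theorem applied to the rectangle $[-1+\epsilon',\,1-\epsilon']\times[0,d]$ (which, together with a neighbourhood, lies in $\mathcal R$ and avoids $\pm1$) replaces the segment $[-1+\epsilon',1-\epsilon']$ by the broken path made of the up-going leg $A_-$ from $-1+\epsilon'$ to $-1+\epsilon'+id$, the horizontal leg $A_0$ at height $d$, and the down-going leg $A_+$ from $1-\epsilon'+id$ to $1-\epsilon'$. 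Thus $\int_{-1}^1 F=\int_{[-1,-1+\epsilon']}F+\int_{A_-}F+\int_{A_0}F+\int_{A_+}F+\int_{[1-\epsilon',1]}F$, where the two short real segments contribute $O\bigl((\epsilon')^{1+\e}\sup_{\mathcal R_d}|h|\bigr)$ and vanish as $\epsilon'\to0$ (again because $\e>-1$); all three legs stay in $\mathcal R_d$, so on them $|h|\le\sup_{\mathcal R_d}|h|$.

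The three legs are then estimated in the limit $\epsilon'\to0$. On $A_0$ one has $|e^{i\xi t}|=e^{-dt}$ and $\int_{A_0}|(1-\xi^2)^{\e}|\,|d\xi|\le C_{d,\e}$ (using $|1-\xi^2|\ge 1-(\Re\xi)^2$ when $\e<0$ and $|1-\xi^2|\le(1+d)^2$ when $\e\ge0$), so this leg gives $C_{d,\e}e^{-dt}\sup_{\mathcal R_d}|h|$. On $A_+$, writing $\xi=(1-\epsilon')+i\sigma$, one has $|e^{i\xi t}|=e^{-\sigma t}$ and $|(1-\xi^2)^{\e}|\le C_{d,\e}((\epsilon')^2+\sigma^2)^{\e/2}$, so in the limit this leg is bounded by $C_{d,\e}\bigl(\sup_{\mathcal R_d}|h|\bigr)\int_0^d\sigma^{\e}e^{-\sigma t}\,d\sigma\le C_{d,\e}\Gamma(1+\e)\,t^{-1-\e}\sup_{\mathcal R_d}|h|$, and $A_-$ is handled identically. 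Summing, $\bigl|\int_{-1}^1F\bigr|\le C_{d,\e}\bigl(t^{-1-\e}+e^{-dt}\bigr)\sup_{\mathcal R_d}|h|$; since $dt\ge1$ we have $1+dt\le 2dt$, hence $t^{-1-\e}\le(2d)^{1+\e}(1+dt)^{-1-\e}$, while $e^{-dt}\le C_{d,\e}(1+dt)^{-1-\e}$ because $e^{-dt}(1+dt)^{1+\e}$ is bounded on $[1,\infty)$. Together with the short-time range this yields \eqref{eq:fint}.

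I expect the genuinely delicate point to be the legitimacy of the deformation: $h$ is assumed analytic only on the open strip $\{|\Re\xi|<1\}$, so the vertical legs must be kept at $\Re\xi=\pm(1-\epsilon')$ and pushed to $\pm1$ only in the limit $\epsilon'\to0$, which is permissible because $\e>-1$ makes both the leftover real-axis integrals and the endpoint $\sigma$-integrals converge. The other point requiring care is tracking the $d$-dependence so as to land on the precise power $(1+dt)^{-1-\e}$ rather than $(1+t)^{-1-\e}$; the remaining estimates are elementary, resting only on $\int_0^\infty\sigma^{\e}e^{-\sigma t}\,d\sigma=\Gamma(1+\e)t^{-1-\e}$.
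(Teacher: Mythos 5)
Your proof is correct and follows essentially the same route as the paper: a contour deformation into the upper half-plane to height $d$, using $|e^{i\xi t}|=e^{-\Im(\xi)t}$ and the integrability near $\pm1$ guaranteed by $\e>-1$, with $\int_0^\infty\sigma^\e e^{-\sigma t}d\sigma=\Gamma(1+\e)t^{-1-\e}$ supplying the decay rate. The only cosmetic differences are that the paper deforms directly onto the tent-shaped path $\Xi_d(\chi)=\chi+id(1-|\chi|)$ joining $\pm1$ (so no $\epsilon'$-limit or separate small-$t$ case is needed, the incomplete Gamma integral $\int_0^{td}s^\e e^{-s}ds$ handling both regimes at once), whereas you use a rectangle with a limiting argument at the corners and an explicit split at $dt=1$.
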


\begin{proof}
Consider the path $\Xi_d(\chi)=\chi+ id(1-|\chi|)$, with
$\chi\in[-1,1]$. We get, for $t>0$, %
\begin{equation}\label{eq:stima}
\begin{aligned}
\left |\int_{-1}^1 (1-\xi^2)^{\e}h(\xi)e^{i \xi t}d\xi\right|
=&\left|\int_{\Xi_d}(1-\xi^2)^\e h(\xi) e^{i\xi t}d\xi\right|\leq\\
&2(1+d^2)^{(1+\e)/2}\sup_{\xi\in \Xi_d}\left|h(\xi)\right|\int_{0}^{1}
(1-\chi)^{\e}e^{-d(1-\chi)t}d\chi\leq\\
&2\left(\frac{\sqrt{1+d^2}}{d}\right)^{1+\e}\sup_{\xi\in \Xi_d}\left|h(\xi)\right|
{t^{-(1+\epsilon)}}\int_0^{td}s^{\epsilon}e^{-s}ds\ .
\end{aligned}
\end{equation}
This, together with the trivial case $t=0$, completes the proof with
\[
K_{d,\e}=K\left(1+d^2\right)^{\frac{1+\e}2}\Gamma(1+\e)\ ,
\]
with $\Gamma$ denoting the gamma function.
\end{proof}

We list here a couple of easy consequences. As discussed at the end of Subsection \ref{subsec:galpha}, in the resonant
case, the poles at $\Omega_\pm(\alpha)$ are close to the set $\mathcal I$. In that situation we will use the following
Corollary.

\begin{coro}\label{cor:stimasing}
 Let $h(\xi)$ be analytic in $\mathcal R$. Then for $\zeta\in \mathcal R$ with $0<\Im(\zeta)\leq 1$ and $\e>-1$ we have
\begin{equation}\label{eq:fintsing}
\left |\int_{-1}^1 \frac{(1-\xi^2)^{\e}h(\xi)}{\xi-\zeta}e^{i
\xi t}d\xi-2\pi i(1-\zeta^2)^\e h(\zeta)e^{i\zeta t}\right|\leq
\frac{K'_{\e}\sup_{\xi\in
\mathcal R_{2/(1-|\Re(\zeta)|)}}|h(\xi)|}{(1-|\Re{\zeta}|)^{2+\epsilon}(1+t)^{1+\e}}\ .
\end{equation}
\end{coro}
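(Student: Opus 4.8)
The plan is to push the line integral defining the left--hand side of \eqref{eq:fintsing} off the real segment $[-1,1]$ and up into the upper half--plane, past the single simple pole at $\xi=\zeta$, so that the residue reproduces exactly the subtracted term and what remains is an integral over a contour on which $e^{i\xi t}$ is small. Concretely, set $\delta:=1-|\Re(\zeta)|\in(0,1]$ and $D:=2/\delta\ge 2$, and use the ``tent'' path $\Xi_D(\chi)=\chi+iD(1-|\chi|)$, $\chi\in[-1,1]$, already introduced in the proof of Lemma~\ref{lem:stima}. The region $R$ enclosed between $[-1,1]$ and $\Xi_D$ is $\{\chi+is:\ |\chi|<1,\ 0<s<D(1-|\chi|)\}\subset\mathcal R$, and since the apex of $\Xi_D$ above $\Re(\zeta)$ sits at height $D(1-|\Re(\zeta)|)=2>1\ge\Im(\zeta)>0$, the pole $\zeta$ lies in the interior of $R$. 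The integrand $(1-\xi^2)^\e h(\xi)e^{i\xi t}/(\xi-\zeta)$ is meromorphic on $R$ with only this simple pole, and its endpoint behaviour at $\pm1$ is integrable for $\e>-1$ along both $[-1,1]$ and $\Xi_D$; a standard limiting argument (excising $\e$-neighbourhoods of $\pm1$ and of $\zeta$) together with the residue theorem then gives
\[
\int_{-1}^1\frac{(1-\xi^2)^\e h(\xi)}{\xi-\zeta}e^{i\xi t}\,d\xi
-2\pi i\,(1-\zeta^2)^\e h(\zeta)e^{i\zeta t}
=\int_{\Xi_D}\frac{(1-\xi^2)^\e h(\xi)}{\xi-\zeta}e^{i\xi t}\,d\xi ,
\]
because the residue of the integrand at the simple pole $\zeta$ is precisely $(1-\zeta^2)^\e h(\zeta)e^{i\zeta t}$. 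So it remains to bound the integral over $\Xi_D$.

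For that I would first record that $\mathrm{dist}(\zeta,\Xi_D)\ge\delta/2$: if $|\chi-\Re(\zeta)|\ge\delta/2$ this is immediate, while if $|\chi-\Re(\zeta)|<\delta/2$ then $1-|\chi|\ge\delta/2$, so $\Xi_D(\chi)$ has imaginary part $D(1-|\chi|)\ge1\ge\Im(\zeta)$, and a short case check on the sign of $\Re(\zeta)$ shows the point of $\Xi_D$ at height $\Im(\zeta)$ lies at horizontal distance $\ge\delta/2$ from $\zeta$. Hence $|\xi-\zeta|^{-1}\le 2/\delta$ on $\Xi_D$, and
\[
\left|\int_{\Xi_D}\frac{(1-\xi^2)^\e h(\xi)}{\xi-\zeta}e^{i\xi t}\,d\xi\right|
\le\frac{2}{\delta}\Big(\sup_{\xi\in\mathcal R_D}|h(\xi)|\Big)
\int_{\Xi_D}|1-\xi^2|^\e\,e^{-\Im(\xi)\,t}\,|d\xi| .
\]
The last integral is exactly the quantity estimated in \eqref{eq:stima} of the proof of Lemma~\ref{lem:stima} with $h\equiv1$ and $d=D$, so it is bounded by $K\,(1+D^2)^{(1+\e)/2}\,\Gamma(1+\e)\,(1+Dt)^{-(1+\e)}$.

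Putting the pieces together: since $D=2/\delta\ge2$ one has $(1+D^2)^{(1+\e)/2}\le 2^{(1+\e)/2}D^{1+\e}$, so $\tfrac2\delta(1+D^2)^{(1+\e)/2}\le 2^{\,1+(1+\e)/2}(2/\delta)^{2+\e}=C_\e\,(1-|\Re(\zeta)|)^{-(2+\e)}$; and $D\ge1$ with $1+\e>0$ gives $(1+Dt)^{-(1+\e)}\le(1+t)^{-(1+\e)}$. Since moreover $\mathcal R_D=\mathcal R_{2/(1-|\Re(\zeta)|)}$ by the choice of $D$, this is exactly \eqref{eq:fintsing} with $K'_\e$ depending only on $\e$ (through $\Gamma(1+\e)$ and the numerical constants). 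The computations are routine; the only point that genuinely needs care — and the reason the bound carries $(1-|\Re(\zeta)|)^{-(2+\e)}$ rather than merely $(1-|\Re(\zeta)|)^{-1}$ — is that the deformed contour must be pushed to height $\asymp 1/(1-|\Re(\zeta)|)$ in order to clear the pole uniformly as $\Re(\zeta)\to\pm1$: the distance--to--the--pole estimate supplies one power of $(1-|\Re(\zeta)|)^{-1}$, and the factor $(1+D^2)^{(1+\e)/2}$ coming (via Lemma~\ref{lem:stima}) from the geometry of that tall contour supplies the remaining $(1-|\Re(\zeta)|)^{-(1+\e)}$.
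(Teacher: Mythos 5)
Your proof is correct and follows essentially the same route as the paper: deform $[-1,1]$ to the tent path $\Xi_d$ with $d=2/(1-|\Re(\zeta)|)$, extract the residue at $\zeta$, and bound the remaining integral by combining the distance from $\zeta$ to $\Xi_d$ with the estimate of Lemma~\ref{lem:stima}. The only (immaterial) quibble is your claim $\mathrm{dist}(\zeta,\Xi_D)\ge\delta/2$: the exact distance to a leg of the tent is $(2-\Im(\zeta))/\sqrt{1+D^2}$, which can be as small as $\delta/(2\sqrt2)$, but this only changes the constant $K'_\e$.
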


\begin{proof}
Take $d=2(1-|\Re(\zeta)|)^{-1}>1$ so that $\zeta$ is in the domain bounded by the segment $[-1,1]$ and the path $\Xi_d$. Moreover we have 
\[
\inf_{\xi\in \Xi_d} |\xi-\zeta|=\frac{d(1-|\Re(\zeta)|)-\Im(\zeta)}{\sqrt{1+d^2}}\geq 
\frac{1}{\sqrt{1+d^2}}
\]
so that the thesis follows with
\[
K'_{\e}=K\Gamma(1+\e)\ .
\]
\end{proof}

We turn now to a Lemma that will be used to evaluate the two time correlation function in both the Hamiltonian and
stochastic case.

\begin{lemma}\label{lem:duhamel}
Let $0<a<b$, let $\dam$, $\Xi$ be such that $\Xi\in(a,b)$, while $0<\dam\leq \min\{\Xi-a,b-\Xi\}/2$ and let $g(\omega)$ be a function real for
$\omega\in [a,b]$ and analytic in $\mathcal R_{a,b}= \{\xi: a\le
\Re(\xi)\le b, |\Im(\xi)|\le \max\{1,\dam\}\}$. Then we have,
\begin{equation}\label{eq:duha}
\begin{aligned}
 \int_{a}^{b}d\omega\, &\frac{g(\omega)}{\sqrt{(\omega-a)(b-\omega)}} 
 \int_0^t\int_0^s\cos(\omega(t-s-\tau+\sigma))
 e^{-\dam \tau}\sin(\Xi\tau)e^{-\dam\sigma}\sin(\Xi\sigma)d\sigma d\tau=\\
&\frac{\pi}{4\dam}\frac{g(\Xi)}{\sqrt{(\Xi-a)(b-\Xi)}}  
\left(e^{-\dam|t-s|}-e^{-\dam(t+s)}\right)\cos(\Xi(t-s))+
\frac{K(t,s)}{\left((\Xi-a)(b-\Xi)\right)^{2+\e}}\ ,
\end{aligned}
\end{equation}
where $K(t,s)$ is bounded uniformly in $\beta$, $s$ and $t$.
\end{lemma}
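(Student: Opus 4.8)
\emph{Proof idea.} The plan is to make the double Duhamel integral explicit in the time variables and then reduce the resulting $\omega$-integrals to the estimates of Lemma~\ref{lem:stima} and Corollary~\ref{cor:stimasing}. First I would perform the linear change of variables $\omega=\bar\mu+\delta_\mu\kappa$, with $\bar\mu=(a+b)/2$ and $\delta_\mu=(b-a)/2$, as in the passage leading to \eqref{eq:Bessel}, so that $\sqrt{(\omega-a)(b-\omega)}=\delta_\mu\sqrt{1-\kappa^2}$ and the weight becomes the $\e=-1/2$ weight $(1-\kappa^2)^{-1/2}$ to which those two results apply. The hypotheses $\Xi\in(a,b)$ and $0<\dam\le\min\{\Xi-a,b-\Xi\}/2$ guarantee that the images of $\Xi$ and of $\Xi\pm i\dam$ lie in the domain $\mathcal R$ of Lemma~\ref{lem:stima}, with the imaginary part of the latter at most $1$ and at distance of order $\min\{\Xi-a,b-\Xi\}$ from $\pm1$.

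Next, writing $\cos(\omega(t-s-\tau+\sigma))=\Re\left(e^{i\omega(t-s)}e^{-i\omega\tau}e^{i\omega\sigma}\right)$, the inner double integral factorises as a product of two elementary one-dimensional integrals of the form $\int_0^t e^{(\mp i\omega-\dam)\tau}\sin(\Xi\tau)\,d\tau$, which I would evaluate exactly. Expanding each $\sin$ into exponentials and retaining the boundary contributions at the endpoints $\tau=0,t$ and $\sigma=0,s$, the left side of \eqref{eq:duha} becomes a finite sum of terms of the shape
\[
c\,e^{-\dam(pt+qs)}e^{\pm i\Xi(pt+qs)}\int_a^b\frac{g(\omega)\,e^{i\omega r}}{\sqrt{(\omega-a)(b-\omega)}\,\prod(\omega\mp\Xi\mp i\dam)}\,d\omega\ ,
\]
where $p,q\in\{0,1\}$, $r$ is a signed sum of $t$ and $s$, $c$ is an absolute constant (a product of the two factors $1/(2i)$ coming from the $\sin$-expansions), and the product in the denominator runs over a subset of $\{(\omega-\Xi-i\dam),(\omega-\Xi+i\dam),(\omega+\Xi-i\dam),(\omega+\Xi+i\dam)\}$ of total degree $1$ or $2$. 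Since $e^{-\dam(pt+qs)}\le 1$, all time-dependent prefactors are uniformly bounded; this is what will make every error constant uniform in $\dam$, $s$ and $t$.

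I would then split the sum according to the denominator, calling a term \emph{resonant} when its denominator contains both factors $(\omega-\Xi\mp i\dam)$, i.e.\ the full $(\omega-\Xi)^2+\dam^2$; these are exactly the products of the pieces of the $\tau$- and $\sigma$-integrals whose denominators do not contain $(\omega+\Xi\mp i\dam)$. For every non-resonant term the denominator has at most one factor vanishing near $\Xi$ (the factors $(\omega+\Xi\mp i\dam)$ vanish at distance $\ge a$ from $[a,b]$), so after at most one partial fraction with bounded coefficient the integrand, in the $\kappa$-variable, has the form $(1-\kappa^2)^{-1/2}h(\kappa)e^{i\kappa\delta_\mu r}$ with $h$ analytic in $\mathcal R_{a,b}$ and $\sup|h|$ uniformly bounded; Lemma~\ref{lem:stima}, or Corollary~\ref{cor:stimasing} when the sign of $r$ forces an upward deformation crossing the single simple pole $\Xi+i\dam$, bounds such a term by $K\,((\Xi-a)(b-\Xi))^{-2-\e}$ uniformly in $\dam,s,t$, using that $(\Xi-a)(b-\Xi)$ is bounded above and that the relevant residue carries no factor $\dam^{-1}$. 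For the four resonant terms, partial fractions turn $1/((\omega-\Xi)^2+\dam^2)$ into $\tfrac1{2i\dam}\!\left(\tfrac1{\omega-\Xi-i\dam}-\tfrac1{\omega-\Xi+i\dam}\right)$, producing the $\dam^{-1}$ of the main term; applying Corollary~\ref{cor:stimasing} to each simple-pole integral, deforming according to the sign of $r$ into the half-plane enclosing exactly one of $\Xi\pm i\dam$, extracts a residue equal, up to the usual $((\Xi-a)(b-\Xi))^{-2-\e}$ error, to $\tfrac{g(\Xi)}{\sqrt{(\Xi-a)(b-\Xi)}}e^{\mp\dam r}e^{i\Xi r}+O(\dam)$. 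Summing the four resonant contributions, the boundary exponentials arrange themselves so that the piece with $r=\pm(t-s)$ yields $e^{-\dam|t-s|}$ while the other three each carry an extra $e^{-\dam(t+s)}$ and add up to $-e^{-\dam(t+s)}$; taking the real part replaces $e^{i\Xi(t-s)}$ by $\cos(\Xi(t-s))$, and the accumulated numerical factors $1/(2i)$, $1/(2i)$, $1/(2i\dam)$ and $2\pi i$ combine, after the sign check, to the coefficient $\pi/(4\dam)$.

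The one genuinely delicate step is this last bookkeeping: one has to verify that the sixteen terms of the factorised inner integral produce, after the residue calculus, exactly the combination $(e^{-\dam|t-s|}-e^{-\dam(t+s)})\cos(\Xi(t-s))$ with no surviving unbounded remainder, and that every other contribution (the non-resonant residues and deformed-contour integrals) is uniformly bounded in $\dam$, $s$ and $t$. The uniformity rests on the two structural facts already used: the boundary factors $e^{-\dam(pt+qs)}$ never exceed $1$, and the hypothesis $\dam\le\min\{\Xi-a,b-\Xi\}/2$ keeps the poles $\Xi\pm i\dam$ inside the strip $\mathcal R_{a,b}$ on which $g$ is analytic and at distance of order $\min\{\Xi-a,b-\Xi\}$ from the non-enclosed part of the deformation contour.
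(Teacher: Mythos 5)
Your proposal is correct and follows essentially the same route as the paper's proof: expand the sines and cosine into exponentials, evaluate the $\tau$- and $\sigma$-integrals exactly, separate the resonant terms carrying the denominator $(\omega-\Xi)^2+\dam^2$ from the uniformly bounded non-resonant ones, and extract the $\pi/(4\dam)$ main term by contour deformation via Lemma~\ref{lem:stima} and Corollary~\ref{cor:stimasing}, with the deformation direction dictated by the sign of the coefficient of $i\omega$ in each exponent. The bookkeeping of the four boundary exponentials producing $e^{-\dam|t-s|}-e^{-\dam(t+s)}$ matches the paper's computation exactly.
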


\begin{proof} Observe that
\begin{equation}\label{eq:contr1}
\begin{aligned}
\int_0^td\tau\int_0^sd\sigma&
\cos(\omega(t-s-\tau+\sigma))
\sin(\Xi\tau)e^{-\dam \tau}\sin(\Xi\sigma)e^{-\dam \sigma}=\\
&-\frac 18\sum_{g_1,g_2,g_3=\pm}g_2g_3\int_0^td\tau\int_0^sd\sigma
g_2g_3e^{g_1i\omega(t-s-\tau+\sigma)}
e^{(g_2i\Xi-\dam)\tau}
e^{(g_3i\Xi-\dam)\sigma}=\\
&-\frac 18\sum_{g_1,g_2,g_3=\pm}g_2g_3e^{g_1i\omega(t-s)}
\int_0^t e^{[i(-g_1\omega+g_2\Xi)-\dam]\tau}d\tau
\int_0^s e^{[i(g_1\omega+g_3\Xi)-\dam]\sigma}d\sigma=\\
&-\frac 18\sum_{g_1,g_2,g_3=\pm}g_2g_3e^{g_1i\omega(t-s)}
\frac{1-e^{[i(-g_1\omega+g_2\Xi)-\dam]t}}
{i(-g_1\omega+g_2\Xi)-\dam}\;
\frac{1-e^{[i(g_1\omega+g_3\Xi)-\dam]s}}
{i(g_1\omega+g_3\Xi)-\dam}=\\
&-\frac 18\sum_{g_1,g_2,g_3=\pm}g_2g_3
\frac{e^{ig_1\omega t}-e^{[ig_2\Xi-\dam]t}}
{\omega-g_1g_2\Xi-ig_1\dam}\;
\frac{e^{-ig_1\omega s}-e^{[ig_3\Xi-\dam]s}}
{\omega+g_1g_3\Xi+ig_1\dam}\ .
\end{aligned}
\end{equation}
Notice that the terms for  $g_2=-g_1$ and $g_3=g_1$ are bounded
uniformly in $\omega\in[a,b]$ and $\dam$,$s$ and $t$. If $g_2=g_3$ the
corresponding term have one pole
close to $[-1,1]$ but the residue is 
bounded
by a constant independent of $\dam$. Using Corollary \ref{cor:stimasing} we obtain that also contribution of this
term to \eqref{eq:duha} can be bounded uniformly in $\dam$, $s$ and $t$. 

We are thus left with the contribution for $g_2=g_1=-g_3$, 
\begin{equation}\label{eq:contazzo}
\begin{aligned}
&\frac 18\sum_{g_1=\pm}\int_{a}^{b}d\omega\, \frac{g(\omega)}{\sqrt{(\omega-a)(b-\omega)}}
\frac{\left(e^{ig_1\omega t}-e^{[ig_1\Xi-\dam]t}\right)\left(e^{-ig_1\omega 
s}-e^{[-ig_1\Xi-\dam]s}\right)}{(\omega-\Xi)^2+\beta^2}=\\
&\frac 18\int_{a}^{b}d\omega\, \frac{g(\omega)}{((\omega-\Xi)^2+\dam^2) \sqrt{(\omega-a)(b-\omega)}}\\
&\qquad\qquad\qquad\qquad\sum_{g_1=\pm}\left(e^{ig_1\omega(t-s)}-e^{(ig_1\Xi-\dam)t-ig_1\omega s}-e^{ig_1\omega t - 
(ig_1\Xi+\dam)s}+
e^{ig_1\Xi(t-s)-\dam(t+s)}\right)=\\
&\frac {\pi}{4\dam}\frac{g(\Xi)}{\sqrt{(\Xi-a)(b-\Xi)}}e^{-\dam|t-s|}\cos(\Xi(t-s))+\\
&\qquad\qquad
-\frac
\pi{4\dam}\frac{g(\Xi)}{\sqrt{(\Xi-a)(b-\Xi)}}
e^{-\dam(t+s)}\cos(\Xi(t-s))+\frac{K(t,s)}{\left((\Xi-a)(b-\Xi)\right)^{2+\e}}\ , 
\end{aligned}
\end{equation}
where $K(t,s)$ is uniformly bounded in $\dam$, $t$, and $s$ and vanishes when $t$ or $s$ go to infinity. In 
\eqref{eq:contazzo} we have expanded the product to be able to apply Lemma \ref{lem:stima} and Corollary 
\ref{cor:stimasing}. Indeed, in each term, whether to move the $\omega$ integration path for $\Im(\omega)$ positive or 
negative depends on the sign of the factor multiplying $i\omega$ in
the exponent.

\end{proof}

From the proof of Lemma \ref{lem:duhamel} we immediately get.

\begin{coro}\label{cor:duhamel}
Let $0<a<b$, let $\dam$, $\Xi$ be such that $a<\Xi<b$ and $0<\dam\leq \min\{\Xi-a,b-\Xi\}/2$ and let $g(\omega)$ be a function real
for $\omega\in [a,b]$ and analytic in $\mathcal R_{a,b}= \{\xi: a\le\Re(\xi)\le b, |\Im(\xi)|\le
\max\{1,\dam\}\}$. Then we have, 
\begin{equation}\label{eq:duhas}
\begin{aligned}
 \int_{a}^{b}d\omega\,& \frac{g(\omega)}{\sqrt{(\omega-a)(b-\omega)}} 
 \int_0^t\int_0^s\cos(\omega(t-s-\tau+\sigma))
 e^{-\dam \tau}\cos(\Xi\tau)e^{-\dam\sigma}\cos(\Xi\sigma)d\sigma d\tau=\\
&\frac{\pi}{4\dam}\frac{g(\Xi)}{\sqrt{(\Xi-a)(b-\Xi)}}  
\left(e^{-\dam|t-s|}-e^{-\dam(t+s)}\right)\cos(\Xi(t-s))+
\frac{K(t,s)}{\left((\Xi-a)(b-\Xi)\right)^{2+\e}}\ ,
\end{aligned}
\end{equation}
where $K(t,s)$ is bounded uniformly in $\beta$, $s$ and $t$.
\end{coro}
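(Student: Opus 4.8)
The plan is to repeat, essentially verbatim, the computation carried out in the proof of Lemma~\ref{lem:duhamel}, making the single replacement of $\sin(\Xi\tau)\sin(\Xi\sigma)$ by $\cos(\Xi\tau)\cos(\Xi\sigma)$ in \eqref{eq:contr1}. First I would expand both trigonometric factors into exponentials and observe that $\cos(\Xi\tau)\cos(\Xi\sigma)=\tfrac{1}{4}\sum_{g_2,g_3=\pm}e^{g_2 i\Xi\tau}e^{g_3 i\Xi\sigma}$ differs from $\sin(\Xi\tau)\sin(\Xi\sigma)=-\tfrac{1}{4}\sum_{g_2,g_3=\pm}g_2g_3\,e^{g_2 i\Xi\tau}e^{g_3 i\Xi\sigma}$ only by dropping the combinatorial weight $-g_2g_3$. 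Hence every line of \eqref{eq:contr1} is reproduced with the same $\tau$- and $\sigma$-integrations and the same partial-fraction denominators $\omega-g_1g_2\Xi-ig_1\dam$ and $\omega+g_1g_3\Xi+ig_1\dam$; only the scalar weight in front of each of the eight terms changes from $-g_2g_3$ to $+1$.

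Next I would rerun the case analysis over $(g_1,g_2,g_3)\in\{+,-\}^3$ exactly as in Lemma~\ref{lem:duhamel}. The two terms with $g_2=-g_1$, $g_3=g_1$ have both denominators bounded away from zero on $[a,b]$ uniformly in $\dam$, hence contribute $O(1)$. For the four terms with $g_2=g_3$, precisely one denominator can become small on $[a,b]$, but its residue is bounded independently of $\dam$, so Corollary~\ref{cor:stimasing} bounds their contribution to the $\omega$-integral uniformly in $\dam$, $s$, $t$. None of these estimates uses the sign of the weight, so they carry over unchanged.

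The leading behavior comes solely from the two terms with $g_2=g_1=-g_3$: after summing over $g_1=\pm$ they produce the denominator $(\omega-\Xi)^2+\dam^2$, and, applying Lemma~\ref{lem:stima} and Corollary~\ref{cor:stimasing} to the resulting $\omega$-integral as in \eqref{eq:contazzo}, they give the main term $\tfrac{\pi}{4\dam}\,g(\Xi)/\sqrt{(\Xi-a)(b-\Xi)}\,(e^{-\dam|t-s|}-e^{-\dam(t+s)})\cos(\Xi(t-s))$ plus the error $K(t,s)/((\Xi-a)(b-\Xi))^{2+\e}$. The only point requiring care --- and it is the ``main obstacle'', though a mild one --- is checking that the overall constant is unchanged: for these terms $-g_2g_3=-(+1)(-1)=+1$ in the sine computation, and the weight is likewise $+1$ in the cosine computation, so the coefficient $\tfrac{1}{8}$ in \eqref{eq:contazzo} is identical and \eqref{eq:duhas} follows with the same constants as \eqref{eq:duha}.
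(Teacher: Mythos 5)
Your proposal is correct and is precisely what the paper does: the paper's "proof" of Corollary~\ref{cor:duhamel} is the single remark that it follows immediately from the proof of Lemma~\ref{lem:duhamel}, and your verification that the weight $-g_2g_3$ becomes $+1$, that the case analysis and the boundedness estimates are sign-independent, and that the leading terms $g_2=g_1=-g_3$ carry the same coefficient $\tfrac18$ in both computations is exactly the check being left implicit.
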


\section{Better estimates for the long time behavior of \matht{$C^{\rm t}_\alpha(s,t)$}}\label{app:stime}

In this appendix we extend the analysis in Subsection \ref{subsec:asym} to obtain better estimates
for the long time behavior of $C^{\rm t}_\alpha(s,t)$. We will mostly use it to discuss the limit
of $C_\alpha^{\rm t}(\tau,\tau+t)$ when $\tau\to\infty$.  As a first step we rewrite
\eqref{eq:expCt}. Defining
\[
\begin{aligned}
	\matC(t,\omega):=&\int_0^t
	d_{\alpha}(\tau)\cos\omega(t-\tau)d\tau=\frac{1}{2\pi
		i}\int_{\xi-i\infty}^{\xi+i\infty}\frac{\lambda e^{\lambda
			t}
          d\lambda}{(\lambda^2+\omega^2)D_{\alpha}(\lambda)}\ ,\\
	\matS(t,\omega):=&\int_0^t
	d_{\alpha}(\tau)\sin\omega(t-\tau)d\tau=\frac{1}{2\pi
		i}\int_{\xi-i\infty}^{\xi+i\infty}\frac{\omega e^{\lambda
			t}
          d\lambda}{(\lambda^2+\omega^2)D_{\alpha}(\lambda)}\ ,
\end{aligned}
\]
after some algebra, we can rewrite \eqref{eq:expCt} as
\begin{equation}\label{eq:CS}
C^{\rm t}_\alpha(s,t)=\frac{T_B}{2\pi}\int_{\mu_-}^{\mu_+}\frac{d\omega}{\omega^2}\frac{d\theta}{d\omega}
\bigl(\matC(s,\omega)\matC(t,\omega)+
\matS(s,\omega)\matS(t,\omega)\bigr)\ .
\end{equation}
We thus need to understand the behavior of  $\matC(t,\omega)$ and $\matS(t,\omega)$ for $t$ large and $\omega$ close 
to $\mu_\pm$.                          

Reasoning as in Subsection \ref{subsec:asym} we get
$$
\matC(t,\omega)=
\Re{\left[\matC_1(t,\omega)+\matC_2(t,\omega)+
\matC_3(t,\omega)\right]}+\matC_4(t,\omega) \ ,
$$
with
\begin{equation}\label{eq:matC}
  \begin{split}
\matC_1(t,\omega)&:=\frac{\Omega_+(\alpha)R^{\rm 
    t}(\alpha)}{\omega^2-\Omega_+^2(\alpha)}e^{i\Omega_+(\alpha)t}\ ,
  \quad \matC_2(t,\omega):=\frac{\alpha^2
    \rho_+(\alpha)r^{\rm 
      t}_2(\alpha)}{\omega^2-\rho_+^2(\alpha)}e^{i\rho_+(\alpha)t}\ ,\\\quad
\matC_3(t,\omega)&:=\frac{e^{i\omega t}}{D_\alpha(i\omega+0^+)}\ ,
  \end{split}
\end{equation}
where $R^{\rm t}(\alpha)$ and $r^{\rm t}_2(\alpha)$ are discussed in
Subsection \ref{subsec:dalpha}, while
\begin{equation}\label{eq:def_matK}
\begin{aligned}
\matC_4(t,\omega)=&\int_{\Xi^+_d\cup\Xi^-_d}\frac{\xi}{\xi^2-\omega^2}
\frac{\alpha\gamma(\xi^2-\Omega^2)\sqrt{(\mu_+^2-\xi^2)(\x^2-\mu_-^2)}}
{(\xi^2-\bar\Omega^2)^2(\mu_+^2-\xi^2)(\xi^2-\mu_-^2)+\alpha^2\gamma^2(\xi^2-\bar\Omega^2)}e^{i\omega \xi}d\x=\\
&\alpha\int_{\Xi^+_d\cup\Xi^-_d}\frac{\sqrt{(\mu_+-\xi)(\x-\mu_-)}}
{(\xi-\omega)(\xi-\rho_-(\alpha))(\rho_+(\alpha)-\xi)}\matG(\xi)e^{i\omega \xi}d\xi
\end{aligned}
\end{equation}
where $\Xi^\pm_d=\{\pm\bar\mu+\delta_\mu \chi+id(1-|\chi|)\, ;\,
\chi\in[-1,1]\}$, see \eqref{eq:Bessel}, and $\matG$ is analytic for
$\xi$ near the integration path.

In an analogous way we can write
    $$
    \matS(t,\omega)=\Im{\left[\matS_1(t,\omega)+\matS_2(t,\omega)+
\matS_3(t,\omega)\right]}+\matS_4(t,\omega) \ ,
    $$
    by defining
\begin{equation*}
  \begin{split}
    \matS_1(t,\omega)&:=\frac{\omega R^{\rm 
        t}(\alpha)}{\omega^2-\Omega_+^2(\alpha)}e^{i\Omega_+(\alpha)t}\ ,\quad
    \matS_2(t,\omega):=\frac{\alpha^2
  \omega r^{\rm 
    t}_2(\alpha)}{\omega^2-\rho_+^2(\alpha)}e^{i\rho_+(\alpha)t}\\
    \matS_3(t,\omega)&:=\frac{e^{i\omega t}}{D_\alpha(i\omega+0^+)}=\matC_3(t,\omega)\ ,
  \end{split}
\end{equation*}
and $\matS_4(t,\omega)$ is similar to $\matC_4(t,\omega)$, the
only difference being a factor $-i\omega$ in place of $\xi$ in the
first line of \eqref{eq:def_matK}.

We notice that from Lemma \ref{lem:stima} we get
\begin{equation}\label{eq:matK}
  \begin{split}
\matC_4(t,\omega)&\leq\sup_{\xi\in\Xi_d}\left|\frac{\sqrt{(\mu_+-\xi)(\x-\mu_-)}}
{(\xi-\omega)(\xi-\rho_-(\alpha))(\rho_+(\alpha)-\xi)}\right|\frac{\alpha
  K}{\sqrt{t}}\\
&\leq \frac {1} 
{\sqrt{((\omega-\mu_+)^2+\alpha^2)((\omega-\mu_-)^2+\alpha^2)}}\frac{\alpha
  K}{\sqrt{t}}\ ,
\end{split}
\end{equation}
where we have
used \eqref{eq:poletto}, and an analog inequality for $\matS_4$.  Observe that, for $t$ large, we can improve the above estimate along the lines of
\eqref{eq:optimal}, but this will worsen the behavior in $\omega$ near $\mu_\pm$, giving a potentially diverging $\omega$ integral  
in \eqref{eq:CS}. We will thus not try to optimize the $t$ behavior in \eqref{eq:matK}.

Concerning the uniformity in $\alpha$ of the estimates, by \eqref{eq:matK} and
\eqref{eq:poletto}, we have that
\begin{equation}\label{eq:app_unif_alfa}
\matC_4(t,\omega)\le K/\sqrt{t}\ ,\quad \matS_4(t,\omega)\le K/\sqrt{t}
\ ,\quad  \tfrac{\alpha^2}{\omega^2-\rho_+^2(\alpha)}\le K\ ,\qquad
\forall \omega\in[\mu_-,\mu_+]\ .
\end{equation}

\subsection{The non resonant case}

To study the long time behavior of the correlations through \eqref{eq:CS}, we must consider the $\omega$ integrals of
all possible product of a $\matC_i(\omega,t)$, or its complex conjugate, with a $\matC_j(\omega,s)$, together with all
similar products involving $\matS_i(\omega,s)$ and $\matS_j(\omega,s)$. Since to every product involving the $\matS_i$
can be associated with a analogous product involving the $\matC_i$, we will study only the possible pairings involving
the $\matC_i$. We immediately notice that, by virtue of \eqref{eq:matK}, all the terms containing $\matC_4$ multiplied
by anything else give a vanishing contribution to $\lim_{\tau\to \infty}C_\alpha^{\rm t}(\tau,\tau+t)$. The same applies
to all terms coming from a product where $\matC_3$ appears at least once, as it follows easily by Lemma~\ref{lem:stima};
the only exceptions are the terms (coming from a pairing of $\matC_3$ and its complex conjugate) which depend on
$\omega(t-s)$ only and decay in $t-s$ as a power law. The terms coming from the product of $\matC_1$ or $\matC_2$
multiplied by $\matC_1$ or $\matC_2$ or their complex conjugate give rise to an oscillating sinusoidal term, with a
frequency which is a combination of $\Omega_+(\alpha)$ and $\rho_+(\alpha)$. Notice that, due to
\eqref{eq:app_unif_alfa} and the fact that $\omega^2-\Omega_+^2(\alpha)$ is bounded from below uniformly in $\alpha$,
all terms are uniformly bounded in $\alpha$.

\subsection{The resonant case} In this case there are no substantial changes concerning the terms involving $\matC_2$
(or $\matS_2$) and $\matC_4$ (or $\matS_4$). The main difference comes the terms involving $\matC_1$ (or $\matS_1$) and
$\matC_3=\matS_3$ that present a singularity for $\omega=\Omega_+(\alpha)$ which is close to the integration path. Again
we will consider only the terms coming from the $\matC_i$, leaving the analog terms for the $\matS_i$ to the reader.

The integral over $\omega$ of the product of $\matC_4$ with anything else is bounded by
$\alpha^{-1}/\sqrt{\min{(t,s)}}$, due to \eqref{eq:matK}, \eqref{eq:app_unif_alfa} and the fact that
\begin{equation*}
  \begin{split}
\sup_{\omega_\in[\mu_-,\mu_+]}\frac{1}{\omega^2-\Omega_+^2(\alpha)}
\frac1{\sqrt{((\omega-\mu_+)^2+\alpha^2)((\omega-\mu_-)^2+\alpha^2)}}
\le K\alpha^{-2}\ ,\\
\sup_{\omega_\in[\mu_-,\mu_+]}\frac{1}{D_\alpha(i\omega+0^+)}
\frac1{\sqrt{((\omega-\mu_+)^2+\alpha^2)((\omega-\mu_-)^2+\alpha^2)}}
\le K\alpha^{-2}\ .
  \end{split}
  \end{equation*}

Whenever we consider $\matC_1$ multiplied by a term $\matC_1$, $\matC_2$, $\matC_3$ or their complex conjugates we can
move the integration path away from the singularity in $\Omega_+(\alpha)$, so that all term but two are bounded in
$\alpha$ and decay exponentially as $Ke^{-\xi(\alpha) \min{(t,s)}}$. The two exceptions are:
\begin{enumerate}
\item 	the product of $\matC_1$ with its complex conjugate where the denominator is
$(\omega^2-\Omega_+^2(\alpha))(\omega^2-\left(\Omega_+^*(\alpha)\right)^2)$: this gives rise to an oscillating term in
$(t-s)$, exponentially decreasing as $K\alpha^{-2}e^{-\xi(\alpha)(t+s)}$.

\item the product of $\matC_1$ times $\matC_3^*$ (or $\matC_1^*$ times $\matC_3$), where the denominators
$(\omega^2-\Omega_+^2(\alpha))D_\alpha(-i\omega+0^+)$ or
$(\omega^2-\left(\Omega_+^*(\alpha)\right)^2)D_\alpha(i\omega+0^+)$ appear: here we get a term bounded by
$K\alpha^{-2}e^{-\xi(\alpha)\min{(t,s)}}$.

\end{enumerate}

Concerning the product of $\matC_3$ with $\matC_2$ or its complex conjugate, we can always apply Lemma~\ref{lem:stima}
to bound the contribution as $K/\sqrt{\min{(t,s)}}$. The same applies to the product of $\matC_3$ with $\matC_3$ itself,
depending on $e^{i\omega(t+s)}$ only, while in the product of $\matC_3$ by its complex conjugate we encounter a term
depending on $e^{i\omega(t-s)}$, where we cannot move the integration path. Here we get an oscillating term, bounded by
$K\alpha^{-2} e^{-\xi(\alpha)|t-s|}$, which is exactly the dominating term in the long run, for fixed $|t-s|$ (cfr.
\eqref{eq:Ctr}).

The last remaining terms are the integrals of the products of $\matC_2(\omega,t)$ with $\matC_2(\omega,s)$. Here we get
contributions that do not decay with $t$ or $s$ and we need a finer bound to isolate the contribution depending on
$t-s$. For this reason we must consider in full the sum of $\matC_2$ and $\matS_2$ terms. We have
\begin{equation*}
  \begin{split}
    \Re\matC_2(t,\omega)\Re\matC_2(s,\omega)+\Im\matS_2(t,\omega)\Im\matS_2(s,\omega)&\\
    =\left(\frac{\alpha^2 \rho_+(\alpha)r^{\rm 
t}_2(\alpha)}{\omega^2-\rho_+^2(\alpha)}\right)^2\cos(\rho_+(\alpha)(t-s))+&
\frac{(\alpha^2r^{\rm 
t}_2(\alpha))^2}{\omega^2-\rho_+^2(\alpha)}\sin(\rho_+(\alpha)t)
\sin(\rho_+(\alpha)s) \ .
  \end{split}
\end{equation*}
In integrating such functions over $\omega$, we use the fact that, by virtue of
\eqref{eq:poletto}, 
$$
\int_{\mu_-}^{\mu_+}
\frac{d\theta}{d\omega}\frac{1}{(\omega^2-\rho_+^2(\alpha))^2} d\omega
= O(\alpha^{-3})\ ,\quad \int_{\mu_-}^{\mu_+}
\frac{d\theta}{d\omega}\frac{1}{\omega^2-\rho_+^2(\alpha)} d\omega
= O(\alpha^{-1})\ .
$$
This entails that in $C^{\rm t}_\alpha(s,t)$ there is an
oscillating term  with frequency $\rho_+(\alpha)$, depending on $t-s$ only, of size $\alpha$, plus an oscillating term with the same frequency, depending on $t+s$ only, of
size $\alpha^3$. 

%We observe that, calling $D'_\alpha(\lambda)=\frac{d}{d\lambda}D_\alpha(\lambda)$, we have
% that calling
%\[
%\matG(\omega,\lambda)=\frac{1}{(\lambda^2-\omega^2)D_\alpha(\lambda)}-\frac{1}{(\lambda^2-\omega^2)D_\alpha(\omega)}-
%\frac{1}{(\lambda^2-\Omega_+(\alpha)^2)D'_\alpha(\Omega_+(\alpha))}-
%\frac{1}{(\lambda^2-\rho_+(\alpha)^2)D'_\alpha(\rho_+(\alpha))}
%\]
%then $\matG'(\lambda)$ exists and is absolutely integrable uniformly in $\omega$. It thus follow that
%\[
%\matC(\omega,t)=\frac{\cos(\omega t)}{D_\alpha(\omega)}-
%\frac{\cos(\Omega_+(\alpha)t)}{D'_\alpha(\Omega_+(\alpha))}-
%\frac{\cos(\rho_+(\alpha)t)}{D'_\alpha(\rho_+(\alpha))}+ \gotK(t)
%\]
%where $\gotK(t)=O(t^{-1})$ for $t$ large. Observe that at this level of analysis we cannot say much on the size of 
%$\gotK(t)$ 
%as a function of $\alpha$. A similar expression holds for $\matS(\omega,t)$.

\end{document}